\newtheorem{theorem}{Theorem}
\numberwithin{theorem}{section}
\newtheorem{definition}[theorem]{Definition}
\newtheorem{lemma}[theorem]{Lemma}
\newtheorem{example}[theorem]{Example}
\let\oldexample\example
\renewcommand{\example}{\oldexample\normalfont}
\newtheorem{atomicexample}{Example}
\let\oldatomicexample\atomicexample
\renewcommand{\atomicexample}{\oldatomicexample\normalfont}
\newcommand{\bigzero}{\mbox{\normalfont\Large\bfseries 0}}
\newcommand{\rvline}{\hspace*{-\arraycolsep}\vline\hspace*{-\arraycolsep}}
\title{\boldmath Understanding holographic error correction via unique algebras and atomic examples}
\author[a]{Jason Pollack,}
\author[a]{Patrick Rall,}
\author[a]{Andrea Rocchetto}
\affiliation[a]{Department of Computer Science, The University of Texas at Austin, \\2317 Speedway,
Austin, TX 78712, USA}
\emailAdd{jasonpollack@gmail.com}
\emailAdd{patrickjrall@gmail.com}
\emailAdd{andrea.rocchetto@outlook.com}
\newcommand{\M}{\mathcal{M}}
\renewcommand{\H}{\mathcal{H}}
\abstract{
We introduce a fully constructive characterisation of holographic quantum error-correcting codes.
That is, given a code and an erasure error we give a recipe to explicitly compute the terms in the RT formula. 
Using this formalism, we employ quantum circuits to construct a number of examples of holographic codes. 
Our codes have nontrivial holographic properties and are simpler than existing approaches built on tensor networks. 
Finally, leveraging a connection between correctable and private systems we prove the uniqueness of the algebra satisfying complementary recovery.
The material is presented with the goal of accessibility to researchers in quantum information with no prior background in holography. 
}
\begin{document} 
\maketitle
\flushbottom

\section{Introduction}

In the last decade, ideas and tools from quantum information and computation have found an increasing number of applications in the efforts to understand the Anti-de Sitter/Conformal Field Theory (AdS/CFT) correspondence~\cite{maldacena1999large} as a holographic quantum theory of gravity. 
Notable examples include the ER=EPR~\cite{maldacena2013cool} conjecture and the proposed resolutions of: the black hole information paradox \cite{almheiri2020page, penington2020entanglement}, the firewall paradox~\cite{harlow2013quantum}, and the wormhole growth paradox in terms of the complexity=volume~\cite{susskind2016computational,aaronson2016complexity, haferkamp2021linear} and complexity=action \cite{brown2016holographic} conjectures.

Central to the connection between quantum gravity and quantum information is the Ryu-Takayanagi (RT) formula.
The RT formula conjectures that the entanglement entropy of a boundary CFT state is dual to the area of a bulk region in AdS~\cite{ryu2006holographic}.
The study of the entanglement properties of the AdS/CFT holographic duality~\cite{almheiri2013black}, spurred by the result of Ryu and Takayanagi, has led to a reformulation of the AdS/CFT correspondence in terms of quantum error-correcting codes~\cite{verlinde2013black, almheiri2015bulk, mintun2015bulk}. This  framework has helped to clarify the relationship between bulk and boundary and proved to be an effective and simple toy model of the AdS/CFT correspondence.

Based on these early results, researchers built toy models that reproduce key features of the correspondence (such as subregion duality, radial commutativity and the RT formula) using quantum error-correcting codes based on tensor networks~\cite{pastawski2015holographic, donnelly2017living}, random tensor networks~\cite{hayden2016holographic}, and approximate Bacon-Shor codes \cite{cao2020approximate}. All these models (which have been recently reviewed in~\cite{jahn2021holographic}) give an explicit bulk-boundary mapping for states and observables. Using techniques from Hamiltonian simulation, \cite{kohler2019toy} showed how the mapping can be extended to local Hamiltonians.

In parallel with the development of increasingly-advanced toy models, Harlow initiated a systematic study of holographic quantum error correction~\cite{harlow2017ryu,Akers:2021fut} (for a pedagogical introduction to these ideas see~\cite{harlow2016jerusalem, harlow2018tasi, rath2020aspects}).
Leveraging the operator algebra quantum error correction framework developed in~\cite{beny2007quantum, beny2007generalization, kribs2005unified, kribs2006operator},~\cite{harlow2017ryu} identified the conditions that make a quantum error-correcting code a good holographic code (that is, a code that reproduces the key features of the AdS/CFT correspondence). In particular,~\cite{harlow2017ryu} showed that standard quantum error-correcting codes such as stabiliser codes~\cite{gottesman1997stabilizer, gottesman2010introduction} or subsystem codes~\cite{poulin2005stabilizer, bravyi2011subsystem}, correct errors ``too well'' to give rise to good holographic codes. This statement can be made precise using the language of finite-dimensional von Neumann algebras, which we review in Section~\ref{sec:vonNeumann}. Consider the algebra of operators that can be reconstructed after the erasure of a region of the boundary: for a good holographic code this algebra is not a factor algebra.

In this paper, we build on the formalism of Harlow to derive new properties and examples of holographic quantum error-correcting codes. Our contributions further sharpen our understanding of the relationship between bulk and boundary and give even simpler examples of holographic codes reproducing key features of the AdS/CFT correspondence.
In particular:

\begin{itemize}
    \item We give new ``atomic'' examples of holographic codes. The key feature of these examples is that they are based on quantum circuits with a minimal number of qubits rather than on the large tensor networks that have appeared in the literature \cite{pastawski2015holographic, donnelly2017living, hayden2016holographic, cao2020approximate}.
    By significantly reducing the complexity of the toy models, we hope to introduce a new tool to identify what features of error correcting codes enable the emergence of holographic states.
    
    \item  We prove new properties of holographic quantum correcting codes. More specifically, we show that the code algebra is the unique von Neumann algebra satisfying complementary recovery (defined below). The proof is obtained by leveraging a connection between quantum error correction and quantum privacy~\cite{crann2016private, kribs2018quantum} which we believe is entirely novel in the context of holographic quantum error correction. The uniqueness of the algebra shows that error correcting codes which satisfy complementary recovery are ``rigid'' in the sense that they are uniquely determined by the requirements of holography.
    
    \item We give a reformulation of key results in holographic quantum error correction which is aimed at experts in quantum information. This might be a desirable feature for researchers with a quantum information background that are venturing into the field and could give people already familiar with these ideas a new angle to think about related problems. 
\end{itemize}

We give a brief presentation of key results from holographic quantum error correction in Section~\ref{sec:overview_HQEC} and a detailed overview of our contributions in Section~\ref{sec:overview_our_contributions}. 

The remainder of this paper is organised as follows.
Section~\ref{sec:holography_background} gives an informal presentation of some of the central concepts in holography for a reader with no prior background on the subject.
In Section~\ref{sec:vonNeumann}, we review some key facts about finite-dimensional von Neumann algebras.
Section~\ref{sec:complementary_recovery} and Section~\ref{sec:examples} contain the bulk of our contributions.
In Section~\ref{sec:complementary_recovery}, we give a reformulation of holographic quantum error correction and prove new properties of the code algebra, while in Section~\ref{sec:examples}, we construct several ``atomic'' examples of holographic codes using quantum circuits.
We conclude in Section~\ref{sec:discussion} with some remarks on the main differences between our work and \cite{harlow2017ryu} and a list of open questions. 

This paper has three Appendices.
In Appendix~\ref{app:privacy}, we review some key notions and results on quantum private systems.
In Appendix~\ref{app:structure_lemma}, we give a new proof of the main theorem in~\cite{harlow2017ryu}.
In Appendix~\ref{app:2x2bacon-shor}, we give a full analysis of the holographic properties of the 2x2 Bacon-Shor code.

\subsection{Overview of holographic quantum error correction}
\label{sec:overview_HQEC}

\begin{figure}
\begin{center}
\includegraphics[width=0.8\textwidth]{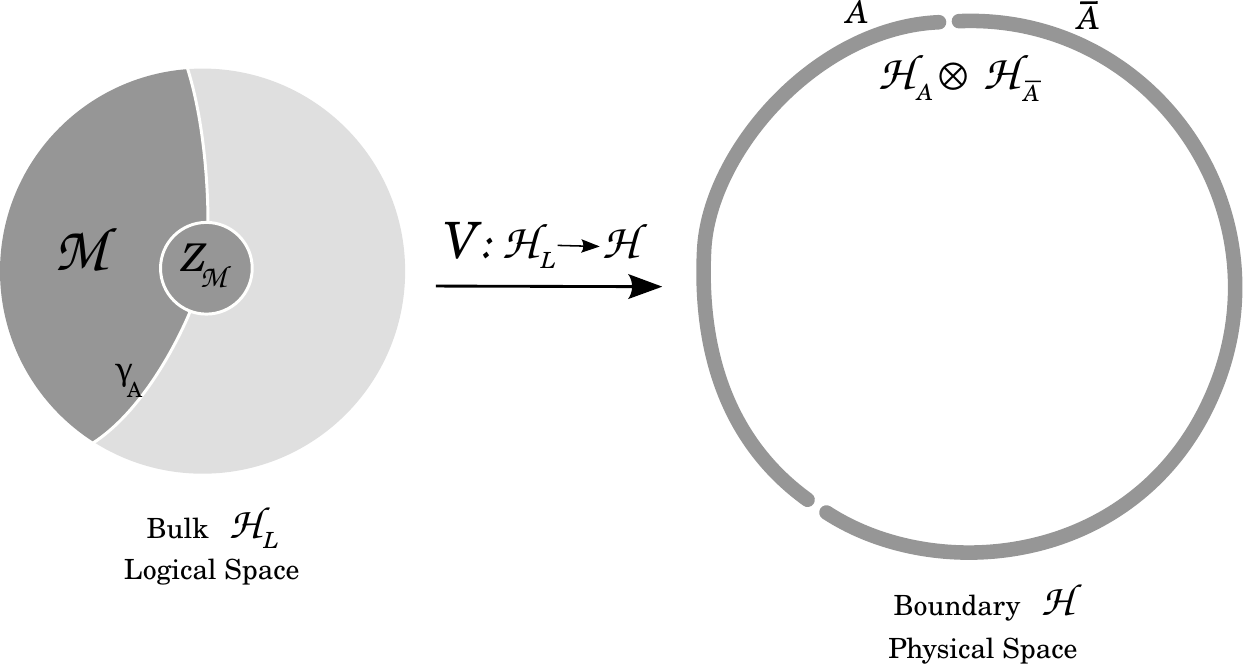}
\end{center}
\caption{\label{fig:notation} Sketch of a holographic quantum error-correcting code in $2+1$ dimensions using our notation, indicating some of the terms in Table~\ref{table:Rosetta}. }
\end{figure}

In holography, we consider a bulk asymptotically-AdS space described by a Hilbert space $\mathcal{H}_L$, surrounded by a boundary CFT with a Hilbert space $\mathcal{H}$. The correspondence manifests via a holographic dictionary $V : \mathcal{H}_L \to \mathcal{H}$, which maps the state from the bulk into the boundary. See Figure~\ref{fig:notation}. The same setup cleanly maps to a quantum error-correcting code. We let $\mathcal{H}_L$ be the logical space, and $\mathcal{H}$ be the physical space. Then $V$ is an isometry that takes the data in $\mathcal{H}_L$ and encodes it in the physical space $\mathcal{H}$.

Our goal is to concretely define what we mean for such a setup to exhibit ``holographic quantum error correction''. We will do this by taking an RT formula and writing it in the notation of quantum error correction. Then we can proceed to derive general properties of such an RT formula, and building specific examples of codes that exhibit one. Having these concrete examples can illuminate the relationship between bulk and boundary, and generally make AdS/CFT easier to understand.

We begin with a classical RT formula. Say $A$ is a subregion of the boundary space, splitting the boundary into a bipartition $A$-$\bar A$. Then, the classical RT formula states\footnote{This is the version of the formula that holds in a \emph{static} geometry, i.e.\ one that can be described by a time-independent metric. In a time-dependent geometry the extremization is more subtle, and is described by the maximin prescription \cite{Engelhardt:2014gca} for the HRT formula \cite{Hubeny:2007xt}. In particular, the geodesics are not confined to a fixed spatial slice of the bulk but instead live inside the \emph{entanglement wedge} of the boundary region; see Section \ref{sub:wedges} for further discussion. Furthermore, the minimization over geodesics should include only the geodesics homologous to A; see Footnote \ref{fn:homology} for a discussion.}
 that, in a holographic state corresponding to a (2+1)-dimensional classical bulk geometry:
 \begin{align}
    \text{entanglement across } A\text{-}\bar A \propto \min_{\gamma_{A}} \text{Area}(\gamma_{A}), 
\end{align}
where $\gamma_A$ is a geodesic in the (negatively curved, gravitating) bulk whose endpoints are the same as those of $A$ on the boundary. In $d+1$ dimensions, the geodesic is replaced by a $(d-1)$-dimensional extremal surface ending on (and homologous to) the boundary subregion; `area' denotes a codimension 2 quantity, and hence it is actually a length for a (2+1)-dimensional bulk. The RT formula connects a geometrical quantity, an area, with a quantum-mechanical quantity: the entanglement entropy. (Readers who find the RT formula unfamiliar are invited to consult Section~\ref{sec:holography_background} for a more detailed exposition of the quantum-gravitational setting where the formula arises.)

If the bulk is itself a quantum system that can be in a mixed state, then we must be more careful in defining the left-hand side of the equation: we only care about the entanglement entropy stemming from the holographic dictionary $V$, but not any entropy from the bulk degrees of freedom. Thus, we must subtract off the entropy from the bulk state. Say $\rho_L$ is a state in the bulk $\mathcal{H}_L$, and $\rho = V \rho_L V^\dagger$ is its encoded state on the boundary $\mathcal{H}$. The subregion $A$ induces a factorization\footnote{In a conformal field theory this factorization may be subtle: to ensure the theory factorizes we can introduce edge modes \cite{Donnelly:2016auv}. Throughout this paper we will follow convention and assume without comment that the boundary theory does indeed factorize, which is already necessary to define the left-hand side of the RT formula.} of the boundary into $\mathcal{H} = \mathcal{H}_{A} \otimes \mathcal{H}_{\bar A}$. We can then say $\rho_{A}$ is the reduced state obtained by taking $\rho$ and tracing out $\mathcal{H}_{\bar A}$. Now we can phrase a quantum RT formula as:
\begin{align}
\text{entropy of } \rho_A - \text{entropy of } \rho_L \text{ visible from }A  = \min_{\gamma_{A}} \text{Area}(\gamma_{A}). 
\end{align}

We can define the entropy of $\rho_A$ via the von Neumann entropy $S(\rho_A)$. The other quantities are more challenging to define. The geometry itself may be a superposition, so that the area actually corresponds to an observable $L_A$ on the bulk Hilbert space $\mathcal{H}_L$. The area contribution to the RT formula is then the expectation $\langle L_A \rangle_{\rho_L} = \text{Tr}(\rho_L L_A)$. It can be reconstructed by a bulk observer given access to either subregion. The state $\rho_L$ describes the state of the bulk, and thus also captures the superposition over geometries. We are left with:
\begin{align}
S(\rho_A) = \text{entropy of } \rho_L \text{ visible from }A   + \text{Tr}(\rho_L L_A).
\end{align}

To make the ``entropy of $\rho_L$ visible from $A$'' rigorous, we will need some tools from the quantum error correction literature. Our goal is to identify a collection of operators $\mathcal{M}$ on $\mathcal{H}_L$ that exactly capture what we can see given only access to the boundary subregion $A$. Then we can use this family of operators to define the entropy. Some language developed in the quantum error correction framework from \cite{beny2007quantum, beny2007generalization, kribs2005unified, kribs2006operator} is particularly useful for this purpose. These papers are concerned with what kinds of observables in $\mathcal{H}_L$ are affected by a general quantum error channel. Here we restrict their language to erasure errors, since we just want to erase the subregion $\bar A$.

\begin{definition} Say $V:\mathcal{H}_L \to \mathcal{H}$ is an encoding isometry, and $A$ is a subregion that induces the factorization $\mathcal{H} = \mathcal{H}_{A} \otimes \mathcal{H}_{\bar A}$. 

We say a bulk operator $O_L \in \mathcal{L}(\H_L)$ is \textbf{correctable from $A$} if there is some boundary operator $O$ with support only on $A$ that lets us access $O_L$ via $V^\dagger O V = O_L$.

If all the operators with support only on $A$ commute with $O_L$ after projection with $V^\dagger$, then the observable corresponding to $O_L$ cannot be measured from $A$. In this case we say $O_L$ is \textbf{private from $A$}. 

\end{definition}

We are looking for a collection of operators $\mathcal{M}$ that exactly capture what degrees of freedom are visible from $A$. These are all correctable from $A$. To make sure we are not missing any operators, we want the `mirror image' of this condition to be true from $\bar A$. If $\mathcal{M}'$ is the set of all operators in the bulk that commute with $\mathcal{M}$ (also known as the `commutant' or the `normalizer of $\mathcal{M}$ in $\mathcal{L}(\H_L)$'), then we want $\mathcal{M}'$ to be correctable from $\bar A$. The center $Z_\mathcal{M} := \mathcal{M}\cap\mathcal{M'}$, which contains the area operator, is correctable from both regions. We call this condition `complementary recovery'\footnote{Holography experts: note that in classical holographic states, the equivalent of this condition is that access to a boundary subregion $A$ allows the reconstruction of bulk operators in (at least) the causal wedge of $A$, while access to the complement $\bar A$ allows the reconstruction of operators in the causal wedge of $\bar A$. See Figure \ref{fig:subregion}; although the union of these two causal wedges does not cover the entire spacetime, when $A$ is a \emph{spatial} subregion and the boundary state is pure it \emph{does} contain the entirety of a spatial slice of the bulk. (If the boundary state is mixed, the union won't cover an entire spatial slice; for example, there could be a black hole horizon beyond which the boundary-anchored geodesics will not penetrate.)}:

\begin{definition} An encoding isometry $V$, a subregion $A$, and collection of operators $\mathcal{M}$ satisfy \textbf{complementary recovery} if $\mathcal{M}$ is correctable from $A$, and $\mathcal{M}'$ is correctable from $\bar A$.
\end{definition}

If we can find such a collection of operators $\mathcal{M}$, then we have exactly captured the degrees of freedom in $\mathcal{H}_L$ that are visible from $A$. Now all that is left to do is to use $\mathcal{M}$ to define an entropy on $\rho_L$. It turns out that there is a very natural way of doing this if $\mathcal{M}$ is closed under multiplication, in which case $\mathcal{M}$ is a von Neumann algebra. In this case there is a natural generalization of the entanglement entropy called the `algebraic entropy' $S(\mathcal{M},\rho_L)$ (which we review in Section~\ref{sec:vonNeumann}). This finally lets us define what we mean by `entropy of $\rho_L$ visible from $A$', and state an RT formula in quantum mechanical language: 
 \begin{align}
S(\rho_A) = S(\mathcal{M},\rho_L)   + \text{Tr}(\rho_L L_A). \label{eqn:intro_rt_formula}
\end{align}

In fact, a key result of \cite{harlow2017ryu} is that the existence of a von Neumann algebra $\mathcal{M}$ implies that the code satisfies an RT formula.

\begin{theorem}\label{thm:simple_complementaritytoRT}\textbf{From Theorem~5 of \cite{harlow2017ryu}.} Say an encoding isometry $V$, a subregion $A$, and a von Neumann algebra $\mathcal{M}$ satisfy complementary recovery. Then there is an area operator $L_A$ such that (\ref{eqn:intro_rt_formula}) holds.
\end{theorem}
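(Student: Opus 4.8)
The plan is to reduce the RT formula to an explicit structural decomposition of the encoding isometry $V$, and then read off (\ref{eqn:intro_rt_formula}) from a direct entropy computation. The decisive first step is to show that complementary recovery forces $V$ into a canonical block-product form adapted to $\mathcal{M}$. Writing the finite-dimensional von Neumann algebra in its standard form $\mathcal{H}_L = \bigoplus_\alpha \mathcal{H}_\alpha \otimes \mathcal{H}_{\bar\alpha}$, with $\mathcal{M} = \bigoplus_\alpha \mathcal{L}(\mathcal{H}_\alpha)\otimes I$ and $\mathcal{M}' = \bigoplus_\alpha I \otimes \mathcal{L}(\mathcal{H}_{\bar\alpha})$, the claim is that there exist compatible decompositions $\mathcal{H}_A = \bigoplus_\alpha \mathcal{H}_{A_1}^\alpha\otimes\mathcal{H}_{A_2}^\alpha$ and $\mathcal{H}_{\bar A} = \bigoplus_\alpha \mathcal{H}_{\bar A_1}^\alpha\otimes\mathcal{H}_{\bar A_2}^\alpha$ (plus possible leftover summands), together with fixed states $\ket{\chi_\alpha}\in\mathcal{H}_{A_2}^\alpha\otimes\mathcal{H}_{\bar A_2}^\alpha$ and block unitaries, such that $V$ maps the $\mathcal{M}$-factor $\mathcal{H}_\alpha$ isometrically into $A$, maps the $\mathcal{M}'$-factor $\mathcal{H}_{\bar\alpha}$ isometrically into $\bar A$, and tensors in $\ket{\chi_\alpha}$ across the cut. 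This canonical form is the content of the structure lemma, whose proof I would invoke from Appendix~\ref{app:structure_lemma}; \emph{establishing it is the main obstacle}, since this is precisely where the two correctability hypotheses ($\mathcal{M}$ from $A$ and $\mathcal{M}'$ from $\bar A$) are actually consumed.

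Granting the canonical form, the second step is to compute the reduced boundary state $\rho_A = \mathrm{Tr}_{\bar A}(V\rho_L V^\dagger)$. Because distinct sectors $\alpha\neq\beta$ are supported on orthogonal summands $\mathcal{H}_{\bar A}^\alpha \perp \mathcal{H}_{\bar A}^\beta$, the partial trace over $\bar A$ annihilates all off-diagonal terms, leaving a block-diagonal operator whose $\alpha$-th block is $p_\alpha\,(\sigma_\alpha\otimes\chi_\alpha)$. Here $p_\alpha = \mathrm{Tr}(P_\alpha\rho_L)$ is the weight of $\rho_L$ in sector $\alpha$ (with $P_\alpha$ the central projector), $\sigma_\alpha$ is the normalised reduced state induced on the $\mathcal{M}$-factor $\mathcal{H}_\alpha$, and $\chi_\alpha = \mathrm{Tr}_{\bar A_2^\alpha}\ket{\chi_\alpha}\bra{\chi_\alpha}$ is the state-\emph{independent} reduced density matrix of the fixed state.

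The third step is to read off the entropy. Using block-diagonality together with additivity of the von Neumann entropy across the tensor factors, the computation yields
\begin{equation}
S(\rho_A) = H(\{p_\alpha\}) + \sum_\alpha p_\alpha\, S(\sigma_\alpha) + \sum_\alpha p_\alpha\, S(\chi_\alpha),
\end{equation}
where $H$ denotes the Shannon entropy. The first two terms are exactly the algebraic entropy $S(\mathcal{M},\rho_L)$ reviewed in Section~\ref{sec:vonNeumann}, since $\sigma_\alpha$ is the restriction of $\rho_L$ to the factor $\mathcal{H}_\alpha$ and the definition of $S(\mathcal{M},\rho_L)$ is precisely $H(\{p_\alpha\}) + \sum_\alpha p_\alpha S(\sigma_\alpha)$. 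For the remaining term I would \emph{define} the area operator as the central element
\begin{equation}
L_A := \bigoplus_\alpha S(\chi_\alpha)\, I_\alpha \ \in\ Z_\mathcal{M},
\end{equation}
so that $\mathrm{Tr}(\rho_L L_A) = \sum_\alpha p_\alpha\, S(\chi_\alpha)$ because $L_A$ is a constant multiple of the identity on each block. Assembling the three contributions gives $S(\rho_A) = S(\mathcal{M},\rho_L) + \mathrm{Tr}(\rho_L L_A)$, which is (\ref{eqn:intro_rt_formula}).

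Two features confirm that this $L_A$ deserves to be called an area operator: it lives on the center $Z_\mathcal{M}$ and it is manifestly independent of $\rho_L$, encoding only the fixed entanglement $\ket{\chi_\alpha}$ carried across the cut in each superselection sector. The computation above is phrased for a $\rho_L$ already adapted to the sector decomposition, but this is no loss of generality: only the weights $p_\alpha$ and the factor-reduced states $\sigma_\alpha$ enter the final formula, and both are well defined for arbitrary $\rho_L$; alternatively one may purify $\rho_L$ on a reference system and apply the pure-state version of the calculation verbatim. The only genuinely nontrivial ingredient, to reiterate, is the structural decomposition of $V$ — once that is in hand, the RT formula is a bookkeeping exercise in the additivity and block-convexity of entropy.
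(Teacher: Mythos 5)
Your proposal is correct and follows essentially the same route as the paper's proof of Theorem~\ref{thm:complementaritytoRT}: both rest on the factorization lemma (Lemma~\ref{lemma:factorization}, applied to each side of the cut) to put $V\ket{\alpha,i,j}$ into the form $\ket{\psi_{\alpha,i}}\otimes\ket{\chi_\alpha}\otimes\ket{\bar\psi_{\alpha,j}}$, then compute $S(\rho_A)$ blockwise and define $L_A=\sum_\alpha S(\mathrm{Tr}_{\bar A}\ket{\chi_\alpha}\bra{\chi_\alpha})\,I_\alpha\in Z_\mathcal{M}$. You correctly identify the structure lemma as the step that consumes both correctability hypotheses; the only cosmetic difference is that the paper kills the off-diagonal-in-$\alpha$ terms by replacing $\rho$ with $\rho_\mathcal{M}$ via Theorem~\ref{thm:expectations_vN}, whereas you appeal to the orthogonality of the $\bar A$ summands, which is equivalent.
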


A summary of the notation from this discussion is to be found in Table~\ref{table:Rosetta}. In section~\ref{sec:vonNeumann} we give an introduction to von Neumann algebras and their properties. Then, in Section~\ref{sec:complementary_recovery} we present the above discussion in more mathematical detail and also outline some of our main results.

\begin{table}
\small
\begin{center}
    \begin{tabular}{p{3cm}|p{4.2cm}|p{4.2cm}}
     \textbf{Symbol} & \textbf{Quantum Error Correction \mbox{Interpretation}} & \textbf{Holographic \mbox{Interpretation}} \\
    \hline 
    \hline
    $\mathcal{H}$ & physical space & boundary CFT  \\
    \hline
    $\mathcal{H}_{L}$ & logical space & bulk asympotically AdS space \\
    \hline
    $V :\mathcal{H}_{L} \rightarrow  \mathcal{H}$ & encoding isometry & AdS/CFT dictionary \\
    \hline
    $\ket{\psi}_L$, $\rho_L$, $O_L$ & state, operator in the logical space $\mathcal{H}_L$ &  state, operator in the bulk \\
    \hline
    $\mathcal{H}_A$, where $\mathcal{H} = \mathcal{H}_A \otimes \mathcal{H}_{\bar A}$ & the region of $\mathcal{H}$ that remains after the erasure of $\bar A$  & a subregion of the boundary complementary to $\bar A$ \\
    \hline
    $\ket{\psi}_A$, $\rho_A$, $O_A$ & state, operator in $\mathcal{H}_A$ & boundary state, operator in the $A$ subregion \\
    \hline
    $\mathcal{M} $ & algebra of operators protected from the erasure of $\bar A$ & algebra of bulk operators in the entanglement wedge of $A$ \\
    \hline
    $Z_\mathcal{M}$ & algebra of operators protected from the erasure of either $A$ or $\bar A$  & bulk operators that can be reconstructed from either $A$ or $\bar A$  \\
    \end{tabular}
\end{center}
\caption{\textbf{A Rosetta stone for symbols and their quantum error correction and holographic interpretations}. First column: main symbols used throughout the paper. Second column: interpretation of the symbol in the language of quantum error correction. Third column: holographic interpretation of the symbol.}
\label{table:Rosetta}
\end{table}

\subsection{Overview of our contributions}
\label{sec:overview_our_contributions}

 The central goal of our work is to build concrete examples and analysis techniques using the work of \cite{harlow2017ryu} as a starting point. We begin with some general results that aid in the analysis of holographic codes. First, we show that the von Neumann algebra of interest to holography is unique, and that there is a direct way of computing it. Then, we give several `atomic' examples of quantum error correction codes that manifest holographic features despite only possessing very few qubits.

\begin{theorem} \label{thm:simple_vn_algebra}\textbf{What is the von Neumann algebra?} Say $V$ is an encoding isometry and $A$ is a subregion. If there exists a von Neumann algebra $\mathcal{M}$ such that $V,A,\mathcal{M}$ obey complementary recovery, then it is unique.

Furthermore, let $\mathcal{M}$ be exactly the set of operators in the bulk that are correctable from $A$. If it is closed under multiplication, then it is the unique von Neumann algebra with complementary recovery. Otherwise, no such algebra exists.
\end{theorem}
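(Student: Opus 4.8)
The plan is to pin down the candidate algebra canonically and then use a privacy--correctability duality to show it is forced. Write $\mathcal{C}_A$ for the set of operators correctable from $A$ and $\mathcal{C}_{\bar A}$ for those correctable from $\bar A$; each is a unital, $\dagger$-closed operator system, and unwinding the definition of privacy shows directly that the operators private from $A$ are exactly the commutant $(\mathcal{C}_A)'$ taken inside $\mathcal{L}(\mathcal{H}_L)$. The whole argument will run on a single imported fact --- the connection between correctable and private systems advertised in the introduction and developed in Appendix~\ref{app:privacy} --- which I would phrase as: reconstructions on complementary regions commute, i.e.\ $\mathcal{C}_A \subseteq (\mathcal{C}_{\bar A})'$ (everything correctable from $A$ is private from $\bar A$), together with its converse at the level of algebras (private from $A$ implies correctable from $\bar A$). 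The forward inclusion is an elementary computation using a clean reconstruction $O_A V = V O_L$, since then for $P_L$ correctable from $\bar A$ via $Q_{\bar A} V = V P_L$ one gets $V O_L P_L = O_A Q_{\bar A} V = Q_{\bar A} O_A V = V P_L O_L$, and $V$ is an isometry.

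First I would prove uniqueness by a commutant sandwich. Suppose $\mathcal{M}$ is any von Neumann algebra obeying complementary recovery. Since $\mathcal{M}$ is correctable from $A$ we have $\mathcal{M}\subseteq\mathcal{C}_A$, and since $\mathcal{M}'$ is correctable from $\bar A$ we have $\mathcal{M}'\subseteq\mathcal{C}_{\bar A}$; taking commutants of the second inclusion and invoking the bicommutant theorem $\mathcal{M}''=\mathcal{M}$ gives $(\mathcal{C}_{\bar A})'\subseteq\mathcal{M}$. Combining with the imported inclusion $\mathcal{C}_A\subseteq(\mathcal{C}_{\bar A})'$ yields the chain $\mathcal{C}_A\subseteq(\mathcal{C}_{\bar A})'\subseteq\mathcal{M}\subseteq\mathcal{C}_A$, so all four coincide and $\mathcal{M}=\mathcal{C}_A$. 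Thus any complementary-recovery algebra must equal the fixed operator system $\mathcal{C}_A$, which proves both uniqueness and the identification of the algebra; as a by-product it shows that whenever complementary recovery is possible the set $\mathcal{C}_A$ is automatically a von Neumann algebra, since it then equals the commutant $(\mathcal{C}_{\bar A})'$.

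This by-product immediately gives the dichotomy. If $\mathcal{C}_A$ is not closed under multiplication then it is not a von Neumann algebra, so by the sandwich no $\mathcal{M}$ can satisfy complementary recovery. If instead $\mathcal{C}_A$ is closed under multiplication, then being also unital and $\dagger$-closed it is a von Neumann algebra, and I would verify it actually satisfies complementary recovery: correctability of $\mathcal{C}_A$ from $A$ is immediate, while correctability of its commutant $(\mathcal{C}_A)'$ from $\bar A$ is exactly the converse direction of the privacy--correctability duality, $(\mathcal{C}_A)'\subseteq\mathcal{C}_{\bar A}$. I expect this converse to be the main obstacle: the forward direction is the elementary commutation computation above, but the reverse --- that nothing private from $A$ is truly lost, hence everything hidden from $A$ can be recovered from $\bar A$ --- is the substantive content of the private-system results of \cite{crann2016private, kribs2018quantum}, and care is needed because the shared center $Z_{\mathcal{M}}=\mathcal{M}\cap\mathcal{M}'$ is correctable from both regions and must be placed consistently inside both $\mathcal{C}_A$ and $\mathcal{C}_{\bar A}$.
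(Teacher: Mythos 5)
Your overall architecture --- pin down the candidate as the set $\mathcal{C}_A$ of correctable operators, then trap any complementary-recovery algebra between $\mathcal{C}_A$ and a commutant via the bicommutant theorem and the correctability--privacy duality --- is the same as the paper's (Theorem~\ref{thm:whatalgebra}), and your treatment of existence and of the final dichotomy matches it. But there is a genuine gap in the uniqueness step. The inclusion you import, $\mathcal{C}_A \subseteq (\mathcal{C}_{\bar A})'$, is asserted for the \emph{raw operator systems} of correctable operators, and in that generality it is false. Your ``elementary computation'' presupposes a clean reconstruction $(O_A\otimes I_{\bar A})V = V O_L$, but the definition of correctability in play only gives $O_L = V^\dagger (O_A\otimes I_{\bar A})V$; a clean reconstruction is generally available only for operators inside a genuinely correctable von Neumann algebra. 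Indeed, if every weakly correctable operator admitted a clean reconstruction, then $\mathcal{C}_A$ would automatically be closed under multiplication and the ``otherwise'' branch of the theorem would be vacuous. The paper's own Example~\ref{ex:badcode} is a concrete counterexample to your inclusion: there the matrix unit $E_{12}$ lies in $\mathcal{C}_A$ and $E_{31}$ lies in $\mathcal{C}_{\bar A}$, yet $E_{12}E_{31}=0$ while $E_{31}E_{12}=E_{32}$, so elements of $\mathcal{C}_A$ need not commute with elements of $\mathcal{C}_{\bar A}$.

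The repair is to route the duality through the algebras rather than through the correctable sets, which is exactly what the paper does: from $\mathcal{M}'\subseteq\mathcal{C}_{\bar A}$, i.e.\ $\mathcal{M}'$ is correctable from $\bar A$, Lemma~\ref{lemma:corrpriv} (which is stated for von Neumann algebras) gives that $\mathcal{M}'$ is private from $A$, and by definition this says $\mathcal{C}_A\subseteq(\mathcal{M}')'=\mathcal{M}$. Combined with $\mathcal{M}\subseteq\mathcal{C}_A$ this closes your sandwich without ever invoking $(\mathcal{C}_{\bar A})'$. Your identification of ``private from $A$'' with membership in $(\mathcal{C}_A)'$ is correct, and your instinct that the private-implies-correctable direction is the deep imported fact is reasonable; but the step that actually fails in your write-up is the ``easy'' forward direction, precisely because you apply it outside the algebra setting in which Lemma~\ref{lemma:corrpriv} holds.
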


This theorem is a direct consequence of a result from the quantum error correction literature:
\begin{lemma} A von Neumann algebra $\mathcal{M}$ is correctable from $A$ if and only if it is private from $\bar A$.
\end{lemma}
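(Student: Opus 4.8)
The plan is to reformulate both sides of the equivalence as conditions on a single family of bulk operators, namely those of the form $V^\dagger(I_A \otimes Y)V$ with $Y \in \mathcal{L}(\mathcal{H}_{\bar A})$, and then recognise the resulting statement as the operator-algebra Knill--Laflamme / correctability criterion from the quantum error correction literature reviewed in Appendix~\ref{app:privacy}.

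First I would fix an orthonormal basis $\{\ket{k}\}$ of $\mathcal{H}_{\bar A}$ and describe the erasure of $\bar A$ as the channel $\mathcal{E}(\rho_L) = \mathrm{Tr}_{\bar A}(V\rho_L V^\dagger)$ on $\mathcal{L}(\mathcal{H}_L)$, whose Kraus operators are
\[
E_k = (I_A \otimes \bra{k}_{\bar A})\,V, \qquad\text{so that}\qquad E_k^\dagger E_j = V^\dagger(I_A \otimes \ket{k}\bra{j}_{\bar A})\,V .
\]
Since $\{\ket{k}\bra{j}\}$ spans $\mathcal{L}(\mathcal{H}_{\bar A})$, the operators $\{E_k^\dagger E_j\}$ span exactly the set $\{V^\dagger(I_A\otimes Y)V\}$ of bulk operators reconstructible from $\bar A$. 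With this observation the privacy side becomes transparent: by definition $\mathcal{M}$ is private from $\bar A$ iff every $V^\dagger(I_A\otimes Y)V$ commutes with every element of $\mathcal{M}$, i.e.\ iff $V^\dagger(I_A\otimes Y)V \in \mathcal{M}'$ for all $Y$, which by linearity and the spanning remark is equivalent to $E_k^\dagger E_j \in \mathcal{M}'$ for all $k,j$.

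Next I would invoke the generalisation of the Knill--Laflamme conditions to von Neumann algebras \cite{beny2007quantum, beny2007generalization, kribs2006operator}: a von Neumann algebra $\mathcal{M}\subseteq\mathcal{L}(\mathcal{H}_L)$ is correctable for a channel with Kraus operators $\{E_k\}$ precisely when $E_k^\dagger E_j \in \mathcal{M}'$ for all $k,j$. Applied to the erasure-of-$\bar A$ channel, this says that $\mathcal{M}$ is correctable from $A$ iff $E_k^\dagger E_j \in \mathcal{M}'$ for all $k,j$, which by the previous paragraph is exactly the statement that $\mathcal{M}$ is private from $\bar A$. Reading both directions off this single criterion yields the lemma; conceptually it is the complementarity between correctable and private algebras for the complementary pair of erasure channels (keep $A$ versus keep $\bar A$) that $V$ simultaneously dilates.

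The hard part will be the definitional matching rather than the computation. The text defines ``correctable from $A$'' operationally --- each $O_L\in\mathcal{M}$ admits a boundary operator $O$ on $A$ with $V^\dagger O V = O_L$ --- whereas the Knill--Laflamme criterion is phrased via the existence of a recovery channel. I would therefore need to argue that, for a von Neumann algebra, elementwise reconstructibility from $A$ is equivalent to genuine correctability; equivalently, that the reconstruction map restricts to a $*$-homomorphism on $\mathcal{M}$ so that products, and not merely individual operators, are recovered. This upgrade is precisely what the privacy--correctability complementarity of \cite{crann2016private, kribs2018quantum} supplies, and it is the step I would treat most carefully. Once it is in place, the Kraus-operator identity above and the spanning remark make the rest immediate.
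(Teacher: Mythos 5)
Your overall route is essentially the one the paper takes: the paper's entire proof of Lemma~\ref{lemma:corrpriv} is to cite the general correctable/private complementarity (Theorem~\ref{thm:correctable-private}, i.e.\ Proposition~2.4 of \cite{kribs2018quantum}) and specialise the channel $\mathcal{E}$ to the erasure of $\bar A$. Your Kraus operators $E_k = (I_A\otimes\bra{k}_{\bar A})V$, the identity $E_k^\dagger E_j = V^\dagger(I_A\otimes\ket{k}\bra{j}_{\bar A})V$, and the observation that these span $V^\dagger(I_A\otimes\mathcal{L}(\H_{\bar A}))V$ are exactly the content of that specialisation made explicit, and they are correct; so in substance you and the paper are doing the same thing, with your version supplying detail the paper omits.

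The one step I would not accept as written is your last paragraph. You correctly identify the definitional mismatch --- elementwise reconstructibility $O_L = V^\dagger(O_A\otimes I_{\bar A})V$ versus correctability in the recovery-channel / Knill--Laflamme sense --- as the delicate point, but the fix you propose is circular: the complementarity theorem of \cite{crann2016private,kribs2018quantum} relates channel-correctability to privacy of the \emph{complementary} channel; it says nothing about whether the two notions of correctability coincide, so it cannot supply the bridge. Worse, under the literal weak reading the bridge is false. Take $V\ket{0}=\ket{00}$ and $V\ket{1}=(\ket{10}+\ket{01})/\sqrt{2}$: the map $O_A\mapsto V^\dagger(O_A\otimes I_{\bar A})V$ sends $O_A$ with entries $a,b,c,d$ to the matrix with entries $a$, $b/\sqrt{2}$, $c/\sqrt{2}$, $(a+d)/2$, which is surjective onto $\mathcal{L}(\H_L)$, so the full algebra $\M=\mathcal{L}(\H_L)$ is ``correctable from $A$'' in the elementwise sense; yet $V^\dagger(I_A\otimes\ket{0}\bra{0}_{\bar A})V=\mathrm{diag}(1,1/2)$ is not a multiple of the identity, so $\M$ is not private from $\bar A$. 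The correct bridge is to read correctability in the strong sense $O_A V = V O_L$ together with $O_A^\dagger V = V O_L^\dagger$ (equivalently, $O_A$ preserves the code subspace and acts there as $O_L$); with that reading the equivalence to the condition $E_k^\dagger E_j\in\M'$ is the standard reconstruction theorem (Theorem~3 of \cite{harlow2017ryu}, or the operator-algebra Knill--Laflamme conditions of \cite{beny2007quantum}), and your argument closes. To be fair, the paper's one-line proof glosses over exactly the same translation, but since you flagged this as the step you would treat most carefully, the resolution you offer does not actually close it.
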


The main idea is that complementary recovery restricts $\mathcal{M}$ from both sides: on the one hand $\mathcal{M}$ must be correctable from $A$ so it cannot contain too many operators. But on the other hand, since $\mathcal{M}'$ is correctable from $\bar A$, we must have that $\mathcal{M}'$ is private from $A$. So $\mathcal{M}$ must be large enough so that its commutant remains small enough to be private. In particular, when $\mathcal{M}$ is correctable from $A$  there is then no proper subalgebra of $\mathcal{M}$ whose commutant is correctable from $\bar A$. We comment on the implications of this fact for bulk reconstruction in the Discussion.

The uniqueness theorem implies a concrete `recipe' for analyzing quantum error-correcting codes and determining their RT formulae. In section~\ref{sec:complementary_recovery} we present Theorem~\ref{thm:simple_vn_algebra}, as well as a mostly self-contained derivation of Theorem~\ref{thm:simple_complementaritytoRT}. With all these mathematical tools together, the section culminates in a series of step-by-step instructions for analyzing a quantum error-correcting code.

In section~\ref{sec:examples} we then practice this recipe on several examples. We construct these examples by building quantum circuits for the encoding isometry $V$. These examples are designed to flesh out the different terms of (\ref{eqn:intro_rt_formula}) to varying degrees of completeness. The section culminates in the example sketched in Figure~\ref{fig:full_example}.

\begin{figure}
\begin{center}
\includegraphics[width=0.8\textwidth]{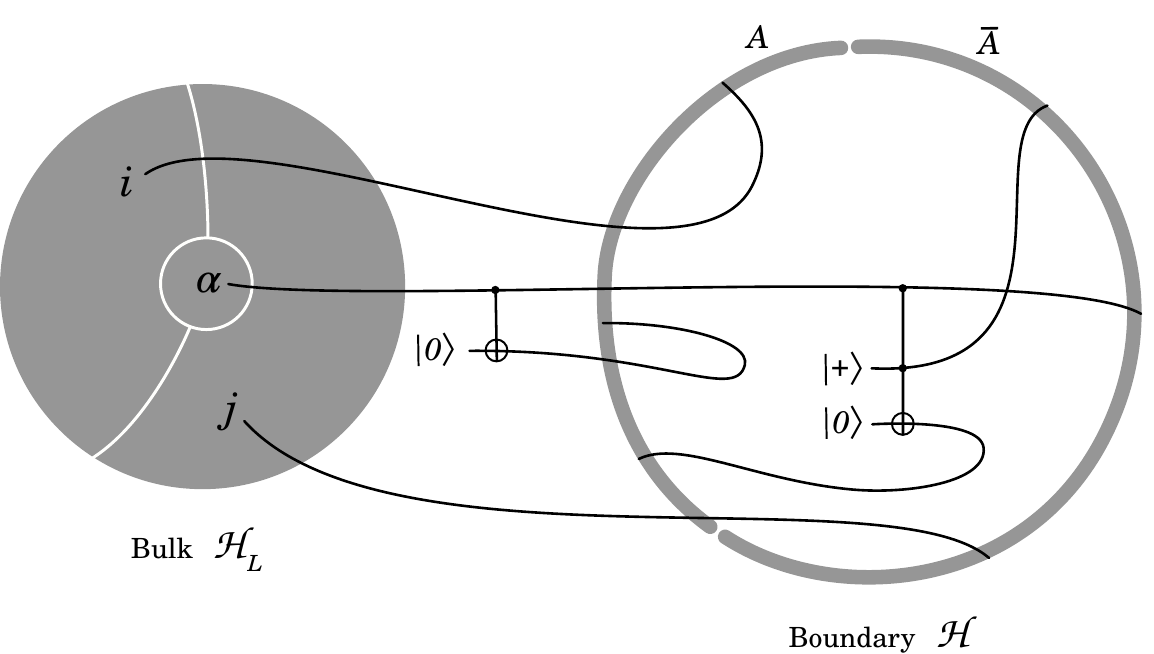}
\end{center}
\caption{\label{fig:full_example} A sketch of a quantum circuit with an RT formula. On the left side, $i, \alpha, j$ label three qubits in the bulk. The central degree of freedom $\alpha$ looks classical since it is visible from both $A$ and $\bar A$ via the CNOT. $\alpha$ also conditionally creates some of the entanglement across the bipartition via a Toffoli gate, which determines the area operator's eigenvalues. The bulk degrees of freedom $i,j$ are not encoded at all and are plainly visible from the boundary. A full technical explanation of the encoding can be found in Sections~4 and 5. See also Example~\ref{ex:cql} in particular. }
\end{figure}

Here we give a brief non-technical summary of how the code in Figure~\ref{fig:full_example} generates an RT formula. Since the full explanation is fairly involved, we focus only on the intuition behind the features and leave the technical explanation for Sections~4 and~5.  We will review in Section~3 that the algebraic entropy $S(\mathcal{M},\rho_L)$ naturally divides into two terms: a classical term $S_c$ and a quantum term $S_q$. 

 \begin{align}
S(\rho_A) = S_c + S_q + \text{Tr}(\rho_L L_A). 
\end{align}

The degree of freedom $\alpha$ is copied via CNOT onto an extra qubit, so that $\alpha$ can be seen from both $A$ and $\bar A$. This essentially `measures' the $\alpha$ degree of freedom in the computational basis, and makes it look entirely classical. This is the classical part of the entropy $S_c$.

On the other hand, the $i$ and $j$ degrees of freedom are not measured, and thus retain their coherence. Subregion $A$ cannot see the $j$ qubit, but it can see the $i$ qubit. Thus, the von Neumann entropy of the $i$ qubit forms the quantum part of the entropy $S_q$.

Finally, some of the entanglement across $A$-$\bar A$ stems from the subcircuit involving the Toffoli gate, which connects $\alpha$ and the two boundary regions. This entropy forms the area term $\text{Tr}(\rho_L L_A) $ since it is completely independent of entanglement between bulk degrees of freedom. However, the generation of entanglement is conditional on the value of $\alpha$, so the area operator $L_A$ is actually $\ket{1}\bra{1}$ on the $\alpha$ subsystem.  In this sense $\alpha$ is a bulk degree of freedom that indexes which geometry we are in.

The examples in section~\ref{sec:examples} are in some sense the simplest possible quantum error-correcting codes with non-trivial holographic properties. Their purpose is primarily to serve as pedagogical examples for understanding holographic quantum error correction. However, there are many possible future directions, some of which we elaborate in the Discussion below. An obvious direction is to try to use these examples as building blocks for a tensor network that supports superpositions of geometries. This is an idea that tensor network models seem to struggle to capture. Other possibilities include trying to find an example so that $\alpha$ is not a separate degree of freedom, or to add dynamics.

\section{The holographic principle}
\label{sec:holography_background}

What characterizes a (quantum) gravitational/holographic theory? Which of these characteristics can be captured by low-dimensional discrete toy models? The goal of this section is to provide the interested reader with enough intuition about gravity and holography to answer these questions. In particular, reading this section is meant to establish the concepts and intuition necessary to understand the following key claim:

\begin{itemize}
    \item The entanglement structure of holographic states is special. The entanglement entropy of a mixed state is not in general an observable. However, the reduced density matrix in the spatial subregion $A$ of a \emph{classical} holographic state $\rho$ obeys an area law known as the Ryu-Takayanagi formula:
\begin{align}\label{eq:classical_RT}
    S_A(\rho) = \frac{1}{4G_N} \min_{\gamma_{A}} Area(\gamma_{A}).
\end{align}
That is, entropies of subregions are proportional to the value of a geometric observable, the minimal area operator, which is obtainable by following a well-defined procedure (detailed later in this section) to construct the classical geometry corresponding to $\rho$, and the constant of proportionality involves the gravitational constant $G_N$. Furthermore, holographic states in a much larger class, where we allow entanglement of bulk degrees of freedom as well as superpositions of different geometries, nevertheless have entropies described by a more general RT formula: 
\begin{align}\label{eq:full_RT}
    S_A(\rho) = S_{\text{bulk},A}(\rho) + \text{Tr}(L \rho),
\end{align}
where $L$ is again a bulk observable whose eigenvalues are the minimal areas of the different geometries in the superposition. An RT formula also holds for the reduced state in the complementary subregion $\bar A$, with the \emph{same} operator $L$.
\end{itemize}

The following sections of the paper will be devoted to understanding the features of, and building examples of, quantum error-correcting codes which obey Eq.\ \eqref{eq:full_RT}. In particular, we will show that an operator $L$ exists for codes which obey a \emph{complementary recovery} property relating errors correctable in a region $A$ to errors correctable in the complement of the region. These codes are themselves \emph{holographic}: even though the codes are not gravitational, encoded states have the same special entanglement structure.

However, before moving to the error-correction setting, in this section we give more details on what the RT formula \emph{means}. We discuss how to describe a spacetime geometry which obeys the equations of general relativity: first using a metric, then using the more invariant data of geodesic distances and extremal areas. We then sketch how to think of a \emph{quantum state} which describes (a spatial slice of) a given geometry, and then the larger Hilbert space which allows for entangled degrees of freedom living on these geometries, as well as states describing superpositions of geometries. Finally, we move from this abstract discussion to the more concrete setting of holographic theories which give dual descriptions of some of these quantum-gravitational Hilbert spaces, allowing us to measure bulk observables via a holographic operator dictionary and relating the bulk geometry to the boundary entropic structure.

\subsection{Some general relativity}

We begin by explaining the objects that appear on the right-hand side of \eqref{eq:classical_RT}; in later subsections we'll discuss the meaning of the left-hand side and of the generalization \eqref{eq:full_RT}. These are geometric quantities, so we'll first need to explain what we mean by a geometry, by introducing the notion of a metric to define the distance between points and along curves, and then the special curves called geodesics which define the causal structure of a spacetime. We'll then pass from mathematics to physics by discussing how the Einstein field equations relate the geometry to the energy and matter living on top of it. Finally, we'll give a (reasonably) careful discussion of the symmetries of spacetime and of the Einstein equations. Many metrics can describe the same spacetime, so if we want to work with physical quantities we need objects which don't change when we alter the metric but leave the spacetime unchanged. We'll see that, when a spacetime has a boundary, one of these quantities is precisely the minimal area operator appearing in \eqref{eq:classical_RT}.

We recommend that the interested reader looking for a more complete but still concise introduction to GR consult \cite{carroll2001no}.

\subsubsection{Metrics and distances}\label{sub:metrics}

The full machinery of quantum gravity won't be necessary for this review, but it will be useful to establish some intuitions and terminology. We begin with classical general relativity. The space in this theory is a particular non-flat D-dimensional geometry. Formally, what we mean by a ``geometry" is some smooth (differentiable) manifold, which we can describe by some set of coordinates $\{x^\mu\}_{\mu=0}^{D-1}$, where the index $\mu$ ranges over the $D$ dimensions of the manifolds. What we mean by ``curved'' is that distances between points in this manifold aren't given by the Euclidean distance. Instead, we use a more general notion of distance, a \emph{pseudo-Riemannian metric} $g_{\mu \nu}$.
Using the metric, we can define the \emph{line element}
\begin{equation}
    ds^2 = g_{\mu \nu} dx^\mu dx^\nu,
\end{equation}
where we are adopting the convention that repeated indices are summed over. That is, at every point of space the metric is a \emph{matrix} (or, more formally, a two-index tensor): if we specify a particular point $x$ and pick two coordinate directions $\mu,\nu$ we can find the matrix element $g_{\mu \nu}$. 
When $g_{\mu \nu}=\delta_{\mu \nu}$ at all points in space, we recover the special case of the Euclidean metric in $D$ dimensions: the line element is
\begin{equation}
    ds^2_\text{Euc} = (dx^0)^2 + \ldots + (dx^{D-1})^2 \equiv dx^\mu dx^\mu.
\end{equation} 
However, we more often have cases in which some of the coordinates are \emph{timelike}, $g_{aa}<0$. In particular, when exactly one of the coordinates (by convention, $x^0$) is timelike (or, more precisely, when the metric has one negative eigenvalue everywhere on the manifold), we say that the metric is \emph{Lorentzian}. A simple example is given by taking the Euclidean metric and putting a minus sign in front of the {00} (``time-time'') component:
\begin{equation}
    ds^2_\text{SR} = -(dx^0)^2 + \ldots + (dx^{D-1})^2 = dx^\mu dx^\mu.
\end{equation} 
This is a metric which describes the situation of \emph{special relativity}: we can see that, for any fixed value of $x_0$, the \emph{spatial} part of the metric is still flat.

The line element in turn allows us to compute the length of a curve $\gamma$, which we can parametrize as a choice of coordinates at each point on the curve: $\gamma(\lambda)=x^\mu(\lambda)$. The length of the curve is given by adding up the infinitesimal displacements along the curve, i.e. the arc length integral
\begin{equation}\label{eq:arc_length}
    \left|\gamma\right| \equiv \int_0^1 d\lambda \sqrt{ds^2} = \int_0^1 d\lambda \sqrt{g_{\mu \nu} \frac{dx^\mu(\lambda)}{d\lambda} \frac{dx^\nu(\lambda)}{d\lambda}}.
\end{equation}
In a given geometry, we can construct the set of \emph{all possible} (smooth) curves which connect two points (in the equation above, the points are $x^\mu(0)$ and $x^\mu(1)$). Individual curves in this set depend on some choice of coordinates, but, crucially, the entire set depends only on the geometry and the choice of points\footnote{Admittedly, so far we've labelled these points in a particular coordinate system, but we could just call them $A$ and $B$, or alternatively consider any coordinate system that preserves the locations of the two points but allow the coordinates to vary on the rest of the geometry. Below we're going to consider the whole \emph{space} of geodesics, and that will remove even this dependence.}. So any quantity we can compute given access to the entire set is coordinate-independent. In particular, we'd like to use the set to come up with a coordinate-independent distance between two points.

In Euclidean (or more generally Riemannian) metrics, one such quantity is \emph{the length of the shortest curve connecting two points}. If there's a timelike direction, this isn't true anymore: we can take a curve and add zig-zags in the timelike direction which will make the curve steadily shorter and shorter. So we can't simply take this as our distance measure. The right generalization turns out to be to consider the lengths not of \emph{minimal} curves, but \emph{extremal} ones. To find these curves, we take the arc length integral \eqref{eq:arc_length}, consider it as a functional of the curve $\gamma$, vary the curve $\gamma\rightarrow\gamma + \delta\gamma$, and look for stationary points of the variation. This is the fundamental problem of the calculus of variations, and its solution is given by solving the Euler-Lagrange equations with the action taken to be the line element $ds=\sqrt{ds^2}$. We won't write this down explicitly, but the equation to be solved is known as the \emph{geodesic equation} and the extremal curves are called \emph{geodesics}; in GR, they're the curves traced out by non-accelerating (``freely-falling'') observers. If all of the geodesics connecting two points have the same length, we call this length the \emph{geodesic distance} between two points; if there are multiple geodesics with different lengths, we take the geodesic distance to be the shortest such length. 

In geometries with one timelike direction, the geodesic distance between pairs of points gives a \emph{causal structure} for the geometry: for all pairs of points, we can tell whether they are spacelike, timelike, or null separated by checking whether the geodesic distance between them is, respectively, positive, negative, or zero. When two points are timelike separated, we often call the negative of the geodesic distance the \emph{proper time}; with appropriate units, it measures the time elapsed by a clock carried by an observer freely falling between the two points. Crucially, as we'll discuss below, the causal structure is really a property \emph{of the geometry itself}: the set of all geodesics on a manifold is independent of how the metric is parameterized, and two metrics describe the same geometry precisely when they produce the same causal structure.

It should be clear that we can generalize this entire dicussion by passing from curves to higher-dimensional objects (surfaces, volumes, etc.). Instead of the arc length integral \eqref{eq:arc_length} we have some higher-dimensional integral, which we vary to find stationary solutions: extremal surfaces, volumes, etc. Like the geodesics, these are similarly coordinate-invariant objects. For ease of drawing figures, we'll typically work in two space and one time dimension. This hopefully explains our choice of notation in \eqref{eq:classical_RT}: the $A$ is a subregion of a spatial slice of the boundary, that is, a dimension $D-2$ (``codimension 2'') object, and the minimization is over the areas of extremal dimension $D-2$ objects in the bulk of the spacetime that touch the boundary at the edge of $A$ (actually a subclass of these objects, as we'll discuss towards the end of this section). If, like our universe, $D=4$, $A$ would be two-dimensional. But in three spacetime dimensions $D-2=1$, so the boundary is equivalent to a circle, $A$ is some portion of that circle, and the relevant extremal objects are curves which we've accordingly labeled as $\gamma_A$. We nevertheless call the operator which computes their length an \emph{area} operator because in general spacetime dimension the invariant objects have \emph{codimension} 2.

\subsubsection{The Einstein field equations}

So far we have just done mathematics (differential geometry, to be precise). General relativity is a physical theory which relates the geometry of a manifold to the matter distribution living on it. More precisely, the Einstein field equations read
\begin{equation}\label{eq:einstein}
    G_{\mu \nu} = 8 \pi G_N T_{\mu \nu}.
\end{equation}
It won't be necessary for us to precisely define the objects in this equation, but we mention several features of it:
\begin{itemize}
    \item $G_{\mu \nu}$ is the ``Einstein curvature tensor'', a geometric object which is a function of the metric and its derivatives.
    \item $T_{\mu \nu}$ is the ``stress-energy tensor.'' In a particular coordinate frame in a Lorentzian spacetime, we can identify $T_{00}$ as the energy density, $T_{0k}$ as momentum density in the $x^k$ direction, and the mixed components as pressures and stresses.
    \item \eqref{eq:einstein} is $D^2$ equations given by different choices of $\mu,\nu$, but both the left- and right-hand sides of the equation are symmetric under exchange of $\mu$ and $\nu$, e.g. $G_{\mu \nu}=G_{\nu \mu}$, so there are only $D(D-1)/2$ independent equations, 10 in 4 spacetime dimensions. For a fixed choice of stress-energy tensor, this is a set of (second-order) nonlinear coupled differential equations which determine the metric.
    \item Although it isn't manifest in this form, we can often rewrite the Einstein equations as an \emph{initial value problem}: if we know the metric and its derivatives and the stress-energy, \emph{at one particular moment in time}, i.e. everywhere in space for one particular value of a timelike coordinate, we can use the field equations to tell us what the metric will be at some later time\footnote{When we're trying to solve Einstein's equations on a manifold with a boundary, we need to give (spatial) boundary conditions in addition to initial data. This is the case, in particular, for the asymptotically-AdS spacetimes that are of interest in holography.}. Now we can study, for example, backreaction---given some particular matter configuration, how does the geometry evolve?
\end{itemize}

\subsubsection{Diffeomorphism invariance}\label{sub:diff}

We said above that a particular geometry is described by a manifold specified by a metric. In general, however, there is not a one-to-one correspondence between metrics and geometries---there are many metrics which describe the same geometry. We're already used to being able to use different coordinate choices to describe the same physics: in Newtonian physics we're free to choose different origins and choices of axis for our coordinate system, or, with a little more work, to make one coordinate system move and rotate with respect to another. We can tell that two seemingly different situations are actually the same thing in different coordinate systems when the laws of physics are the same in both situations. In Newtonian physics, for example, acceleration is the same in all inertial frames, and Newton's laws of motion depend only on the acceleration: they have the property of ``Galilean invariance." Similarly, in special relativity, the laws of physics are invariant under Lorentz transformations; two observers may differ, for example, on what the strength of an electric or magnetic field is but they will agree on the appropriate invariant combinations.

So, to answer the question of whether two metrics describe the same geometry, we should compute invariant quantities and check whether they are the same in both situations. In GR, the invariant quantities are related to the causal structure: they are the proper distances \eqref{eq:arc_length} along geodesics. Two metrics which share the same causal structure are related by a \emph{diffeomorphism}. That is, in general a given spacetime can be described by multiple distinct metrics. We emphasize that the \emph{metrics} really are distinct; it's only by computing ``diff-invariant" quantities that we can check that they describe the same spacetime. 

Another way to phrase this is that the metric doesn't only contain physical information about a spacetime, it \emph{also} contains extra redundant information that doesn't matter to an observer. (Think of the choice of the origin and axes for a flat metric, for example.)  In high-energy physics one often refers to this information as ``gauge'' degrees of freedom. When we go from a metric to the physical quantities, i.e.\ the geodesics, we ``gauge out" these degrees of freedom so that only the physical ones remain. We say that two metrics are ``gauge-equivalent" if they're related by a gauge transformation, i.e.\ there is a diffeomorphism which takes one metric to the other. These two metrics are members of a ``gauge orbit", the equivalence class of all gauge-equivalent metrics. We ``gauge-fix" by specifying information to go to a subset of metrics in the equivalence class; if we specify so much that only one metric remains, we've totally fixed the gauge. A ``gauge'' is just another word for a measuring device; think of gauge-fixing as specifying the properties of this measuring device, i.e. giving enough information that two observers can agree on how to perform a measurement. In Newtonian physics, for example, we'd gauge-fix by fixing the direction of each coordinate axis, and a position and velocity for the origin of the coordinate system. None of that affects the physics, but if you want to check someone else's measurements you'll need that information.

However, it's important to point out that not all gauge transformations preserve all of the information we might call physical. The issue arises when we consider metrics for manifolds with \emph{boundaries}. It's useful to gain some intuition by first thinking about the equivalent case in electrodynamics. Recall that the behavior of charged particles and electromagnetic fields is governed by Maxwell's equations. However, just like the metric, electric and magnetic fields are not gauge-invariant, only certain combinations of them (like $E^2 + B^2$ are). 
Just like Einstein's equations, the Maxwell equations are also differential equations, and hence their solution in a region with a boundary depends on a choice of boundary conditions. We could solve for the behavior of the electromagnetic fields inside a conducting sphere, for example, or with a charged surface. The point is that these boundary conditions represent an additional set of physical information. In general, then, the set of gauge transformations which preserve all physical quantities will be \emph{smaller} than if we didn't worry about boundary conditions at all.

This same issue arises even when we're not placing boundary conditions at some particular region of space, but instead placing them ``at infinity." In this case there is a precise language used to talk about these types of boundary conditions. We ask whether the gauge transformation has any effect at infinity, or, equivalently, if it can be distinguished from the identity transformation in the limit that we go very far away from the origin of our coordinates. If it can't, we call the gauge transformation a ``small gauge transformation". If it can, we call the gauge transformation a ``large gauge transformation." And we have in mind that large gauge transformations are \emph{physical} while small gauge transformations are not. A large gauge transformation in electrodynamics can, for example, change the total charge a distant observer measures enclosed within some radius. A similar story holds in general relativity, but now the gauge transformations are diffeomorphisms applied to metrics. A small gauge transformation of the metric is one that takes $g_{\mu \nu}\rightarrow g_{\mu \nu}+\delta g_{\mu \nu}$, with $\lim_{x^\mu\rightarrow0} \delta g_{\mu \nu}(x^\mu)=0.$ A large gauge transformation can, for example, change the total invariant mass enclosed within some region.

One convenient way to gauge-fix in general relativity is to fix a direction of time everywhere on the spacetime, or equivalently identify points which are on ``the same spatial slice" at a given time. In four spacetime dimensions, this is referred to as a \emph{3 + 1 decomposition}. Geometrically, we can think of this as a foliation\footnote{It turns out that there are some manifolds where it isn't actually possible to do such a foliation, but none of these exotic spacetimes will be relevant for our purposes. For a review of this formalism, which is most important when solving Einstein's equations numerically, see \ref{gourgoulhon20123+}.} of the spacetime into spatial slices. 

Again, when the spacetime has a boundary (in a spacelike direction), some foliations will coincide on the boundary and others will not. It's only the foliations which look the same at the boundary which we think of as describing the same physics. As we'll discuss below, the RT formula applies to spacetimes that have a (spatial) boundary. Invariant quantities are those which are left unchanged by ``small diffeomorphism'', i.e.\ diffeomorphisms which leave the boundary unchanged. In particular, the invariant quantities of interest for the RT formula are the areas of extremal surfaces which end on the boundary. In $2+1$ dimensions, these are geodesics which extend between points on the boundary; in higher dimensions there are also extremal surfaces, volumes, etc. which touch the boundary.

In subsequent sections we will talk about not spacetimes, but \emph{Hilbert spaces} with gauge symmetries and redundancies. Although this type of gauging can be described independently of anything having to do with gravity, we will always have in mind that holographic error-correcting codes should indeed exhibit some version of the gauge symmetries we see in gravity. In particular, our codes will manifest a particular version of the observation that large gauge transformations describe physically distinct spacetimes: we will see that changing the gauge in the holographic code yields a different result when measuring with an ``area operator." See Appendix \ref{app:2x2bacon-shor} for a discussion of the Bacon-Shor code which is phrased in the language of gauges.

\subsection{Towards quantum gravity}

General relativity is a \emph{classical} theory. Just like Newton's laws govern the behavior of massive particles and extended objects moving, accelerating, and applying forces to each other, and Maxwell's electrodynamics govern the coupled behavior of charged objects and electromagnetic fields, Einstein's equations \eqref{eq:einstein} govern the coupled behavior of energy distributions and geometry. By ``govern the behavior", we really mean that given enough data to describe things at an initial time (the position and velocity of particles, the electric and magnetic fields everywhere, the stress-energy tensor and metric), we can use the theory to find a description at a later (or earlier) time.

Quantum mechanics is also a theory in this sense: given a Hamiltonian and an initial wave function, evolution is governed by the Schr\"odinger equation.
If we arrived at the theory by quantizing an initial theory, we can get back the classical quantities by applying the appropriate observables (i.e., Hermitian operators) to the wave function: for example, for the quantum mechanics of a point particle in a potential, position or momentum operators. For reasonable choices of Hamiltonian, the \emph{expectation value} of these observables will evolve smoothly--but when we measure the observable we project onto one of its eigenstates according to the Born rule, and only by repeatedly resetting the system, evolving, and measuring can we actually get access to the expectation value.

Finding a fully consistent theory of quantum gravity is beyond the scope of this review, to put it mildly, but we \emph{can} say some things confidently. In the quantum theory of a point particle, the classical observables (i.e., the observables which reduce to classical quantities in the classical limit)  are the operators which measure position, velocity, etc. The classical observables of a field theory are, similarly, the operators which measure field value and its derivatives. The classical observables of a gravitational theory, then, when applied to states corresponding to classical geometries, measure the metric, stress tensor, etc. So, at minimum, we expect the Einstein equations to hold in some classical limit. That is, the Einstein equations suggest a schematic operator equation
\begin{equation}
    \hat G_{\mu \nu}\: ``="\:8 \pi G_N \hat T_{\mu \nu},
\end{equation}
where the hat indicates that this is an operator expression, and we've put quote marks around the equality to emphasize that it's not really precise. What we really mean by this expression is that, for classical states $\ket{\Psi}$ which are simultaneous eigenstates of both operators,
\begin{equation}
    \hat G_{\mu \nu} \ket{\Psi} = G_{\mu \nu} \ket{\Psi} = 8 \pi G \hat T_{\mu \nu} \ket{\Psi} = 8 \pi G T_{\mu \nu} \ket{\Psi},
\end{equation}
which automatically implies as well that
\begin{equation}
    \left\langle\hat G_{\mu \nu}\right\rangle_\Psi = 8 \pi G \left\langle\hat T_{\mu \nu}\right\rangle_\Psi.
\end{equation}

Again, we emphasize that making these expressions precise is complicated in full quantum gravity. The abundant gauge freedoms in GR which we discussed in the previous subsection mean that the notion of a local operator is itself subtle, for example. Keeping this in mind, we can proceed to move gingerly away from exactly classical states (i.e., exact eigenstates of these operators), in two ways:

\begin{itemize}
    \item We can use perturbation theory to understand the result of measuring operators in states close to classical states. For example, if we have a massive system in some superposition of locations, we can see that the expectation value of the metric is that sourced by the average position of the mass, but that measuring quantities sensitive to the metric (for example, the motion of a test mass passing near the system) will project the wave function onto a state of definite metric (and so the particle will be seen (experimentally! \cite{PhysRevLett.47.979}) to follow a geodesic of this metric, emit gravitational waves quantized as gravitons, etc.). For a given classical geometry, we can use these sorts of techniques to work our way all the way up to the full machinery of quantum field theory in curved space. At the linear, perturbative level, the graviton enters as just another type of field. (To be clear, though, perturbation theory has its limits! We can't use this machinery to fully quantize gravity, which is famously impossible using just the machinery of quantum field theory.)
    \item We can use the linearity of quantum mechanics to discuss not only states close to particular geometries but \emph{superpositions} of distinct geometries.
\end{itemize}

It's important to emphasize that there's a major caveat with this second point: the linearity of quantum mechanics applies to states \emph{in a fixed Hilbert space}. Let's return to the basic example of the quantum-mechanical theory of a single particle in a potential. There's a position operator on this Hilbert space, and we understand how it acts both on eigenstates of position and on general states (because we can write a general state in the position basis). But this doesn't tell us how to act on states of a single particle in a different potential.

Actually, there are ways to give a sensible answer to this question. We could arrange for the particle to move in a given potential by coupling it to another set of degrees of freedom, an ``external field'', so that the potential is recovered for a fixed state of the fields. Then, in this new, larger Hilbert space, we now have a way to talk about a superposition of a particle in one potential with a particle in a different one. But doing so requires us to understand how to embed the original Hilbert space into the larger one. This (finally) is where holography comes in. You might sensibly worry that we could only ever measure the metric, stress-energy, geodesic distance, etc., on a Hilbert space describing states perturbatively close to a single geometry. But holography, as we'll describe next, does much better than this. 

And, to be clear, we have very good reasons to expect that quantum gravity does in fact require us to deal with superpositions of different geometries! As we discussed in the first bullet point above, we can imagine coupling the metric to a quantum degree of freedom--for example, arranging to move a test mass into one of two locations depending on the result of a projective measurement. Even without explicitly arranging for this ourselves, though, there are (at least) two places where nature as we understand it naturally creates superpositions. One is cosmology: in the early universe, quantum variance of an inflating field \cite{Guth:1980zm,Linde:1981mu,Albrecht:1982wi} could have been converted \cite{Polarski:1995jg,Lombardo:2005iz} during the Big Bang into superpositions of different classical configurations of matter, radiation, etc. which ultimately seeded the large-scale structure of the universe. Another is black hole evaporation: according to Hawking's famous calculation \cite{Hawking:1975vcx}, a geometry with a black hole in it can ultimately evolve into a superposition of many possible states which each contain no black hole but rather some collection of matter and radiation, which in turn can source distinct spatial geometries. So, if we want our theory of quantum gravity to describe any of these scenarios, we're going to have to be able to work in a Hilbert space that allows for superpositions of geometries.

\subsection{Holographic theories}

Holographic theories are ones in which a gravitational theory can be described using a different non-gravitational theory ``at the boundary.'' These theories implement the desired feature of the last subsection: we can use them to describe superpositions of states which describe distinct geometries. Unfortunately, in the best-understood examples of holography none of these geometries look anything like our universe: in particular, they have \emph{negative spacetime curvature}, meaning that at large distances the spacetime metric becomes hyperbolic (``asymptotically anti-de Sitter''). Our universe, as best as we can tell, looks like it has \emph{positive} spacetime curvature. So we can't just immediately interpret our universe as a particular state in a holographic Hilbert space. However, holographic theories are nevertheless worth studying not only because they have a nice, well-understood Hilbert space, but also because states in these theories that describe geometries, or superpositions of geometries, have a number of nice properties, not least the RT formula itself.

Like in the previous subsections, but perhaps even more so, our discussion here will only scratch the surface of what is by now a vast literature. Our goal will be to reach the RT formula and its interpretation, in particular, and along the way we will sometimes be heuristic (and we note in a few footnotes places where the main discussion has been imprecise). We recommend that interested readers looking for more comprehensive reviews on the aspects of holography most closely related to quantum information consult, for example, \cite{harlow2016jerusalem,harlow2018tasi} and references therein.

First, we'll say a bit more about how the known examples of holographic theories actually work. Then, finally, we'll be in a position to present the RT formula once and for all. After we do that, we'll take the time to introduce a few last concepts to which it will be useful to refer later in the paper: the geometric notions of the causal and entanglement wedge, and the properties of complementary recovery and radial commutativity.

\subsubsection{The holographic dictionary}

Let's be a little more precise by what it means to describe one theory using another. The fundamental objects in any quantum theory are states and observables. In the last subsection we described how to think about states describing quantum fields on top of a spacetime geometry, or a superposition of spacetime geometries. In particular, when a state describes (a spatial slice of) a classical spacetime geometry satisfying the Einstein equations, it is an eigenstate of certain operators, with eigenvalues given by diffeomorphism-invariant quantities like the length of a geodesic, area of an extremal surface, etc. Then we can compute the expectation values of these operators on states close to these classical ones, and the linearity of quantum mechanics then allows us to compute the expectations on superpositions of near-geometric states.

It was realized in the 1990s by string theorists \cite{maldacena1999large,Gubser:1998bc,Witten:1998qj,Aharony:1999ti} that, for states describing asymptotically anti-de Sitter geometries in $D$ dimensions, the expectation values of \emph{all} of these operators could instead be computed using operators in a non-gravitational $(D-1)$-dimensional theory. In particular, one major result of the known holographic correspondences is that there is a precise dictionary for matching operators in the gravitational theory inserted at points near the AdS boundary to operators in the ``boundary theory'', and a precise prescription \cite{Hamilton:2005ju,Hamilton:2006az,Skenderis:2008dg,Christodoulou:2016nej} for integrating over points on the boundary to reconstruct operators deeper into the bulk of the spacetime. For the purposes of this paper, we won't really need to know about the details of the boundary theory: just the entropies of reduced density matrices constructed from (some of) the states in the theory. However, it's worth mentioning two of their properties. 

First, this type of correspondence could only make sense if the boundary theory at least had the same symmetries as the symmetries of the gravitational theory \emph{at its boundary}. In the language of Subsection \ref{sub:diff} above, these are the small gauge symmetries of diffeorphisms that leave the boundary at spatial infinity unchanged. With a little bit of work, you could stare at a metric that describes hyperbolic space and figure this out---it turns out that the group of transformations that do this is the \emph{conformal group} of angle-preserving tranformations. And, accordingly, the boundary theories are \emph{conformal} field theories. Now you know the reason why another name for the holographic correspondence is ``the AdS/CFT correspondence''!

Second, hyperbolic (and spherical) metrics have a length scale, the ``anti-de Sitter length''. Einstein's equations \eqref{eq:einstein} \emph{also} have a length scale, the Planck scale, which can be derived (in a dimension-dependent way) from the Newton gravitational constant $G_N$. The ratio of these two length scales is a dimensionless number, which also appears in the conformal field theory on the boundary. In spacetimes that look classical, this ratio needs to be very large: it's the ratio between ``cosmological'' scales and ``quantum gravitational'' scales. Accordingly, boundary CFTs that can describe classical-looking geometries have a very large number of fields--they're often referred to as ``large-$N$ CFTs.'' And it's a fact about (non-free) conformal theories that the larger the number of fields, the more strongly the fields couple to each other. So, when used to describe classical geometries, the holographic correspondence relates gravity in asymptotally AdS spacetimes to the behavior of strongly-coupled conformal field theories.

For the purposes of this article we'll only care about fundamentally \emph{gravitational} observables like the lengths of geodesics, etc., whose expectation values on classical states we can compute knowing only the metric. But it's important to understand that this dictionary doesn't apply only to these, it also applies to any other diff-invariant observable built from the \emph{fields} in the theory--for example, the stress tensor on the right-hand side of the Einstein equations, or just the expectation value of a field at some point. 

So far, this might just seem like an interesting coincidence, but no more than that. After all, as discussed in the previous subsection, we already in principle know how to compute these observables for nearly-classical states: we write down a metric describing the geometry on the state, solve the Einstein equations to get the field configurations on top of the geometry, then perturb these field configurations slightly and see how this backreacts on the geometry using the operator form of the Einstein equations.

However, there are a few reasons the existence of a holographic description is exciting:

\begin{itemize}
    \item Sometimes we can compute quantities easily in the gravitational theory but not the non-gravitational theory, or vice-versa. The RT formula itself is an example of this: it's a straightforward mechanical task to compute the areas of extremal curves given the metric, but computing the entropy of a CFT state requires first writing down what the state is, already a nontrivial task, and then doing all the integrals to trace out the state of the fields outside the region of interest.
    \item As we discussed in the last subsection, we're free to compute the expectation values of states that \emph{aren't} nearly-classical. These states need not come anywhere near solving the classical Einstein equations, i.e. they might not be easily described as geometric at all! Yet they live in the same Hilbert space as all the nearly-classical states. We can think of these as \emph{bona fide} quantum-gravitational states! (In fact, historically the logic worked almost the other way around: the holographic dualities were used within string theory to exhibit examples of states that weren't ``stringy'' but had nearly-classical descriptions.)
\end{itemize}

\subsubsection{The RT formula}\label{sub:RT}

Now, at last, let's return to the versions of the RT formula we presented at the start of this section. First, the version \eqref{eq:classical_RT} that applies to a holographic state dual to a classical geometry:
\begin{align}
    S_A(\rho) = \frac{1}{4G_N} \min_{\gamma_{A}} Area(\gamma_{A}).
\end{align}
We've already understood the meaning of the right-hand side from previous sections. If we have a metric already, we can find the geodesics (or extremal surfaces) which hit the boundary at precisely the edge of the boundary region $A$. If there are multiple such geodesics, we choose the one with smallest area\footnote{\label{fn:homology}We mention one caveat for experts: if $A$ is the entire boundary, and $\rho$ is mixed, then it might seem like the RT formula leads us astray, because the boundary of $A$ is the empty set and so any geodesic which is a closed circle hits the boundary at the empty set, i.e. doesn't touch it at all. This puzzle was resolved by realizing that the spacetime dual to a thermal state is \emph{AdS-Schwarzschild}, i.e. hyperbolic spacetime with a black hole of the appropriate temperature sitting in the black hole. Then we get the correct result if we take the minimal surface to be the one which wraps around the horizon of the black hole. To get this, we need to impose a ``homology constraint'': the only geodesics which we consider in the minimization in \eqref{eq:classical_RT} are those which not only meet the boundary at the appropriate place, but those which can be continuously extended through the bulk to touch $A$ without crossing any holes or horizons in spacetime, i.e. those that are ``homologous to A.''}. On the left-hand side, the state $\rho$ lives in the Hilbert space of a large-N CFT, with $N$ related to the Newton constant $G_N$ as discussed above. In principle, we can perform the field-theoretic equivalent of tracing out a subregion, which involves integrating out the values of fields outside the region with appropriately chosen boundary conditions, then take a logarithm to find the entropy. In practice, the computation is usually done by holographers using slicker mathematical techniques to compute the entropy, for example computing $\text{Tr}_A(\rho^n)$ and then obtaining the entropy by taking a limit in $n$. Even so, it's very hard to carry out this procedure except in states close to certain highly symmetric states like the vacuum or a thermal state.

Now let's consider the more general RT formula \eqref{eq:full_RT} which applies to states with entanglement in the bulk and superpositions of geometries:
\begin{align}
    S_A(\rho) = S_{\text{bulk},A}(\rho) + \text{Tr}(L \rho).
\end{align}
$L$ is an operator in the quantum-gravitational Hilbert space. Its eigenstates are the states dual to classical geometries (with appropriate AdS length to be described by the particular CFT under consideration), and its eigenvalues are the areas of the minimal surfaces that meet the boundary subregion. However, roughly speaking\footnote{For a discussion of the limitations of this approach, and the circumstances under which the RT formula breaks down (essentially, when there are very many, exponential in $N^2$, terms in the superposition), see \cite{Almheiri:2016blp}.} there are many different ``field-theoretic'' states that live on the same curved spacetime. The bulk entanglement term identifies which of these states (or rather, which equivalence class of states with the same bulk entanglement inside the extermal surface) is described by $\rho$. Hence neither the left-hand nor the right-hand side of \eqref{eq:full_RT} is the expectation value of an observable, but the difference between the boundary and bulk entropies \emph{is}. Moreover, we can see that, as we will discuss below, $L$ is an operator which can be obtained given access either to the reduced state either in $A$ or to its complement $\bar A$.

\subsubsection{The causal and entanglement wedges}\label{sub:wedges}

Which bulk operators can we reconstruct given access to a particular boundary region? As discussed in Subsection \ref{sub:metrics} above, Lorentzian metrics have \emph{causal structure}, so we know that only those parts of the boundary that can send or receive a signal from the point or region can affect the value of an operator there. This is formalized by the notion of the \emph{causal wedge}, depicted in Figure \ref{fig:subregion}:

\begin{figure}[htb!] 
\centering
\includegraphics[height=8cm]{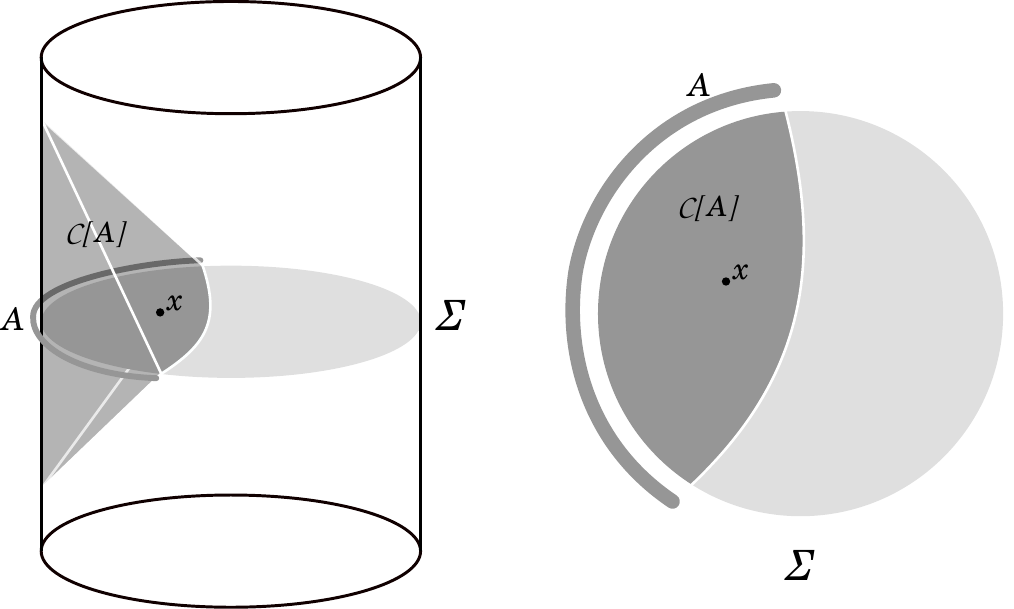}
\caption{(This is a slightly modified version of Figure 11 in~\cite{pastawski2015holographic} and its caption.) Bulk field reconstruction in the causal wedge.  On the left is a spacetime diagram, showing the full spacetime extent of the causal wedge $\mathcal{C}[A]$ associated with a boundary subregion $A$ that lies within a boundary time slice $\Sigma$. The point $x$ lies within $\mathcal{C}[A]$ and thus any operator at $x$ can be reconstructed on $A$.  On the right is a bulk time slice containing $x$ and $\Sigma$, which has a geometry similar to that of our tensor networks. The point $x$ can simultaneously lie in distinct causal wedges, so $\phi(x)$ has multiple representations in the CFT.}\label{fig:subregion}
\end{figure}

On a time-slice of the boundary theory, choose a spatial sub-region $A$.
The \emph{causal wedge} $\mathcal{C}[A]$ of $A$ is the bulk region bounded by (1) the boundary domain of dependence of $A$ (dark grey curve in Fig.~\ref{fig:subregion}) and (2) the set of bulk geodesics which start and end on (1). The casual wedge is determined by the domain of dependence of $A$, hence ``causal.''

Often, especially in static spacetimes, it's convenient not to work with the full causal wedge, but instead some particular spatial slice within it. If the RT surface is spacelike, then every spatial slice in the causal wedge ends on the RT surface itself, but they hit the boundary at different times. It's usually most convenient to choose a spatial slice which intersects the boundary at the region $A$ itself. In nice situations, for example if the spacetime is static, we can pick a spatial slice that extends between $A$ and its RT surface, which by causality contains all of the information necessary to reproduce the entire causal wedge. In this case, as shown on the right diagram in Figure \ref{fig:subregion}, we're free to draw diagrams which suppress the time direction entirely: compare to the figures in the Introduction, which similarly depict the situation at one particular time.

\begin{figure}[htb!] 
\centering
\includegraphics[height=7cm]{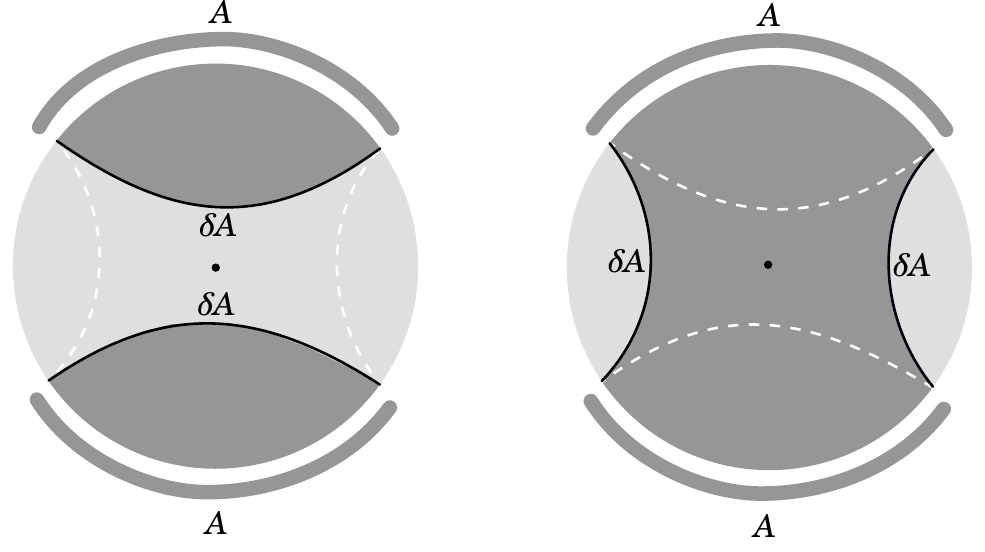}
\caption{(This is a slightly modified version of Figure 14 in~\cite{pastawski2015holographic} and its caption.) The intersection of the entanglement wedge $\mathcal{E}[A]$ with a bulk time-slice, in the case where $A$ has two connected components. Minimal geodesics in the bulk are solid lines.  
When $A$ is smaller than $A^c$, we have the situation on the left and the causal wedge agrees with the entanglement wedge.  When $A$ is bigger, however, the minimal geodesics switch and the entanglement wedge becomes larger. 
In particular the point in the center lies in the $\mathcal{E}[A]$ but not $\mathcal{C}[A]$.}\label{fig:wedges}
\end{figure}

However, we also know that given access to the entropy of a CFT subregion we can use the RT formula to compute the area of the relevant extremal surface. In fact, we expect that if we know not just the entropy but the full reduced density matrix, we can construct, e.g., the RT surface itself. And we can also use this information to construct the RT surfaces of smaller parts of the subregion, so we should be able to read off the metric everywhere in the portion of the bulk between the boundary region and the RT surface. This is formalized by the notion of the \emph{entanglement wedge}:

The \emph{entanglement wedge} $\mathcal{E}[A]$ is the domain of dependence of the bulk region bounded by (1) $A$ (dark grey curve in Fig.~\ref{fig:wedges}) and (2) the minimal extremal bulk surface homologous to $\partial A$ (i.e. the RT surface of $A$) (black curve in Fig.~\ref{fig:wedges}). 
The entanglement wedge is determined by the RT surface, which has area equal to the von Neumann entropy of the part of the boundary theory contained in $A$, hence ``entanglement.''

One can show, under the assumption the bulk theory describes a sensible gravitational spacetime, that the causal wedge is contained in the entanglement wedge \cite{Wall:2012uf,Headrick:2014cta}. Figure \ref{fig:wedges} depicts a situation in which the two wedges do not coincide. Note that, for a pure boundary state, the RT surface of $A$ can clearly seen to be the same as the RT surface of $\bar A$. That is, the operator $L$ which gives its area is both in the set of operators acting only on region $A$ of the boundary theory, and the set of operators acting only on region $\bar A$. But causality dictates that all operators in one of these two sets commute with all operators in the other of these sets. So, $L$ commutes with every operator acting on $A$: we say it's in the \emph{center} of the operators acting on $A$. One such operator is the identity. But, in general, when there is some gauge symmetry in the bulk theory, there will be elements in the center which are \emph{not} trivial, and the area operator will be one of these. So the nontriviality of the area operator tells us about the fact that the bulk is gravitational, and thus has diffeomorphism invariance! Thinking about the algebraic properties of bulk and boundary operators will be key to our approach in the rest of the paper; we'll review the concepts of operator algebras, centers, etc.\ in the next section.

\subsubsection{Complementary recovery and radial commutativity}\label{sub:radial}

Recall that the causal wedge tells us which bulk operators can be reconstructed given access to a boundary region $A$. As Figure \ref{fig:subregion} makes clear, if we divide the boundary into two regions, an operator acting at a particular bulk point must lie in the causal wedge of at least one of the regions, and it only lies in the causal wedge of both when the point is part of the RT surface. This is \emph{complementary recovery}: given a subregion we can reconstruct all operators inside its causal wedge but none of the operators outside it.

Now consider, instead of fixing the region $A$, what happens when we allow it to vary but still keep the bulk operator fixed. In general, we see that an operator lies inside the causal wedge of \emph{many} regions. So, if we had access to a boundary region large enough that many of its subregions could alone reconstruct the operator, knowledge of the operator is \emph{redundantly} encoded in the state: we don't need the full state on $A$ to reconstruct it, and there are many possible ways to reconstruct it. We say the state exhibits \emph{subregion duality}, in which many subregions can be used to reconstruct the same operator. Furthermore, if we erase a piece of the boundary that is much smaller than $A$, almost every bulk operator can still be exactly reconstructed. Historically, it was this sort of code-like redundancy which led to the consideration of holographic error correction.

The flip side of subregion duality is radial commutativity. We can see that, at least for non-pathological spacetimes, the RT surfaces of \emph{small} subregions don't extend deep into the bulk: we need large subregions to penetrate deep into the interior. If a bulk operator is outside of the causal wedge of a subregion, that means, by causality, that it commutes with every operator acting in the casual wedge subregion. In particular, it commutes with the operators that act on the boundary region itself. But every boundary operator acting on a point in the boundary lives inside the causal wedge of any boundary subregion containing that point; in particular, it lives inside the causal wedge of an arbitrarily small subregion around the point. Hence any bulk operator which lives away from the boundary must commute with \emph{all} boundary operators acting on single points in the boundary: this is the property of \emph{radial commutativity}. 

This might not seem to be a problem yet: an arbitrary operator in the boundary \emph{doesn't} act at a single point in the boundary, but at many points. However, field theories, and conformal field theories in particular, have an \emph{operator product expansion}: the product of operators acting at multiple points can be written as a sum of local operators acting only at a single point. And each of these local operators, by the argument above, commutes with the bulk operator! If we take this argument seriously, then, a bulk operator commutes with \emph{every} operator in the boundary theory. This seeming paradox was another motivation behind the introduction of error correction in holography---we only reached this conclusion because we treated bulk operators and boundary operators as acting on the same Hilbert space, but in fact the bulk Hilbert space of a given geometry, as we have seen, is much smaller and redundantly encoded into the CFT Hilbert space. So, in the language of this paper, the resolution can be stated simply: the bulk doesn't live ``inside the boundary,'' i.e. in the same space. Rather, as depicted in Figure \ref{fig:notation}, we must map the bulk into the boundary using an isometry $V$.
\section{Finite-dimensional von Neumann algebras}
\label{sec:vonNeumann}

In Section~\ref{sec:overview_HQEC}, we discussed how the appropriate language to analyse the entropy contributions arising from the bulk degrees of freedom is the one of von Neumann algebras.
In this section, we review some basic notions from the theory of von Neumann algebras (a special case of the more general $C^*$-algebras). Although it is common to study von Neumann algebras over infinite-dimensional Hilbert spaces (and it is in this case that they have proved most useful) we only consider the finite-dimensional case, which is the most relevant for our purposes. Unless otherwise specified, when we use the term von Neumann algebra we always refer to a von Neumann algebra over a finite-dimensional Hilbert space. The content of this section is mostly based on the presentation given in~\cite{harlow2017ryu} which in turn draws from the lecture notes of Jones~\cite{jones2003neumann}. 

An \emph{algebra} over a field is a set which is closed under scalar multiplication, addition and multiplication, and for which there exists a unit element. Von Neumann algebras are algebras of linear operators acting over a complex Hilbert space with the additional property of closure under complex conjugation. More specifically, we have that
\begin{definition}[von Neumann algebra]
Let $\mathcal{L}(\mathcal{H})$ be the set of linear operators over a finite-dimensional complex Hilbert space $\mathcal{H}$. A von Neumann algebra is a subset $\mathcal{M} \subseteq \mathcal{L}(\mathcal{H})$ which is closed under:
\begin{itemize}
    \item (addition) if $A, \, B \in \mathcal{M}$ then $A + B \in \mathcal{M}$;
    \item (multiplication) if $A, \, B \in \mathcal{M}$ then $AB \in \mathcal{M}$; 
    \item (scalar multiplication) if $A \in \mathcal{M}$  and $c \in \mathbb{C}$ then $cA \in \mathcal{M}$; 
    \item (complex conjugation) if $A \in \mathcal{M}$ then $A^\dagger \in \mathcal{M}$;
\end{itemize}
and for which there exists an element $I \in \mathcal{M}$ such that for every $A \in \mathcal{M}$ we have $IA = A$.
\end{definition}

From now on, whenever we use the term algebra we always assume that the algebra is a finite-dimensional von Neumann algebra (sometimes, when extra care is required, we still write the full name explicitly). 
We often define a von Neuman algebra through its generators using the following notation $\mathcal{M} = \langle A, B, \dots \rangle_{\mathrm{vN}}$, where the angle brackets denote the algebra generated by some operators $A, B, \dots$.
Note that the von Neumann algebra generated by a set of operators is different from the group generated by a set of operators as the latter does not have an addition and scalar multiplication operation.
Because in quantum error correction it is customary to use the angle bracket notation to define the group generated by a set of operators we chose to adopt the, bulkier, notation $\langle \dots \rangle_{\mathrm{vN}}$ for von Neumann algebras.
So, for example, $\langle Z \rangle_{vN}$ is the set of all 2 x 2 diagonal matrices ($X,Y,Z$ denote the Pauli matrices) and $\langle Z,X \rangle_{vN} = \mathcal{L}(\mathbb{C}^2)$. 

\begin{example}
The von Neumann algebra $\mathcal{M} = \langle ZII, IXI, IZI \rangle_{\mathrm{vN}}$ over $\mathcal{H}= \mathbb{C}^{8}$, where $X,Z$ are Pauli operators. 
\end{example}

There are three fundamental notions in the study of von Neumann algebras: commutant, center, and factor.

The commutant $\mathcal{M}^{\prime}$ is the set of operators which commute with every element of $\mathcal{M}$. The commutant itself forms a von Neumann algebra. 
\begin{definition}[commutant] Given a von Neumann algebra $\mathcal{M} \subseteq \mathcal{L}(\mathcal{H})$ the commutant is the set
\begin{equation}
    \mathcal{M}^{\prime} \equiv \left \{ B \in \mathcal{L}(\mathcal{H}) \mid \forall A \in \mathcal{M}: AB = BA \right \}.
\end{equation}
\end{definition}

Double commutation (or bicommutation) leaves a von Neumann algebra unvaried. This important property is known as the bicommutant theorem.
\begin{theorem}[bicommutant]
For every von Neumann algebra $\mathcal{M}\subseteq \mathcal{L}(\mathcal{H})$ we have that
\begin{equation}
    \mathcal{M}^{\prime \prime} \equiv (\mathcal{M}^{\prime})^\prime = \mathcal{M}.
\end{equation}
\end{theorem}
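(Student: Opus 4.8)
The plan is to prove the two inclusions $\mathcal{M} \subseteq \mathcal{M}''$ and $\mathcal{M}'' \subseteq \mathcal{M}$ separately, following von Neumann's classical double-commutant argument. The first inclusion is immediate: if $A \in \mathcal{M}$, then $A$ commutes with every $B \in \mathcal{M}'$ by the very definition of the commutant, so $A \in (\mathcal{M}')' = \mathcal{M}''$. The entire content of the theorem is therefore the reverse inclusion $\mathcal{M}'' \subseteq \mathcal{M}$, and this is where I would concentrate the work.

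For $\mathcal{M}'' \subseteq \mathcal{M}$ the strategy is to show that every $T \in \mathcal{M}''$ agrees with some element of $\mathcal{M}$ on any prescribed finite list of vectors, and then to exploit the fact that $\mathcal{M}$ is a closed subspace. It is here that the topological content enters and where finite-dimensionality does real work rather than being swept away: a finite-dimensional $*$-subalgebra of $\mathcal{L}(\mathcal{H})$ is automatically closed in the (coincident) norm and strong operator topologies, so membership in the strong-operator closure of $\mathcal{M}$ is the same as membership in $\mathcal{M}$. I would first treat a single vector $\xi \in \mathcal{H}$. Let $K = \overline{\mathcal{M}\xi}$ be the cyclic subspace generated by $\xi$ and let $P$ be the orthogonal projection onto $K$. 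The crucial claim is $P \in \mathcal{M}'$: since $\mathcal{M}$ is an algebra, $K$ is $\mathcal{M}$-invariant, and since $\mathcal{M}$ is closed under $\dagger$, the complement $K^\perp$ is $\mathcal{M}$-invariant too, so $P$ commutes with every element of $\mathcal{M}$. Because $T \in \mathcal{M}''$ and $P \in \mathcal{M}'$ we get $TP = PT$, and because $I \in \mathcal{M}$ we get $\xi = I\xi \in K$ and hence $P\xi = \xi$. Therefore $T\xi = TP\xi = PT\xi \in K = \overline{\mathcal{M}\xi}$, which says precisely that $T\xi$ is a limit of vectors $S\xi$ with $S \in \mathcal{M}$.

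The remaining step, and the one I expect to be the main obstacle, is upgrading this single-vector statement to a simultaneous approximation on finitely many vectors via the amplification trick. Consider the direct sum $\mathcal{H}^{(n)} = \mathcal{H} \oplus \cdots \oplus \mathcal{H}$ and the diagonal embedding $\mathcal{M}^{(n)} = \{\,S \oplus \cdots \oplus S : S \in \mathcal{M}\,\}$, which is again a von Neumann algebra in the sense of the paper. The key computation is to identify the commutant $(\mathcal{M}^{(n)})'$ with the algebra of $n \times n$ operator matrices having all entries in $\mathcal{M}'$; taking the commutant once more then shows $(\mathcal{M}^{(n)})'' = \{\,W \oplus \cdots \oplus W : W \in \mathcal{M}''\,\}$, so that $T \oplus \cdots \oplus T \in (\mathcal{M}^{(n)})''$ whenever $T \in \mathcal{M}''$. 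This matrix bookkeeping is the part most prone to slips, so I would set it up with care. Applying the single-vector result in $\mathcal{H}^{(n)}$ to $\xi_1 \oplus \cdots \oplus \xi_n$ then yields one $S \in \mathcal{M}$ with $S\xi_i = T\xi_i$ for all $i$ simultaneously (in finite dimensions $\mathcal{M}\xi$ is already closed, so the approximation is in fact exact). Choosing $\xi_1, \dots, \xi_n$ to be a basis of $\mathcal{H}$ then forces $S = T$, giving $T \in \mathcal{M}$ and completing the proof.
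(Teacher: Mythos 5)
Your proof is correct, but there is nothing in the paper to compare it against: the bicommutant theorem appears in Section~3 as a stated review fact, with no proof given (the section defers to \cite{harlow2017ryu} and the lecture notes \cite{jones2003neumann}). What you have written is the standard von Neumann double-commutant argument, correctly specialized to finite dimensions: the trivial inclusion $\mathcal{M}\subseteq\mathcal{M}''$; the cyclic-subspace projection $P$ onto $\overline{\mathcal{M}\xi}$, with $P\in\mathcal{M}'$ following from invariance of both $K$ and $K^{\perp}$ (the latter correctly using closure under $\dagger$); and the amplification to $\mathcal{H}^{(n)}$, where the identification of $(\mathcal{M}^{(n)})'$ with $n\times n$ operator matrices over $\mathcal{M}'$ and hence $(\mathcal{M}^{(n)})''=\{W\oplus\cdots\oplus W : W\in\mathcal{M}''\}$ (obtained by commuting against the matrix units $E_{kl}\otimes I$, which lie in the commutant since $I\in\mathcal{M}'$) all check out. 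Your observation that in finite dimensions $\mathcal{M}\xi$ is automatically closed, so the approximation becomes exact equality and choosing the $\xi_i$ to be a basis forces $S=T$, is exactly right. For a paper confined to finite dimensions there is an alternative route more in the spirit of its Section~3: establish the classification (Wedderburn) theorem first and verify $\mathcal{M}''=\mathcal{M}$ blockwise from the explicit forms of $\mathcal{M}$ and $\mathcal{M}'$. Your route is not shorter here, but it is the argument that survives in infinite dimensions (with strong-operator closure replacing finite-dimensionality), which is a genuine advantage.

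One point is worth making explicit. Your proof uses $I\in\mathcal{M}$ at the step $\xi=I\xi\in K$ (and implicitly to make $\mathcal{M}^{(n)}$ unital), and there $I$ must be the identity operator on $\mathcal{H}$, not merely a unit of the algebra. The paper's definition only demands an element $I\in\mathcal{M}$ with $IA=A$ for all $A\in\mathcal{M}$; under closure by $\dagger$ such a unit is a self-adjoint projection $e$ that may be proper, and then the theorem as literally stated fails: for $\mathcal{M}=\mathbb{C}P$ with $P$ a proper projection, $\mathcal{M}''=\mathbb{C}P\oplus\mathbb{C}(I-P)\supsetneq\mathcal{M}$. So your tacit assumption that the unit is $I_{\mathcal{H}}$ is precisely the hypothesis needed (and the one consistent with the standard convention and with all the paper's examples); it is a virtue of your write-up that it surfaces exactly where this hypothesis does work.
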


The center is the set of commuting elements of an algebra.
\begin{definition}[center] Given a von Neumann algebra $\mathcal{M} \subseteq \mathcal{L}(\mathcal{H})$ the center $Z_{\mathcal{M}}$ is the set
\begin{equation}
    Z_{\mathcal{M}}\equiv \mathcal{M} \cap \mathcal{M}^{\prime}.
\end{equation}
\end{definition}

Trivial centers (i.e. centers that are multiples of the identity) are known as factors.
\begin{definition}[factor] Let $c\in \mathbb{C}$. A von Neumann algebra $\mathcal{M} \subseteq \mathcal{L}(\mathcal{H})$ is a factor if its center satisfies
\begin{equation}
    Z_{\mathcal{M}} = \langle I\rangle_\text{vN} \equiv \{z I\hspace{1mm}|\hspace{1mm} z \in \mathbb{C}\}.
\end{equation}
\end{definition}

\subsection{Classification of von Neumann algebras}
\label{thm:Wedderburn}
The classification theorem shows that any von Neumann algebra can be decomposed as a direct sum of factors.
\begin{theorem}[classification theorem]
For every von Neumann algebra $\mathcal{M}$ on a finite-dimensional Hilbert space $\mathcal{H}$ there exists a block decomposition of the Hilbert space
\begin{equation}
    \mathcal{H} = \left[ \oplus_{\alpha}  \left( \mathcal{H}_{A_\alpha} \otimes \mathcal{H}_{\bar{A}_\alpha}  \right) \right] \oplus \mathcal{H}_0
\end{equation}
such that
\begin{align}
    \label{eq:Wedderburn} 
    \mathcal{M} = \left[ \oplus_{\alpha}  \left( \mathcal{L}(\mathcal{H}_{A_\alpha}) \otimes I_{\bar{A}_\alpha}  \right) \right] \oplus 0, \\
    \mathcal{M}^{\prime} = \left[  \oplus_{\alpha}  \left( I_{A_\alpha} \otimes \mathcal{L}(\mathcal{H}_{\bar{A}_\alpha} ) \right) \right] \oplus 0, \\
    Z_\mathcal{M} = \oplus_{\alpha}  \left( c_{\alpha} I_{A_\alpha} \otimes I_{\bar{A}_\alpha} \right),
\end{align}
where $\mathcal{H}_0$ is the null space and $0$ is the zero operator on $\mathcal{H}_0$. For simplicity, whenever we write a decomposition of an algebra (Hilbert space), we no longer write the direct sum with the null space (zero operator).
The decomposition in \eqref{eq:Wedderburn} is known as the Wedderburn decomposition.
\end{theorem}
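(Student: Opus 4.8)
The plan is to reduce the theorem to the case of a \emph{factor} by slicing along the center, and then to establish the tensor-product form in the factor case by constructing an explicit system of matrix units. Everything is finite-dimensional, so all the limiting/closure subtleties of general von Neumann algebra theory disappear and I can argue purely with projections and linear algebra.

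First I would dispose of the null space. The unit $I$ of $\mathcal{M}$ is a projection which need not equal the identity operator on all of $\mathcal{H}$; I set $\mathcal{H}_0$ to be the orthogonal complement of its range, so that every element of $\mathcal{M}$ annihilates $\mathcal{H}_0$, and I work on the range of $I$ from then on. Next I would use the center. Since $Z_{\mathcal{M}} = \mathcal{M}\cap\mathcal{M}'$ is a commutative von Neumann algebra in finite dimensions, it is spanned by its minimal nonzero projections $\{P_\alpha\}$, which are mutually orthogonal, central, and sum to $I$. Setting $\mathcal{H}_\alpha := P_\alpha \mathcal{H}$ gives the block decomposition, and because each $P_\alpha$ is central the algebra respects it: $\mathcal{M}=\oplus_\alpha \mathcal{M}_\alpha$ with $\mathcal{M}_\alpha := \mathcal{M}P_\alpha$ acting on $\mathcal{H}_\alpha$. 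The minimality of $P_\alpha$ inside $Z_{\mathcal{M}}$ forces the center of each $\mathcal{M}_\alpha$ to be $\mathbb{C}P_\alpha$, i.e. each $\mathcal{M}_\alpha$ is a factor. It therefore suffices to prove the theorem for a single factor $\mathcal{N}$ acting on a finite-dimensional space $\mathcal{K}$ with trivial center.

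For the factor case I would construct matrix units. By finite dimensionality I can choose a maximal family $e_1,\dots,e_n$ of mutually orthogonal projections in $\mathcal{N}$ that are \emph{minimal} (i.e. $e_i\mathcal{N}e_i = \mathbb{C}e_i$); maximality forces $\sum_i e_i = I$. The decisive step is that any two minimal projections of a factor are \emph{equivalent}: the two-sided ideal generated by $e_i$ contains a nonzero central projection, which must be $I$ because $\mathcal{N}$ is a factor, so $e_i\mathcal{N}e_j\neq 0$ for all $i,j$. Picking $0\neq x\in e_i\mathcal{N}e_j$ I observe $x^\dagger x\in e_j\mathcal{N}e_j=\mathbb{C}e_j$, so after rescaling $x$ becomes a partial isometry $e_{ij}$ with $e_{ij}^\dagger e_{ij}=e_j$ and $e_{ij}e_{ij}^\dagger=e_i$. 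These satisfy the matrix-unit relations $e_{ij}e_{kl}=\delta_{jk}e_{il}$ and $e_{ij}^\dagger=e_{ji}$, and using $e_1\mathcal{N}e_1=\mathbb{C}e_1$ one checks $e_i\mathcal{N}e_j=\mathbb{C}e_{ij}$, so the $e_{ij}$ span $\mathcal{N}$.

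Finally I would read off the tensor structure and the commutant. Writing $\mathcal{H}_{A}:=\mathbb{C}^n$ (indexed by the matrix units) and $\mathcal{H}_{\bar A}:=e_{1}\mathcal{K}$ (the range of one minimal projection), the map $\ket{i}\otimes\xi\mapsto e_{i1}\xi$ is a unitary $\mathcal{H}_A\otimes\mathcal{H}_{\bar A}\to\mathcal{K}$ under which $e_{ij}\mapsto\ket{i}\bra{j}\otimes I_{\bar A}$; hence $\mathcal{N}\cong\mathcal{L}(\mathcal{H}_A)\otimes I_{\bar A}$. A direct computation (or an appeal to the bicommutant theorem already stated above) then gives $\mathcal{N}'=I_A\otimes\mathcal{L}(\mathcal{H}_{\bar A})$ and $Z_{\mathcal{N}}=\mathbb{C}I$. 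Reassembling over $\alpha$ and restoring $\mathcal{H}_0$ yields the stated Wedderburn forms of $\mathcal{M}$, $\mathcal{M}'$, and $Z_{\mathcal{M}}$. The main obstacle is precisely the equivalence of minimal projections in a factor, and the substantive input is the central-support (two-sided ideal) argument that $e_i\mathcal{N}e_j\neq 0$ — the only place the factor hypothesis is essential. The remaining points, namely the existence of the maximal minimal family, the fact that the rescaled $x$ lies in $\mathcal{N}$, and that the $e_{ij}$ generate $\mathcal{N}$, are routine in finite dimensions and I would dispatch them briefly.
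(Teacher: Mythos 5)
The paper does not actually prove this theorem: it is stated as a standard structural fact, with the whole section attributed to the presentation in Harlow's paper and ultimately to Jones's lecture notes. So there is no in-paper argument to compare against; your proposal supplies the missing proof, and it is the standard one from finite-dimensional operator algebra theory --- split off the null space using the unit projection, decompose along the minimal projections of the center to reduce to factors, and exhibit each factor as a full matrix algebra tensored with an identity by building a system of matrix units from equivalent minimal projections. The logic is sound, and the one genuinely non-routine step, namely that $e_i\mathcal{N}e_j\neq 0$ for minimal projections in a factor via the central support of the two-sided ideal, is correctly identified and correctly argued. Two small points deserve a sentence each if you write this out in full. First, the matrix-unit relations $e_{ij}e_{kl}=\delta_{jk}e_{il}$ do not hold if you choose the partial isometries $e_{ij}$ independently for each pair $(i,j)$ (they would only hold up to phases); the standard fix is to choose only $e_{i1}$ for each $i$ and define $e_{ij}:=e_{i1}e_{1j}$ with $e_{1j}:=e_{j1}^\dagger$, and then verify the relations from $e_1\mathcal{N}e_1=\mathbb{C}e_1$. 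Second, your construction actually exposes an imprecision in the statement as printed: since every element of $\mathcal{M}$ annihilates $\mathcal{H}_0$, the commutant acts as \emph{all} of $\mathcal{L}(\mathcal{H}_0)$ on the null block, not as $0$, so the displayed formula for $\mathcal{M}'$ is only correct when $\mathcal{H}_0=0$ (equivalently, when the unit of $\mathcal{M}$ is the identity on $\mathcal{H}$), which is the convention tacitly adopted in the rest of the paper when the null summand is dropped.
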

Note that, in order to denote the different blocks in the sum, we adopt the heavy notation $\H_{A_\alpha}$---and not the simpler $\H_{\alpha}$---to ensure consistency with the notation of Lemma~\ref{lemma:factorization}. In that case, the letter $A$ is used to denote a partition of the Hilbert space. In this section the letter $A$ has no other meaning but to denote one of the two factors of a block.

We now proceed to give a series of examples of increasing generality of the classification theorem. We begin with the special case of a factor algebra over $\mathcal{H}$ that is equivalent to $\mathcal{L}(\mathcal{H})$.
\begin{example}
The von Neumann algebra over $\mathcal{H} =  \mathbb{C}^2$ with Wedderburn decomposition
\begin{equation}
    \mathcal{M} = \mathcal{L}(\mathbb{C}^2) \otimes 1 = \mathcal{L}(\mathbb{C}^2) =  \begin{bmatrix}
a & b \\ 
c & d
\end{bmatrix},
\end{equation}
where $a, \dots, d \in \mathbb{C}$, is a factor.

The commutant of the algebra is $\mathcal{M}^\prime = \langle I\rangle_\text{vN}$.
\end{example}

The following is an example of a factor algebra over $\mathcal{H}$ that is strictly contained in $\mathcal{L}(\mathcal{H})$.
\begin{example}
The von Neumann algebra over $\mathcal{H} =  \mathbb{C}^4$ with Wedderburn decomposition
\begin{equation}
    \mathcal{M} = \mathcal{L}(\mathbb{C}^2) \otimes I = \begin{bmatrix}
a & 0 & b & 0 \\ 
0 & a & 0 & b \\ 
c & 0 & d & 0 \\ 
0 & c & 0 & d
\end{bmatrix},
\end{equation}
where $a, \dots, d \in \mathbb{C}$,is a factor.

The commutant of the algebra is $\mathcal{M}^\prime = I \otimes \mathcal{L}(\mathbb{C}^2)$.
\end{example}

Finally, we give two examples of algebras that are not a factor. The first is a fully diagonal algebra (and, in the language of quantum mechanics, can be thought of describing a classical algebra of observables) while the second has a block diagonal structure (thus describing a quantum algebra of observables). For many more examples of von Neumann algebras, Weddernburn decompositions, and their relationship to coarse-graining and decoherence the interested reader can consult \cite{Kabernik:2019jko}.
\begin{example}
The von Neumann algebra over $\mathcal{H} =  \mathbb{C}^2$ generated by the Pauli $Z$ operator has the following Wedderburn decomposition
\begin{equation}
    \mathcal{M} = \langle Z \rangle_{vN} = \begin{bmatrix}
a & 0 \\ 
0 & b
\end{bmatrix}
=
\left[ \mathcal{L}(\mathbb{C})\otimes 1 \right] \oplus \left[ \mathcal{L}(\mathbb{C})\otimes 1 \right],
\end{equation}
where $a,b \in \mathbb{C}$.
\end{example}

\begin{example}
\label{example:Wedderburn}
The von Neumann algebra $\mathcal{M} = \langle ZII, IXI, IZI \rangle_{\mathrm{vN}}$ over $\mathcal{H}= \mathbb{C}^{8}$ has the following Wedderburn decomposition
\begin{equation}
\mathcal{M} = \bigoplus_{\alpha=0} ^1 \left (\mathcal{L}(\mathbb{C}^2) \otimes I \right) =
\begin{pmatrix}
  \begin{matrix}
a & 0 & b & 0 \\ 
0 & a & 0 & b \\ 
c & 0 & d & 0 \\ 
0 & c & 0 & d 
  \end{matrix}
  & \rvline & \bigzero \\
\hline
  \bigzero & \rvline &
  \begin{matrix}
e & 0 & f & 0\\ 
0 & e & 0 & f \\ 
g & 0 & h & 0\\ 
 0 & g & 0 & h
  \end{matrix}
\end{pmatrix},
\end{equation}
where $a, \dots, h \in \mathbb{C}$.
\end{example}

\subsection{Algebraic states and entropies}
\label{sec:algebraic_states}

A quantum state on a Hilbert space $\mathcal{H}$ is a Hermitian positive semi-definite operator $\rho \in \mathcal{L}(\mathcal{H})$ with $\operatorname{Tr}(\rho) =1$. 
Given a state $\rho$ and a Hermitian operator $O$ we can define the expectation value of the operator $O$ on $\rho$ as 
\begin{equation}
    \mathbb{E}_\rho (O) = \operatorname{Tr} (O \rho). 
\end{equation}
It is often the case that one is interested in computing expectation values of operators that form an algebra $\M$. A generic state $\rho$ is not necessarily an element of $\M$ and could contain more information than is needed to compute expectation of values of operators in $\M$. It is therefore useful to define the notion of an algebraic state---that is, the state that is ``visible'' from an algebra $\M$.
For an algebra $\M$ and quantum state $\rho$ we denote the respective algebraic state by $\rho_\M$. 
The following theorem shows that algebraic states are unique and that, for the purpose of computing expectation values of operators in $\M$, we can always replace $\rho$ by $\rho_\M$. That is, the algebraic state is a generalization of the reduced density matrix for an algebra which need not be a factor.
\begin{theorem}
\label{thm:expectations_vN}
Let $\mathcal{M}$ be a von Neumann algebra on $\mathcal{H}$ and let $\rho \in \mathcal{H}$ be a quantum state. Then, there exists a unique state $\rho_\mathcal{M} \in \mathcal{M}$ such that 
\begin{equation}
    \operatorname{Tr}(O \rho_\mathcal{M}) = \operatorname{Tr}(O \rho)
\end{equation}
for all $O \in \mathcal{M}$.
\end{theorem}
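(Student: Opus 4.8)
The plan is to prove the two assertions of the theorem separately: uniqueness is a short Hilbert--Schmidt argument that uses only that $\mathcal{M}$ is closed under $\dagger$, while for existence I would construct $\rho_{\mathcal{M}}$ explicitly from the Wedderburn decomposition supplied by the classification theorem, which has the advantage of making positivity manifest. The conceptually cleanest description of $\rho_{\mathcal{M}}$ is as the orthogonal projection of $\rho$ onto $\mathcal{M}$ in the Hilbert--Schmidt inner product $\langle A,B\rangle = \operatorname{Tr}(A^\dagger B)$, equivalently the trace-preserving conditional expectation onto $\mathcal{M}$; but since it is not obvious from that description that the projection of a state is again a state, I would run the explicit block construction to settle positivity.

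For uniqueness, suppose $\rho_{\mathcal{M}}$ and $\sigma_{\mathcal{M}}$ both lie in $\mathcal{M}$ and reproduce the expectation values. Their difference $D := \rho_{\mathcal{M}} - \sigma_{\mathcal{M}}$ lies in $\mathcal{M}$ (closure under addition and scalar multiplication) and satisfies $\operatorname{Tr}(O D)=0$ for every $O\in\mathcal{M}$. Because a von Neumann algebra is closed under complex conjugation, $D^\dagger \in \mathcal{M}$, so I may take $O = D^\dagger$ to get $\operatorname{Tr}(D^\dagger D)=0$. Since $\operatorname{Tr}(D^\dagger D) = \lVert D\rVert_{\mathrm{HS}}^2$ is the squared Hilbert--Schmidt norm, this forces $D=0$, hence $\rho_{\mathcal{M}}=\sigma_{\mathcal{M}}$.

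For existence I would invoke the classification theorem to write $\mathcal{H} = \bigoplus_\alpha (\mathcal{H}_{A_\alpha}\otimes\mathcal{H}_{\bar A_\alpha})$ and $\mathcal{M} = \bigoplus_\alpha (\mathcal{L}(\mathcal{H}_{A_\alpha})\otimes I_{\bar A_\alpha})$. Writing $P_\alpha$ for the projection onto the $\alpha$-th block, I set $\rho_\alpha := P_\alpha \rho P_\alpha$ and $\sigma_\alpha := \operatorname{Tr}_{\bar A_\alpha}(\rho_\alpha) \in \mathcal{L}(\mathcal{H}_{A_\alpha})$, and define, with $d_{\bar A_\alpha} := \dim \mathcal{H}_{\bar A_\alpha}$,
\begin{equation}
    \rho_{\mathcal{M}} := \bigoplus_\alpha \left( \sigma_\alpha \otimes \frac{I_{\bar A_\alpha}}{d_{\bar A_\alpha}} \right).
\end{equation}
This lies in $\mathcal{M}$ because each summand is $\big(\tfrac{1}{d_{\bar A_\alpha}}\sigma_\alpha\big)\otimes I_{\bar A_\alpha}$. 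To verify the defining property, I note that any $O=\bigoplus_\alpha(O_\alpha\otimes I_{\bar A_\alpha})\in\mathcal{M}$ is block diagonal, so only the diagonal blocks of $\rho$ contribute, giving $\operatorname{Tr}(O\rho)=\sum_\alpha \operatorname{Tr}_{A_\alpha}(O_\alpha\sigma_\alpha)$; the same computation against $\rho_{\mathcal{M}}$ produces the factor $\operatorname{Tr}(I_{\bar A_\alpha})/d_{\bar A_\alpha}=1$ in each block, so $\operatorname{Tr}(O\rho_{\mathcal{M}})=\sum_\alpha\operatorname{Tr}_{A_\alpha}(O_\alpha\sigma_\alpha)=\operatorname{Tr}(O\rho)$.

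The step I expect to be the real content---the main obstacle, if one insists on the projection-theoretic definition of $\rho_{\mathcal{M}}$---is checking that $\rho_{\mathcal{M}}$ is a genuine quantum state rather than merely an expectation-reproducing operator. The block construction handles this essentially for free: each $\sigma_\alpha$ is a partial trace of the positive semidefinite operator $\rho_\alpha = P_\alpha\rho P_\alpha$ and is therefore positive semidefinite, so $\sigma_\alpha\otimes I_{\bar A_\alpha}/d_{\bar A_\alpha}\geq 0$ and hence $\rho_{\mathcal{M}}\geq 0$; Hermiticity follows in the same way from that of $\rho$; and normalization follows since $\operatorname{Tr}(\rho_{\mathcal{M}})=\sum_\alpha \operatorname{Tr}(\sigma_\alpha)=\sum_\alpha\operatorname{Tr}(\rho_\alpha)=\operatorname{Tr}(\rho)=1$, using that the diagonal blocks of $\rho$ account for its full trace. (Normalization can alternatively be seen directly from the defining property by taking $O=I\in\mathcal{M}$.) Together with the uniqueness argument, this establishes the theorem.
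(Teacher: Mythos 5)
Your proposal is correct and matches the paper's treatment: the paper states the theorem without a formal proof but immediately exhibits the same block construction $\rho_\mathcal{M} = \oplus_\alpha \left(p_\alpha \rho_{A_\alpha} \otimes I_{\bar A_\alpha}/|I_{\bar A_\alpha}|\right)$, where $p_\alpha \rho_{A_\alpha} = \operatorname{Tr}_{\bar A_\alpha}(\rho[\alpha])$ is exactly your $\sigma_\alpha$. Your Hilbert--Schmidt uniqueness argument and the explicit positivity/normalization checks are correct supplements to what the paper leaves implicit.
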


For an algebra $\M$ and state $\rho$ it is possible to write an explicit formula for the algebraic state $\rho_\M$.
Recall that by Theorem~\ref{thm:Wedderburn} there exists a decomposition of the Hilbert space
\begin{equation}
\label{eq:Hdec}
    \mathcal{H} = \oplus_{\alpha}  \left( \mathcal{H}_{A_\alpha} \otimes \mathcal{H}_{\bar{A}_\alpha}  \right),
\end{equation}
in terms of which we can write the Wedderburn decomposition of the algebra 
\begin{equation}
    \mathcal{M} = \oplus_{\alpha}  \left( \mathcal{L}(\mathcal{H}_{A_\alpha}) \otimes I_{\bar{A}_\alpha}  \right).
\end{equation}
Let $\{\ket{\alpha,i,j}\}$ be an orthonormal basis for $\H_{A_\alpha} \otimes \H_{\bar{A}_\alpha} $ (a block in the decomposition) that is ``compatible with $\M$'', that is, the $\alpha$ enumerates the diagonal blocks and within each block we have $\ket{\alpha,i,j} = \ket{i_\alpha}_{A_{\alpha}} \otimes \ket{j_\alpha}_{{\bar A_\alpha}}$ where $\{\ket{i_\alpha}_{A_{\alpha}} \}$ and $\{ \ket{j_\alpha}_{\bar A_\alpha} \}$ are orthonormal bases for $\H_{A_\alpha}$ and $\H_{\bar A_\alpha}$ respectively.
Any state $\rho$ can be written in terms of the Hilbert space decomposition of \eqref{eq:Hdec} as
\begin{equation}
\rho = \sum_{\alpha,\alpha'}\sum_{i,j}\sum_{i',j'} \rho[\alpha,\alpha']_{i,j,i',j'} \ket{\alpha,i,j}\bra{\alpha',i',j'},
\end{equation}
where $\{\ket{\alpha,i,k} \}$ is a basis for the $\alpha$-block.
Because for the purpose of computing expectation values of elements of $\M$ only the blocks that are diagonal in $\alpha$ will give non-zero contributions we have that $\rho[\alpha,\alpha'] = 0$ for all $\alpha \neq \alpha'$.
For computational purposes, it is then  useful to define the blocks of $\rho$ that are diagonal in $\alpha$ as 
\begin{equation}
    \rho_{A_\alpha} \equiv \frac{1}{p_\alpha} \operatorname{Tr}_{\bar A_\alpha} (\rho[\alpha]),
\end{equation}
where $p_\alpha \equiv \sum_{i,j} \rho [\alpha, \alpha] _{i,j,i,j}$ is a positive normalisation constant such that $\operatorname{Tr}_{ A_\alpha} (\rho_{A_\alpha}) = 1$ and $\rho[\alpha] \equiv \rho[\alpha, \alpha]$ is the part of $\rho$ which is in the $\alpha$-block.
Using this notation we can write the algebraic state $\rho_\M$ as
\begin{equation}
\label{eq:algebraic_state}
    \rho_\M \equiv \oplus_\alpha \left( p_\alpha \rho_{A_\alpha} \otimes \frac{I_{\bar{A}_\alpha}}{|I_{\bar{A}_\alpha|}}\right).
\end{equation}

From \eqref{eq:algebraic_state} we can see that when $\M$ is a factor the von Neumann entropy of the algebraic state is equivalent to the von Neumann entropy of the reduced state $\rho_A = \operatorname{Tr}_{\bar A} (\rho)$.
This suggests the following generalisation of the von Neumann entropy for a general quantum state $\rho$ and an arbitrary algebra $\M$:
\begin{definition}\textbf{Algebraic entropy.}
Let $\rho$ be a state on an arbitrary von Neumann algebra $\mathcal{M}$. The algebraic entropy of $\rho$ with respect to $\mathcal{M}$ is
\begin{equation}
\label{eq:algebraic_entropy} 
S(\rho, \mathcal{M}) \equiv-\sum_{\alpha} \operatorname{Tr}_{A_{\alpha}}\left(p_{\alpha} \rho_{A_{\alpha}} \log \left(p_{\alpha} \rho_{A_{\alpha}}\right)\right)=-\sum_{\alpha} p_{\alpha} \log p_{\alpha}+\sum_{\alpha} p_{\alpha} S\left(\rho_{A_{\alpha}}\right), 
\end{equation}
where $S\left(\rho_{A_{\alpha}}\right) \equiv - \operatorname{Tr}_A (\rho_A \log \rho_{A_\alpha})$ is the von Neumann entropy of the reduced state $\rho_{A_\alpha}$. 
\end{definition}
Note that when $\mathcal{M}$ is a factor the algebraic entropy reduces to the standard von Neumann entropy (i.e. the classical term $-\sum_{\alpha} p_{\alpha} \log p_{\alpha}$ vanishes).

The definition in \eqref{eq:algebraic_entropy} has two terms: a \emph{classical} term arising from the uncertainty over which block of the Wedderburn decomposition the state is in, and a \emph{quantum} term associated to the standard von Neumann entropies over the blocks.

\begin{example}
Consider the von Neumann algebra of Example~\ref{example:Wedderburn}. 
The algebra has two diagonal blocks denoted by $\alpha = 0,1$.
Consider the $3$-qubit GHZ state $\ket{\Psi} = 2^{-1/2} (\ket{000} + \ket{111})$. We have that
\begin{equation}
    \rho_{A_0} = \begin{bmatrix}
1 & 0 \\ 
0 & 0
\end{bmatrix},
\quad
\rho_{A_1} = \begin{bmatrix}
0 & 0 \\ 
0 & 1
\end{bmatrix}
\end{equation}
and $p_0 = 1/2$, $p_1 = 1/2$.
The algebraic entropy of the state is
\begin{equation}
    S(\ket{\Psi}\bra{\Psi}, \mathcal{M}) = 1.
\end{equation}
\end{example}

\section{Complementary recovery and the RT formula}
\label{sec:complementary_recovery}

In this section we define several holographic properties of quantum error-correcting codes and establish some relationships between them. The main goal is the RT formula, which is a remarkable relationship between the entropy of a subregion of the boundary $A$, called $S_A$, as well as the entropy $S_{\text{bulk},A}$ of the bulk degrees of freedom visible from $A$:
\begin{align}
    S_A(\rho) = S_{\text{bulk},A}(\rho) + \text{Tr}(L \rho).
\end{align}
For a holographic quantum error-correcting code, the above holds for any encoded state $\rho$.

Such a relationship imposes a lot of structure on the family of states in the code: entropies are non-linear functions of $\rho$, whereas the rightmost term $\text{Tr}(L \rho)$ is a linear function. If $\rho$ is pure, we can intuitively think of $S_A$ as the entanglement entropy of the state encoded into physical qubits, whereas $S_{\text{bulk},A}$ is like the entanglement entropy of the underlying logical state. Rearranging the equation to $S_A(\rho) - S_\text{bulk}(\rho) = \text{Tr}(L \rho)$, we can see that this is essentially saying that the extra entropy added by the encoding process is linear in $\rho$. Furthermore, the amount of entropy is added is an observable: a hermitian `area operator' $L$.

While such a structured relationship might seem very rare, we find that there is actually a fairly simple and natural property that implies it: complementary recovery. This property demands a certain symmetry of the error-correcting code across a bipartition $A$-$\bar A$ of the physical Hilbert space. The errors correctable given only access to $A$ are exactly those that commute with the ones correctable only from subregion $\bar A$. This symmetry is present in many quantum error-correcting codes, such as stabilizer codes (See Lemma~\ref{lemma:stabm}). Surprisingly, it immediately implies an RT formula!

\subsection{Complementary recovery}

We begin with a discussion of complementary recovery and its relationship to quantum error correction. A quantum error-correcting code can be thought of as a subspace $\H_\text{code}$ of the physical Hilbert space $\H$. However, in this discussion as well as in the next section we will find it more convenient to work with a `logical space' $\H_L$ with the same dimension as $\H_\text{code}$, which is thought of as separate from $\H$. Then, an `encoding isometry' $V: \H_L \to \H$ takes logical states and encodes them in the physical Hilbert space. The image of $V$ is $\H_\text{code}$. Intuitively one can think of this as fixing a basis for the code space, since $\H_\text{code}$ is invariant under a basis change $V \to VU_L$ for some unitary $U_L$ on $\H_L$. While different from the approach of other literature, this view has two advantages. First, it makes the notion of a commutant of a von Neumann algebra in $H_L$ a little easier to understand. Second, we find that when giving explicit examples in the next section it is easier to write down $V$ rather than $\H_\text{code}$.

Above we have been speaking of holographic properties of a code, defined by its encoding isometry $V$. However, two other quantities are important for an RT formula: the subregion $A$ that determines the entropy $S_A$, and the visible bulk degrees of freedom that determine the entropy $S_{\text{bulk},A}$. What degrees of freedom are visible is denoted by a von Neumann algebra $\M$. Clearly, $(V,A,\M)$ are interrelated, so we establish the following vocabulary:

\begin{definition} Say $V : \H_L \to \H$ is an encoding isometry $V$ for some quantum error-correcting code, and $A$ is a subregion of $\H$ inducing the factorization $\H = \H_A \otimes \H_{\bar A}$. A von Neumann algebra $\M \subseteq \mathcal{L}(\H_L)$ is said to be:
    \begin{itemize}
        \item \textbf{correctable} from $A$ with respect to $V$ if $\M \subseteq V^\dagger (\mathcal{L}(\H_A)\otimes I_{\bar A})V$. That is: for every $O_L \in \M$ there exists an $O_A \in \mathcal{L}(\H_A)$ such that $O_L = V^\dagger (O_A \otimes I_{\bar A}) V$.
        \item \textbf{private} from $A$ with respect to $V$ if $V^\dagger (\mathcal{L}(\H_A)\otimes I_{\bar A})V \subseteq  \M'$. That is: for every $O_A \in \mathcal{L}(\H_A)$ it is the case that $ V^\dagger (O_A \otimes I_{\bar A}) V$ commutes with every operator in $\M$.
    \end{itemize}
\end{definition}

If $\M$ is correctable, then it is a set of logical operators that can be performed on the encoded state given access to only the subregion $A$. A hermitian element in $\M$ then corresponds to an observable on the logical Hilbert space that could be measured from $A$, so, intuitively, $\M$ tells us about what parts of the logical state are recoverable from $A$. Conversely, if $\M$ is private then the observables in $\M$ tell us what parts of $\rho$ are invisible from $A$.

The notion of correctability is central to complementary recovery: a von Neumann algebra $\M$ exhibits complementary recovery if it can be corrected from $A$, and its commutant $\M'$ can be corrected from $\bar A$.

    \begin{definition} A code with encoding isometry $V:\H_L\to \H$, a subregion of the physical Hilbert space $A$ and a von Neumann algebra $\M \subseteq \mathcal{L}(\H_L)$, together $(V,A,\M)$, exhibit \textbf{complementary recovery} if:
    \begin{itemize}
        \item $\M$ is correctable from $A$ with respect to $V$: $\M \subseteq V^\dagger (\mathcal{L}(\H_A)\otimes I_{\bar A})V$,
        \item $\M'$ is correctable from $\bar A$ with respect to $V$: $\M' \subseteq V^\dagger (I_A \otimes \mathcal{L}(\H_{\bar A}))V$.
    \end{itemize}
\end{definition}

So far, there do not appear to be very many restrictions on the von Neumann algebra $\M$. In particular if $\mathcal{N}$ is a subalgebra of $\M$, and $\M$ is correctable, then $\mathcal{N}$ is correctable as well. It would thus seem plausible that if $(V,A,\M)$ has complementary recovery, then so does $(V,A,\mathcal{N})$, so there are multiple von Neumann algebras with complementary recovery. However, we will find that complementary recovery is actually so restrictive on $\M$ that it determines it uniquely, and subalgebras of $\M$ do not have complementary recovery.

This is important because the von Neumann algebra plays a key role in the RT formula: it tells us how to concretely define `the entropy of bulk degrees of freedom visible from $A$' via an algebraic entropy $S(\M,A)$. For this to make sense, $\M$ must be completely determined by the isometry $V$ and the subregion $A$.

To prove this result we cite a helpful lemma from the quantum error correction literature.

\begin{lemma} \label{lemma:corrpriv} \textbf{Correctable from $A$ $\leftrightarrow$ private from $\bar A$.} A von Neumann algebra $\M$ is correctable from $A$ with respect to $V$ if and only if $\M$ is private from $\bar A$ with respect to $V$. 
\end{lemma}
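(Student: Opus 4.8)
The plan is to recognise this lemma as the complementarity between correctable and private algebras from operator-algebra quantum error correction, specialised to the erasure channel. First I would phrase everything in terms of the two complementary channels induced by $V$. Let $\mathcal{E}_A(\rho_L) = \operatorname{Tr}_{\bar A}(V\rho_L V^\dagger)$ be the channel that encodes and then erases $\bar A$, with Kraus operators $E_a = (I_A \otimes \bra{a}_{\bar A})V : \H_L \to \H_A$ for an orthonormal basis $\{\ket{a}\}$ of $\H_{\bar A}$. Its Heisenberg adjoint is $\mathcal{E}_A^\dagger(O_A) = \sum_a E_a^\dagger O_A E_a = V^\dagger(O_A \otimes I_{\bar A})V$, so the set $V^\dagger(\mathcal{L}(\H_A)\otimes I_{\bar A})V$ appearing in ``correctable from $A$'' is exactly the range of $\mathcal{E}_A^\dagger$. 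The key observation is that the overlap operators satisfy $E_a^\dagger E_b = V^\dagger(I_A \otimes \ket{a}\bra{b}_{\bar A})V$, and as $a,b$ range over the basis these span $V^\dagger(I_A \otimes \mathcal{L}(\H_{\bar A}))V$. Hence ``$\M$ is private from $\bar A$'' is precisely the generalised Knill--Laflamme condition $E_a^\dagger E_b \in \M'$ for all $a,b$, and the lemma reduces to the statement that $\M$ is correctable under $\mathcal{E}_A$ if and only if its Kraus overlaps lie in $\M'$.

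For the direction ``correctable $\Rightarrow$ private'' the easy half is that operational correctability gives range containment: if a recovery channel $\mathcal{R}$ has $\mathcal{E}_A^\dagger \circ \mathcal{R}^\dagger$ acting as the identity on $\M$, then every $O_L \in \M$ lies in $\operatorname{range}(\mathcal{E}_A^\dagger)$ immediately. The substantive content is to turn the range condition into commutation of $\M$ with every $E_a^\dagger E_b$. Writing $O_L = V^\dagger(O_A \otimes I)V$ and $Y = V^\dagger(I \otimes O_{\bar A})V$, the commutator $[O_L,Y]$ fails to vanish only because of the code projector $P = VV^\dagger \neq I$ that sits sandwiched between the two factors. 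I would remove this obstruction by showing that correctability of the whole algebra $\M$ forces, for each $O_L$, an $A$-local representative $O_A$ whose amplification $O_A \otimes I$ lies in the multiplicative domain of $\mathrm{Ad}_{V^\dagger}$, equivalently commutes with $P$; once $[O_A\otimes I, P]=0$ the middle $P$ is absorbed using $PV=V$ and $V^\dagger P = V^\dagger$, giving $O_L Y = V^\dagger(O_A \otimes O_{\bar A})V = YO_L$ since the two tensor factors commute.

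For the converse ``private $\Rightarrow$ correctable'' I would run the standard recovery construction of operator-algebra error correction: assuming $E_a^\dagger E_b \in \M'$, decompose $\M$ and $\M'$ by the Wedderburn theorem (Theorem~\ref{thm:Wedderburn}) and build an explicit recovery channel from the polar decompositions of the $E_a P$ block by block. Operational correctability then yields $\M \subseteq V^\dagger(\mathcal{L}(\H_A)\otimes I_{\bar A})V$ by the easy half above. Both halves are instances of the complementarity theorem for private and correctable subsystems reviewed in Appendix~\ref{app:privacy}, so in practice I would quote that theorem (in the forms of \cite{beny2007quantum, beny2007generalization, crann2016private, kribs2018quantum}) once the dictionary above is in place.

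I expect the main obstacle to be exactly the reconciliation of the paper's concise range-theoretic definition of correctability with the commutation statement, that is, upgrading ``every element of $\M$ is recoverable as an expectation value on $A$'' to ``$\M$ is recoverable as an algebra, with representatives respecting the code projector.'' This is where the hypothesis that $\M$ is closed under multiplication and adjoint is essential: a set of individually correctable observables need not be privately complementary, and it is the algebra structure, funnelled through the multiplicative domain of $\mathrm{Ad}_{V^\dagger}$ (equivalently, through the generalised Knill--Laflamme conditions), that supplies the missing rigidity.
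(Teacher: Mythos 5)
Your proposal is correct and takes essentially the same route as the paper: the paper proves Lemma~\ref{lemma:corrpriv} simply by specialising the correctable/private complementarity theorem (Theorem~\ref{thm:correctable-private}, from \cite{crann2016private,kribs2018quantum}) to the erasure channel on $\bar A$, which is exactly the reduction your Kraus-operator dictionary sets up before you, too, invoke that theorem. The one issue you flag as the main obstacle---reconciling the paper's range-containment definition of correctability with the operational definition appearing in Theorem~\ref{thm:correctable-private}---is left implicit in the paper's own proof as well, so your multiplicative-domain sketch for closing it is additional detail rather than a departure.
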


This lemma establishes the complementarity of correctable and private algebras for the case of erasure errors (the correctability from $A$ implies that $\bar A$ has been erased).
Informally, a subsystem $B$ of a Hilbert space $\mathcal{H} = A \otimes B$ is private if it completely decoheres after the action of a channel.

The lemma follows from the more general Theorem~\ref{thm:correctable-private} (which we present in Appendix~\ref{app:privacy}) that applies to the case of general error channels (to get the lemma simply consider $\mathcal{E}$ to be the erasure channel for the subsystem $\bar A$ and $P$ a projection onto the code subspace).

The complementarity theorem was first proven for the case of factor algebras \cite{kretschmann2008complementarity} and then extended to general, infinite dimensional, von Neumann algebras~\cite{crann2016private}.

Now we are ready to demonstrate that $\M$ is unique, provided it exists at all. The theorem below also shows an easy way to calculate $\M$ as well as a simple criterion for its existence. The proof relies on the fact that privacy of $\M'$ is defined as a statement that is a bit like an `upper bound version' of correctability of $\M$: it demands that the set of correctable operators lies in $\M$, rather than that $\M$ is correctable. By sandwiching together correctability of $\M$ and privacy of $\M'$ we fix what $\M$ must be.

\begin{theorem}\label{thm:whatalgebra} \textbf{Uniqueness of the Neumann algebra.} Say $V$ is an encoding isometry and say $A$ is a subregion. Let $\M := V^\dagger (\mathcal{L}(\H_A)\otimes I_{\bar A})V$ be the image of operators on $\H_A$ projected onto $\H_L$. If $\M$ is a von Neumann algebra (that is, it is closed under multiplication), then it is the unique von Neumann algebra satisfying complementary recovery with $V$ and $A$. If it is not, then no von Neumann algebra satisfying complementary recovery exists.
\end{theorem}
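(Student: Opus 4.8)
The plan is to exploit Lemma~\ref{lemma:corrpriv} together with the bicommutant theorem to squeeze any candidate algebra between two inclusions that pin it down to $\M := V^\dagger(\mathcal{L}(\H_A)\otimes I_{\bar A})V$. First I would observe that $\M$ is automatically closed under addition, scalar multiplication and $\dagger$, and contains the identity because $V^\dagger V = I_{\H_L}$. The one axiom of a von Neumann algebra that can genuinely fail is closure under multiplication: since $VV^\dagger$ is a projection onto the code space rather than the identity, the product $V^\dagger(O_A\otimes I_{\bar A})V\,V^\dagger(O_A'\otimes I_{\bar A})V$ need not collapse back into the stated form. This is exactly why ``$\M$ closed under multiplication'' is the only nontrivial hypothesis, and it explains the dichotomy in the statement.

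The core step is to show that every von Neumann algebra $\mathcal{N}$ enjoying complementary recovery with $V$ and $A$ must equal $\M$. On one side, correctability of $\mathcal{N}$ from $A$ is by definition the inclusion $\mathcal{N}\subseteq \M$. On the other side, complementary recovery also asserts that $\mathcal{N}'$ is correctable from $\bar A$; by Lemma~\ref{lemma:corrpriv} (with the roles of $A$ and $\bar A$ exchanged) this is equivalent to $\mathcal{N}'$ being private from $A$, which unwinds to $\M \subseteq (\mathcal{N}')'$. Applying the bicommutant theorem to the von Neumann algebra $\mathcal{N}$ gives $(\mathcal{N}')' = \mathcal{N}$, so $\M \subseteq \mathcal{N}$. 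The two inclusions together force $\mathcal{N} = \M$.

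Both halves of the theorem then follow at once. If $\M$ is a von Neumann algebra, I verify directly that it satisfies complementary recovery: correctability from $A$ is the tautology $\M\subseteq\M$, while correctability of $\M'$ from $\bar A$ is, again by Lemma~\ref{lemma:corrpriv}, the privacy statement $\M\subseteq(\M')'=\M''$, which holds by bicommutant since $\M$ is assumed to be a von Neumann algebra. Combined with the uniqueness just established, $\M$ is the unique such algebra. Conversely, if $\M$ is not a von Neumann algebra, I suppose for contradiction that some von Neumann algebra $\mathcal{N}$ had complementary recovery; the core step forces $\mathcal{N}=\M$, making $\M$ a von Neumann algebra after all, a contradiction. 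Hence no such algebra exists.

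I expect the only delicate point to be the bookkeeping in the second paragraph: keeping straight which region each correctability and privacy statement refers to, and applying Lemma~\ref{lemma:corrpriv} in the correct direction (to $\mathcal{N}'$ over $\bar A$, yielding privacy over $A$), since a slip there would reverse an inclusion and break the sandwich. The bicommutant theorem does the real algebraic work, and the sandwich $\mathcal{N}\subseteq\M\subseteq\mathcal{N}$ is what turns a loose-looking pair of constraints into an exact identity.
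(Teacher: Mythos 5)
Your proof is correct and follows essentially the same route as the paper's: both arguments sandwich any candidate algebra $\mathcal{N}$ via correctability ($\mathcal{N}\subseteq\M$) on one side and Lemma~\ref{lemma:corrpriv} plus the bicommutant theorem ($\M\subseteq\mathcal{N}''=\mathcal{N}$) on the other. The only cosmetic difference is that you derive the equality $\mathcal{N}=\M$ directly, whereas the paper phrases uniqueness as a contradiction from a strict inclusion $\mathcal{N}\subsetneq\M$; your preliminary remark on why closure under multiplication is the only axiom that can fail is a nice addition the paper leaves implicit.
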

\begin{proof} We split this proof into two conditions:
    \begin{description}
        \item[Existence] If $\M := V^\dagger (\mathcal{L}(\H_A)\otimes I_{\bar A})V$ is a von Neumann algebra, then $(V,A,\M)$ have complementary recovery.
        \item[Uniqueness] If $\mathcal{N} \subsetneq  V^\dagger (\mathcal{L}(\H_A)\otimes I_{\bar A})V$ is a von Neumann algebra, then $(V,A,\mathcal{N})$ do not have complementary recovery.
    \end{description}

    We begin with existence: we assume that  $ \M := V^\dagger (\mathcal{L}(\H_A)\otimes I_{\bar A})V$  is a von Neumann algebra, so $\M'$ is well defined. The first condition of complementary recovery holds by definition of $\M$. Also by definition we have that:
    \begin{align}
        V^\dagger (\mathcal{L}(\H_A)\otimes I_{\bar A})V \subseteq \M = \M'',
    \end{align}
    where in the last part we used the bicommutant theorem. Thus, by definition of privacy, $\M'$ is private from $A$ with respect to $V$. By Lemma~\ref{lemma:corrpriv}, $\M'$ is thus correctable from $\bar A$ with respect to $V$, which is the second condition of complementary recovery.

    Next we show uniqueness. Let $\mathcal{N} \subsetneq  V^\dagger (\mathcal{L}(\H_A)\otimes I_{\bar A})V$ be any von Neumann algebra that is correctable from $A$, but not equal to the full set of correctable operators. We assume that $(V,A,\mathcal{N})$ has complementary recovery and derive a contradiction.
    
    By the second condition of complementary recovery, $\mathcal{N}$ is correctable from $\bar A$ with respect to $V$. By Lemma~\ref{lemma:corrpriv}, $\mathcal{N}'$ is thus private from $A$ with respect to $V$, that is:
    \begin{align}
V^\dagger (\mathcal{L}(\H_A)\otimes I_{\bar A})V \subseteq \mathcal{N}'' =  \mathcal{N},
    \end{align}
    where in the last part we used the bicommutant theorem. So we have
    \begin{align}
       \mathcal{N} \subsetneq  V^\dagger (\mathcal{L}(H_A)\otimes I_{\bar A})V \subseteq  \mathcal{N},
    \end{align}
    implying that $\mathcal{N}$ does not contain itself - a contradiction.
\end{proof}

While an RT formula seems like an extremely unlikely property, complementary recovery on the other hand seems like a property that is rather natural and that most, if perhaps all, quantum error-correcting codes should have. Thus, the fact that complementary recovery implies an RT formula is surprising.

However, the fact that a von Neumann algebra with complementary recovery can fail to exist implies that complementary recovery is actually less trivial than it might seem. While still exhibited by many quantum error-correcting codes, it is worth giving an explicit example of a code without complementary recovery.

\begin{example} \label{ex:badcode} \textbf{A code without complementary recovery.} \\ Let $\H = \text{span}(\ket{00},\ket{01},\ket{10},\ket{11})$ be two qubits and $\H_L = \text{span}(\ket{0},\ket{1},\ket{2})$ be a qutrit. Let $A$ be the first qubit of $\H$, and let:
    \begin{align}
        V = \ket{00}\bra{0} + \ket{01}\bra{1} + \ket{10}\bra{2}.
    \end{align}
    Then the set of correctable operators is:
\begin{align}
    V^\dagger (\mathcal{L}(\H_A)\otimes I_{\bar A})V  = \begin{bmatrix} a & b & 0 \\ c & d & 0 \\ 0 & 0 & a\end{bmatrix} \text{ for all }a,b,c,d \in \mathbb{C}.
\end{align}
    Notably, this set is not closed under multiplication and is not a von Neumann algebra. Let us pick $\M$ to be the largest von Neumann algebra in this set:
\begin{align}
     \M = \begin{bmatrix} a & 0 & 0 \\ 0 & d & 0 \\ 0 & 0 & a\end{bmatrix} \text{ for all }a,d \in \mathbb{C}.
\end{align}
    While $(V,A,\M)$ satisfy the first condition of complementary recovery, they do not satisfy the second:
    \begin{align}
        \begin{bmatrix} a & 0 & b \\ 0 & d & 0 \\ c & 0 & e\end{bmatrix} = \M' \not\subseteq V^\dagger (I_{A} \otimes \mathcal{L}(\H_{\bar A}))V = \begin{bmatrix} a & 0 & b \\ 0 & a & 0 \\ c & 0 & e\end{bmatrix}. 
    \end{align}
    Say we had chosen $\M$ to be some smaller subalgebra of $V^\dagger (\mathcal{L}(\H_A)\otimes I_{\bar A})V$. Then $\M'$ would only be larger, containing the $\M'$ in the line above. But since that $\M'$ is already not contained in $V^\dagger (I_{A} \otimes \mathcal{L}(\H_{\bar A}))V$, there does not exist \emph{any} von Neumann algebra with complementary recovery with $V$ and $A$.
\end{example}

The fact that complementary recovery can fail to exist should illustrate that it actually imposes a non-trivial constraint on the quantum error-correcting code. This constraint is strong enough to imply an RT formula, which we will now define carefully. Note that it is not obvious at all how to obtain an $\M$ that makes the RT formula work from the definition of the formula itself - that is where complementary recovery comes in.

\subsection{The RT formula and its properties}

\begin{definition} Say $V$ is an encoding isometry, say $A$ is a subregion, and say $\M$ is a von Neumann algebra on $\H_L$. Then we say $(V,A,\M)$ have an \textbf{RT formula} if there exists an \textbf{area operator} $L \in \mathcal{L}(\H_L)$ such that for any state $\rho$ on $\H_L$:
    \begin{align}
        S(\text{Tr}_{\bar A}(V\rho V^\dagger)) = S( \M, \rho) + \text{Tr}(\rho  L).
    \end{align}
    If $L \propto I$ then we say $(V,A, \M)$ have a trivial RT formula.
\end{definition}

Now we show the connection between complementary recovery and the existence of the RT formula. This is a highly non-trivial claim that makes use of an enormous amount of structure implied by complementary recovery. Recall from the previous section that a von Neumann algebra implies a Wedderburn decomposition on the Hilbert space that it acts on. We find that when a von Neumann algebra is correctable from $A$ with respect to $V:\H_L \to \H$, then not only does $\H_L$ decompose, but the Hilbert space associated with the subregion $A$ also decomposes. Furthermore, these decompositions are directly related.

The following lemma formalizes this structure, even when complementary recovery is not present. Recall that complementary recovery really implies the correctability of both $\M$ and $\M'$, which allows us to invoke the lemma below not once but twice. We then exploit this to prove that an RT formula exists.

\begin{lemma} \label{lemma:factorization} \textbf{Factorization of encoded states.} Say $V : \H_L \to \H$ is an encoding isometry, $A$ is a subregion inducing $\H = \H_A \otimes \H_{\bar A}$, and say $\M$ is a von Neumann algebra on $\H_L$ that is correctable from $A$ with respect to $V$.

    Say $\M$ induces the decomposition $\H_L = \bigoplus_{\alpha} \left( \H_{L_\alpha} \otimes \H_{\bar L_\alpha}\right) $ so that \begin{equation}
    \label{eq:algebra_4_lemma}
        \M = \bigoplus_{\alpha} \left(  \mathcal{L}(\H_{L_\alpha}) \otimes I_{\bar L_\alpha}  \right),
    \end{equation} and that $\{\ket{\alpha,i,j}\}$ is an orthonormal basis for $\H_L$ that is ``compatible with $\M$'', that is, the $\alpha$ enumerates the diagonal blocks and within each block we have $\ket{\alpha,i,j} = \ket{i_\alpha}_{L_{\alpha}} \otimes \ket{j_\alpha}_{{\bar L_\alpha}}$ where $\{\ket{i_\alpha}_{L_{\alpha}} \}$ and $\{ \ket{j_\alpha}_{\bar L_\alpha} \}$ are orthonormal bases for $\H_{L_\alpha}$ and $\H_{\bar L_\alpha}$ respectively.

    Then there exists a factorization $\H_A = \bigoplus_\alpha\left( \H_{A^\alpha_1} \otimes \H_{A^\alpha_2}  \right) \oplus \H_{A_3}$ and a unitary $U_A$ on $\H_A$ such that the state $(U_A \otimes I_{\bar A}) V \ket{\alpha,i,j}$ factors as follows:
    \begin{align}
    \label{eq:factorisation_encoded_state}
        (U_A \otimes I_{\bar A}) V \ket{\alpha,i,j} = \ket{\psi_{\alpha,i}}_{A^\alpha_1} \otimes \ket{\chi_{\alpha,j}}_{A^\alpha_2 \bar A},
    \end{align}
    where the state $\ket{\psi_{\alpha,i}}$ is independent of $j$, and $\ket{\chi_{\alpha,j}}$ is independent of $i$.
\end{lemma}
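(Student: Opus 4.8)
The plan is to extract the tensor structure on $\H_A$ directly from correctability, by recasting it in the Heisenberg picture and then applying the classification of finite-dimensional von Neumann algebras (Theorem~\ref{thm:Wedderburn}) at the level of representations. First I would upgrade correctability into an algebra homomorphism. Since $\M$ is correctable from $A$, the operator-algebra quantum error correction framework \cite{beny2007generalization, kribs2006operator} provides a recovery whose Heisenberg-picture adjoint restricts on $\M$ to a unital $*$-homomorphism $\pi:\M\to\mathcal{L}(\H_A)$ with $V^\dagger(\pi(O_L)\otimes I_{\bar A})V=O_L$ for every $O_L\in\M$. Equivalently, via Lemma~\ref{lemma:corrpriv} one can reach the same data from privacy of $\M$ from $\bar A$: privacy forces the $\bar A$-reduced transition operators of the encoded basis to satisfy $\operatorname{Tr}_A(V\ket{\alpha,i,j}\bra{\alpha',i',j'}V^\dagger)=\delta_{\alpha\alpha'}\delta_{ii'}\tau^\alpha_{jj'}$, and uniqueness of purification then realizes $V\ket{\alpha,i,j}=(W_{\alpha i}\otimes I_{\bar A})\ket{\chi_{\alpha j}}$ with a \emph{single} isometry $W_{\alpha i}$ valid for all $j$; this is exactly what makes $\ket{\chi_{\alpha j}}$ independent of $i$ and is the technical heart of the $j$-independence claimed in the lemma.

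Second, I would feed $\pi$ into the representation theory underlying Theorem~\ref{thm:Wedderburn}. Writing $\M=\bigoplus_\alpha(\mathcal{L}(\H_{L_\alpha})\otimes I_{\bar L_\alpha})$, any unital $*$-representation of this algebra on $\H_A$ is, after a unitary $U_A$, a direct sum of amplified irreducibles: there is a decomposition $\H_A=\bigoplus_\alpha(\H_{A^\alpha_1}\otimes\H_{A^\alpha_2})\oplus\H_{A_3}$ with $\H_{A^\alpha_1}\cong\H_{L_\alpha}$ the irreducible acting on the block and $\H_{A^\alpha_2}$ the multiplicity space, such that $\pi\big((\ket{i}\bra{i'})^{(\alpha)}\otimes I_{\bar L_\alpha}\big)=U_A^\dagger(\ket{i}\bra{i'}_{A^\alpha_1}\otimes I_{A^\alpha_2})U_A$, with $\pi$ vanishing on $\H_{A_3}$. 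This produces precisely the decomposition $\H_A=\bigoplus_\alpha(\H_{A^\alpha_1}\otimes\H_{A^\alpha_2})\oplus\H_{A_3}$ and the unitary $U_A$ demanded by the statement.

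Third, I would read off the factorization by applying this to two families of operators. Taking the projections $P_{\alpha i}=(\ket{i}\bra{i})^{(\alpha)}\otimes I_{\bar L_\alpha}\in\M$, correctability gives $V^\dagger\big(U_A^\dagger(\ket{i}\bra{i}_{A^\alpha_1}\otimes I_{A^\alpha_2})U_A\otimes I_{\bar A}\big)V=P_{\alpha i}$; since the recovered operator is a projection and $\langle\alpha,i,j|P_{\alpha i}|\alpha,i,j\rangle=1$, the vector $(U_A\otimes I_{\bar A})V\ket{\alpha,i,j}$ lies in $\ket{i}_{A^\alpha_1}\otimes(\H_{A^\alpha_2}\otimes\H_{\bar A})$, so it equals $\ket{\psi_{\alpha i}}\otimes\ket{\chi_{\alpha ij}}$ with $\ket{\psi_{\alpha i}}:=\ket{i}_{A^\alpha_1}$ orthonormal in $i$ and manifestly independent of $j$. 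Taking the off-diagonal operators $(\ket{p}\bra{q})^{(\alpha)}\otimes I_{\bar L_\alpha}$ and evaluating the corresponding matrix element in the $U_A$-frame yields $\langle\chi_{\alpha p j}|\chi_{\alpha q j'}\rangle=\delta_{jj'}$ for all $p,q$; in particular $\ket{\chi_{\alpha p j}}$ and $\ket{\chi_{\alpha q j}}$ are unit vectors with unit overlap, hence equal, so $\ket{\chi_{\alpha ij}}=\ket{\chi_{\alpha j}}$ is independent of $i$ and orthonormal in $j$, completing the factorization.

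I expect the main obstacle to be the first step: justifying that correctability yields a genuine $*$-homomorphism $\pi$ (equivalently, that the recovery can be chosen so that logical projections are recovered as honest projections on $A$) rather than a merely linear right inverse of $O_A\mapsto V^\dagger(O_A\otimes I_{\bar A})V$, which is not a priori multiplicative because $VV^\dagger\neq I$. The elementary alternative of deriving the $\bar A$-reduced transition identity from Lemma~\ref{lemma:corrpriv} and then invoking uniqueness of purification cleanly secures the $i$-independence of $\ket{\chi_{\alpha j}}$, but orthogonality of the $\ket{\psi_{\alpha i}}$ across $i$ (equivalently, orthogonality of the ranges of the $W_{\alpha i}$) is exactly the content that requires the homomorphism/projection-recovery input, and is where the care is needed.
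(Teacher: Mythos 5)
Your overall architecture---recast correctability in the Heisenberg picture, decompose $\H_A$ as a representation space of $\M$, then read off the factorization from projections and matrix units---is sensible, and your third step is a correct and clean verification \emph{given} the second. But the obstacle you flag at the outset is a genuine gap, and it sits exactly where the content of the lemma lives. Correctability as defined here only supplies, for each $O_L\in\M$, \emph{some} $O_A$ with $V^\dagger(O_A\otimes I_{\bar A})V=O_L$; since $VV^\dagger\neq I$, the assignment $O_A\mapsto V^\dagger(O_A\otimes I_{\bar A})V$ is not multiplicative, so neither the existence of a unital $*$-homomorphism $\pi$ nor the amplified-irreducible form of the representatives on $\H_A$ follows without further argument. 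Upgrading to representatives satisfying $(O_A\otimes I_{\bar A})V=VO_L$ (``operator pushing''), from which the representation structure would follow, is one of the conditions shown \emph{equivalent} to the factorization in Theorem~5.1 of \cite{harlow2017ryu}, so importing it from the operator-algebra error-correction literature comes close to assuming the conclusion. Your non-circular fallback via Lemma~\ref{lemma:corrpriv} is the right starting point, and the transition-operator identity $\operatorname{Tr}_A(V\ket{\alpha,i,j}\bra{\alpha',i',j'}V^\dagger)=\delta_{\alpha\alpha'}\delta_{ii'}\tau^\alpha_{jj'}$ does follow from privacy; but, as you yourself note, applying uniqueness of purification one basis state at a time produces isometries $W_{\alpha i}$ whose ranges you cannot show to be mutually orthogonal, so the tensor factor $\H_{A^\alpha_1}$ never materializes.

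The missing idea---and the one the paper's own proof (Appendix~\ref{app:structure_lemma}, following \cite{schumacher1996quantum} and Section~5.1 of \cite{harlow2017ryu}) is built on---is to apply purification uniqueness not to individual encoded basis states but to the Choi state of an entire block: adjoin a reference $R$ maximally entangled with $\H_{L_\alpha}$, observe that your transition-operator identity is precisely the decoupling statement $\rho_{R\bar A}=\tfrac{I_R}{d_\alpha}\otimes\rho_{\bar A}$, and compare the two purifications of this product state on $A$. The maximally entangled factor on $R A^\alpha_1$ in the second purification forces the $\ket{\psi_{\alpha,i}}=\ket{i}_{A^\alpha_1}$ to be orthonormal automatically, and a single unitary $U_A$ then serves all $i$ and $j$ in the block at once. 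Replacing your representation-theoretic second step with this blockwise purification step closes the argument; your off-diagonal matrix-element computation then survives as a consistency check rather than as load-bearing input.
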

\begin{proof}

The full proof for the general case of the algebra in Eq.~\ref{eq:algebra_4_lemma} is given in~\cite[Section 5.1]{harlow2017ryu}.
We give a proof  for the simpler case of a factor algebra in Appendix~\ref{app:structure_lemma}. 
Both proofs follow a similar strategy---originally developed in~\cite{schumacher1996quantum}---that involves introducing a reference system $R$ which is maximally entangled with the region $A$.
By analysing the von Neumann entropies of the reduced density matrices of the $RA\bar A$ system one can obtain a necessary and sufficient condition for quantum error correction. 
This condition and standard properties of the Schmidt decomposition give the proof of the lemma.
We note that an alternative proof of this result can be obtained via a result---see~\cite[Section VI]{hayden2004structure}---that shows that states that saturate the strong subadditivity inequality for the von Neumann entropy can be decomposed as direct sums of tensor products.
\end{proof}

The above lemma already sets up an enormous amount of notation, and even more notation will be required to apply it to a complementary situation. Explicit expressions for these Hilbert space decompositions quickly become rather cumbersome, which is why much of the literature skips many steps in the derivations in order to focus on the intuitive interpretation. While intuition is key, an explicit calculation can also help make one's understanding more concrete. For this reason we give the following derivation with more detail. In the next section we will provide explicit examples of quantum error-correcting codes and analyse them in the same language established here. The reader may wish to skip the proof of the following theorem and read the examples in the next section first.

The following derivation is inspired by proofs in \cite{harlow2017ryu} and \cite{almheiri2015bulk}.

\begin{theorem} \label{thm:complementaritytoRT} \textbf{Complementary recovery implies a two-sided RT formula.} Consider an encoding isometry $V$, a subregion $A$ and a von Neumann algebra $\M$ so that $(V,A,\M)$ have complementary recovery. Then $(V,A,\M)$ and $(V,\bar A,\M')$ both have an RT formula with the same area operator $L$ (that is, the RT formula is `two-sided'). Furthermore, $L$ is in the center $Z_\M$.
\end{theorem}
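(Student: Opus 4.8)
The plan is to reduce the entire statement to the structural content of Lemma~\ref{lemma:factorization}, applied \emph{twice}. Since $(V,A,\M)$ have complementary recovery, $\M$ is correctable from $A$ and $\M'$ is correctable from $\bar A$, so I may invoke the factorization lemma once for the pair $(\M,A)$ and again for $(\M',\bar A)$. The first application decomposes $\H_A$ and writes $(U_A\otimes I_{\bar A})V\ket{\alpha,i,j}=\ket{\psi_{\alpha,i}}_{A^\alpha_1}\otimes\ket{\chi_{\alpha,j}}_{A^\alpha_2\bar A}$; the second resolves the $\bar A$ factor, peeling off a part that carries the $j$ (i.e.\ $\bar L_\alpha$) index. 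I would stitch these together, after a unitary $U_A\otimes U_{\bar A}$, into a single normal form
\begin{equation}
    (U_A\otimes U_{\bar A})V\ket{\alpha,i,j}=\ket{\psi_{\alpha,i}}_{A^\alpha_1}\otimes\ket{\chi_\alpha}_{A^\alpha_2\bar A^\alpha_2}\otimes\ket{\phi_{\alpha,j}}_{\bar A^\alpha_1},
\end{equation}
in which the ``bond state'' $\ket{\chi_\alpha}$ depends on $\alpha$ alone, $\{\ket{\psi_{\alpha,i}}\}_i$ and $\{\ket{\phi_{\alpha,j}}\}_j$ are orthonormal (this follows because $V$ is an isometry), and the block label $\alpha$ indexes the block decompositions of both $\H_A$ and $\H_{\bar A}$, so it is recoverable from either side and behaves classically.

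With this normal form the rest is a direct computation. Writing a general state $\rho$ in an $\M$-compatible basis and conjugating by $(U_A\otimes U_{\bar A})V$, I trace out $\bar A$. Because the $A$-supports of distinct blocks are orthogonal and $\alpha$ is faithfully copied onto $\bar A$, the reduced state $\rho_A:=\text{Tr}_{\bar A}(V\rho V^\dagger)$ is block diagonal in $\alpha$; orthonormality of $\ket{\phi_{\alpha,j}}$ collapses the $j$-sum to the partial trace over $\bar L_\alpha$, reproducing exactly the block $p_\alpha\rho_{A_\alpha}$ of the algebraic state \eqref{eq:algebraic_state} (carried by the orthonormal $\ket{\psi_{\alpha,i}}$), while the bond contributes a fixed reduced state $\sigma_\alpha:=\text{Tr}_{\bar A^\alpha_2}\ket{\chi_\alpha}\bra{\chi_\alpha}$ independent of $\rho$. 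Thus
\begin{equation}
    \rho_A=\bigoplus_\alpha p_\alpha\big(\rho_{A_\alpha}\otimes\sigma_\alpha\big).
\end{equation}

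Applying the entropy of a block-diagonal state, $S\big(\bigoplus_\alpha p_\alpha\tau_\alpha\big)=-\sum_\alpha p_\alpha\log p_\alpha+\sum_\alpha p_\alpha S(\tau_\alpha)$, together with additivity $S(\rho_{A_\alpha}\otimes\sigma_\alpha)=S(\rho_{A_\alpha})+S(\sigma_\alpha)$, the first two contributions reassemble into the algebraic entropy $S(\M,\rho)$ of \eqref{eq:algebraic_entropy} and the remainder is $\sum_\alpha p_\alpha S(\sigma_\alpha)$. Setting $L:=\bigoplus_\alpha S(\sigma_\alpha)\,I_{L_\alpha}\otimes I_{\bar L_\alpha}$ and using $p_\alpha=\text{Tr}(\rho\,\Pi_\alpha)$ for the block projector $\Pi_\alpha$, this remainder is precisely $\text{Tr}(\rho L)$, giving the RT formula; since $L$ is a scalar on each block it lies in $Z_\M$ by the classification theorem. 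For the two-sided claim I would run the argument verbatim with $A\leftrightarrow\bar A$ and $\M\leftrightarrow\M'$, reading off $L'=\bigoplus_\alpha S(\sigma^{\bar A}_\alpha)\,I$ with $\sigma^{\bar A}_\alpha:=\text{Tr}_{A^\alpha_2}\ket{\chi_\alpha}\bra{\chi_\alpha}$. Because $\ket{\chi_\alpha}$ is a \emph{pure} bipartite state across $A^\alpha_2\bar A^\alpha_2$, its two marginals have equal entropy, so $L'=L$; and as $\M$ and $\M'$ share the same block structure, $Z_{\M'}=Z_\M\ni L$, yielding a single area operator for both sides.

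The main obstacle I anticipate is not the entropy bookkeeping but justifying the combined normal form: the factorization lemma is one-sided, so I must verify that the second application (to $\M'$ and $\bar A$) is compatible with the first---that the block label $\alpha$ produced on the $\bar A$ side agrees with the one on the $A$ side, and that the residual piece left after stripping the $\psi$ and $\phi$ factors genuinely depends on $\alpha$ only. Establishing that the bond $\ket{\chi_\alpha}$ is independent of both $i$ and $j$ is the crux, since this is exactly what makes the extra entropy \emph{linear} in $\rho$ and hence an honest expectation value $\text{Tr}(\rho L)$; it is here that complementary recovery (both correctabilities simultaneously) is indispensable, rather than one-sided correctability alone.
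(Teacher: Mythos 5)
Your proposal follows essentially the same route as the paper's proof: two applications of Lemma~\ref{lemma:factorization} (to $(\M,A)$ and $(\M',\bar A)$), stitched into the normal form $(U_A\otimes U_{\bar A})V\ket{\alpha,i,j}=\ket{\psi_{\alpha,i}}\otimes\ket{\chi_\alpha}\otimes\ket{\bar\psi_{\alpha,j}}$, followed by the same block-diagonal entropy bookkeeping and the same explicit $L=\sum_\alpha S(\operatorname{Tr}_{\bar A}\ket{\chi_\alpha}\bra{\chi_\alpha})\,I_\alpha$. Your observation that the two sides share $L$ because the pure bond state $\ket{\chi_\alpha}$ has equal marginal entropies makes explicit a point the paper leaves implicit, and you correctly identify the compatibility of the two factorizations (the $i$- and $j$-independence of $\ket{\chi_\alpha}$) as the step where both correctabilities are genuinely needed.
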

\begin{proof} Say $\M$ induces the decomposition $\H_L = \bigoplus_{\alpha} \left( \H_{L_\alpha} \otimes \H_{\bar L_\alpha}\right) $. This way we can decompose $\M$ and $\M'$ together as:
    \begin{align}
        \M =  \bigoplus_{\alpha} \left(  \mathcal{L}(\H_{L_\alpha}) \otimes I_{\bar L_\alpha}  \right),  \hspace{1cm} \M' =  \bigoplus_{\alpha} \left( I_{L_\alpha}  \otimes \mathcal{L}(\H_{\bar L_\alpha})  \right) .
    \end{align}
    Let $\{\ket{\alpha,i,j}\}$ be a basis that is ``compatible with $\M$'' as in Lemma~\ref{lemma:factorization}. We observe that $\{\ket{\alpha,i,j}\}$ also `lines up with $\M'$' in the same sense, since really $\{\ket{\alpha,i,j}\}$ just lines up with the underlying decomposition of $\H_L$.

    Now, with two applications of Lemma~\ref{lemma:factorization} we know that there exist factorizations of $\H_{A}$ and $\H_{\bar A}$ of the form:
    \begin{align}
        \H_A = \bigoplus_\alpha\left( \H_{A^\alpha_1} \otimes \H_{A^\alpha_2}  \right) \oplus \H_{A_3}, \hspace{1cm} \H_{\bar A} = \bigoplus_\alpha\left( \H_{\bar A^\alpha_1} \otimes \H_{\bar A^\alpha_2}  \right) \oplus \H_{\bar A_3},
    \end{align}
so that there are unitaries $U_A$ and $U_{\bar A}$ such that:
    \begin{align}
        (U_A \otimes I_{\bar A}) V \ket{\alpha,i,j} = \ket{\psi_{\alpha,i}}_{A^\alpha_1} \otimes \ket{\chi_{\alpha,j}}_{A^\alpha_2 \bar A}\\
        (I_A \otimes U_{\bar A}) V \ket{\alpha,i,j} = \ket{\bar \chi_{\alpha,i}}_{A \bar A^\alpha_2} \otimes \ket{\bar \psi_{\alpha,j}}_{\bar A^\alpha_1}.
    \end{align}

    If we consider applying $(U_A \otimes I_{\bar A})$ followed by $(I_A \otimes U_{\bar A})$:
    \begin{align}
        (I_A \otimes U_{\bar A})(U_A \otimes I_{\bar A}) V \ket{\alpha,i,j} = \ket{\psi_{\alpha,i}}_{A^\alpha_1} \otimes (I_{A^\alpha_2} \otimes U_{\bar A}) \ket{\chi_{\alpha,j}}_{A^\alpha_2 \bar A},
    \end{align}
    we see that $U_{\bar A}$ actually just acts on the state $\ket{\chi_{\alpha,j}}_{A^\alpha_2 \bar A}$. Thus, we see in order for both decompositions to be true simultaneously, there must exist states $\ket{\bar \psi_{\alpha,j}}$ and $\ket{\chi_\alpha}$ such that $(I_{A^\alpha_2} \otimes U_{\bar A})\ket{\chi_{\alpha,j}}_{A^\alpha_2 \bar A} = \ket{\chi_{\alpha}}_{A^\alpha_2 \bar A^\alpha_2} \otimes \ket{\bar\psi_{\alpha,j}}_{\bar A^\alpha_1}$, implying:
    \begin{align}
        (U_A \otimes U_{\bar A})V \ket{\alpha,i,j} = \ket{\psi_{\alpha,i}}_{A^\alpha_1} \otimes \ket{\chi_{\alpha}}_{A^\alpha_2 \bar A^\alpha_2} \otimes \ket{\bar \psi_{\alpha,j}}_{\bar A^\alpha_1}. \label{eqn:twosideddecomp}
    \end{align}

    The above factorization will spell out the RT formula when a logical operator is considered in this basis. Say $\rho$ is a state on $\H_L$. To show that $(V,A,\M)$ have an RT formula, we will proceed to compute $S(\M,\rho)$ as well as $S(\text{Tr}_{\bar A}( V\rho V^\dagger ))$ and take the difference. We will observe that the difference will have the form $\text{Tr}(\rho L)$ for some $L$.


   To derive $S(\text{Tr}_{\bar A}( V\rho V^\dagger ))$ recall the discussion in Section~\ref{sec:algebraic_states} and observe that one might as well consider $S(\text{Tr}_{\bar A}( V\rho_\M V^\dagger ))$ instead: say $O_A \in \mathcal{L}(\H_A)$, and write:
    \begin{align}
        \text{Tr}( O_A \cdot \text{Tr}_{\bar A}( V\rho V^\dagger )  ) =  \text{Tr}( (O_A \otimes I_{\bar A}) \cdot V\rho V^\dagger ) = \text{Tr}( V^\dagger (O_A \otimes I_{\bar A}) V \cdot \rho ).
    \end{align}
    But $V^\dagger (O_A \otimes I_{\bar A}) V$ is in $\M$. Since for any $O \in \M$ we have $\text{Tr}(O \rho) = \text{Tr}(O \rho_\M)$ (see Theorem~\ref{thm:expectations_vN}) we can just replace $\rho$ with $\rho_\M$ in the above. The states $\text{Tr}_{\bar A}( V\rho V^\dagger )$ and  $\text{Tr}_{\bar A}( V\rho_\M V^\dagger )$ give the same expectations for all observables, so they must be the same state and have the same entropy. Furthermore, since acting with a unitary on $\H_{A}$ and $\H_{\bar A}$ separately does not change the entropy, we see:
    \begin{align}
        S(\text{Tr}_{\bar A}( V\rho V^\dagger )) = S(\text{Tr}_{\bar A}( (U_A \otimes U_{\bar A})V \rho_M V^\dagger (U_A \otimes U_{\bar A})^\dagger )). 
    \end{align}

    Next, we define isometries $\tilde V_\alpha: (\H_{L_\alpha} \otimes \H_{\bar L_\alpha}) \to (\H_{A^\alpha_1} \otimes \H_{\bar A^\alpha_1})$ using the states $\ket{\psi_{\alpha,i}}_{A^\alpha_1}$ and $\ket{\bar\psi_{\alpha,j}}_{\bar A^\alpha_1}$ from (\ref{eqn:twosideddecomp}):
\begin{align}
        \tilde V_\alpha \ket{\alpha,i,j} :=  \ket{\psi_{\alpha,i}}_{A^\alpha_1} \otimes \ket{\psi_{\alpha,j}}_{\bar A^\alpha_1}.
    \end{align}
    We know that $\tilde V_\alpha$ is indeed an isometry because the states $\ket{\psi_{\alpha,i}}_{A^\alpha_1}$ and $\ket{\psi_{\alpha,j}}_{\bar A^\alpha_1}$ are actually bases for $\H_{A^\alpha_1}$ and $\H_{\bar A^\alpha_1}$ respectively. This follows from (\ref{eqn:twosideddecomp}) and the fact that the $\{\ket{\alpha,i,j}\}$ for fixed $\alpha$ are a basis for $\H_{L_\alpha} \otimes \H_{\bar L_\alpha}$. 

    The purpose of $\tilde V_\alpha$ is that it lets us simplify (\ref{eqn:twosideddecomp}) to:
    \begin{align}
        (U_A \otimes U_{\bar A})V \ket{\alpha,i,j} = \tilde V_\alpha \ket{\alpha,i,j} \otimes \ket{\chi_\alpha}.
    \end{align}

    This lets us bring the cumbersome expression $(U_A \otimes U_{\bar A})V \rho_\M V^\dagger (U_A \otimes U_{\bar A})^\dagger $ into a much neater form:
\begin{align}
    & \hspace{4mm} (U_A \otimes U_{\bar A})V \rho_\M V^\dagger (U_A \otimes U_{\bar A})^\dagger \\
    &=   \sum_\alpha p_\alpha \cdot (U_A \otimes U_{\bar A})V \rho_\alpha V^\dagger (U_A \otimes U_{\bar A})^\dagger \\
    &=  \sum_\alpha p_\alpha \cdot \frac{1}{p_\alpha} \sum_{i,j}\sum_{i',j'} \rho[\alpha]_{i,j,i',j'}  (U_A \otimes U_{\bar A})V\ket{\alpha,i,j}\bra{\alpha,i',j'} V^\dagger (U_A \otimes U_{\bar A})^\dagger  \\
    &=  \sum_\alpha p_\alpha \cdot \frac{1}{p_\alpha} \sum_{i,j}\sum_{i',j'} \rho[\alpha]_{i,j,i',j'}  \tilde V_\alpha \ket{\alpha,i,j}\bra{\alpha,i',j'} \tilde V_\alpha^\dagger  \otimes \ket{\chi_\alpha}\bra{\chi_\alpha} \\
    &=  \sum_\alpha p_\alpha \cdot \tilde V_\alpha \rho_\alpha \tilde V_\alpha^\dagger  \otimes \ket{\chi_\alpha}\bra{\chi_\alpha}. 
\end{align}

Since each of the states $\tilde V_\alpha \rho_\alpha \tilde V_\alpha^\dagger  \otimes \ket{\chi_\alpha}\bra{\chi_\alpha}$ are normalized and disjoint (act on different blocks), the entropy takes the form:

\begin{align}
    S(\text{Tr}_{\bar A}( V\rho V^\dagger ))  &= \sum_\alpha p_\alpha \log(p^{-1}_\alpha) + \sum_{\alpha} p_\alpha S( \text{Tr}_{\bar A}( \tilde V_\alpha \rho_\alpha \tilde V_\alpha^\dagger  \otimes \ket{\chi_\alpha}\bra{\chi_\alpha}  ) )\\
    &= \sum_\alpha p_\alpha \log(p^{-1}_\alpha) + \sum_{\alpha} p_\alpha S(\text{Tr}_{\bar A}( \tilde V_\alpha \rho_\alpha \tilde V_\alpha^\dagger )) + \sum_\alpha p_\alpha  S( \text{Tr}_{\bar A}(\ket{\chi_\alpha}\bra{\chi_\alpha}  )). \label{eqn:entanglemententropy}
\end{align}

Finally, we observe that since $\ket{\psi_{\alpha,j}}$ is independent of $i$, we have that:
\begin{align}
    S(\text{Tr}_{\bar A}( \tilde V_\alpha \rho_\alpha \tilde V_\alpha^\dagger ) = S(\text{Tr}_{\bar A^\alpha_1}( \tilde V_\alpha \rho_\alpha \tilde V_\alpha^\dagger )) = S( \text{Tr}_{\bar L_\alpha}(\rho_\alpha)).
\end{align}

We observe that the first two terms of (\ref{eqn:entanglemententropy}) are the exact same as the two terms of (\ref{eq:algebraic_entropy}), so their difference is just:
\begin{align}
    S(\text{Tr}_{\bar A}( V\rho V^\dagger ))  - S(\M,\rho) = \sum_\alpha p_\alpha  S( \text{Tr}_{\bar A}(\ket{\chi_\alpha}\bra{\chi_\alpha}  )). 
\end{align}

The right-hand side is linear in the $p_\alpha$, so it is linear in $\rho$, so there exists an area operator $L$ such that the right-hand side is $\text{Tr}(\rho L)$. We construct it explicitly below.

\begin{align}
    I_\alpha &:=  \sum_{i,j} \ket{\alpha,i,j}\bra{\alpha,i,j}\\
    L &:= \sum_\alpha   S( \text{Tr}_{\bar A}(\ket{\chi_\alpha}\bra{\chi_\alpha} ))  \cdot I_\alpha. 
\end{align}
Observe that $L \in \M$, so therefore $\text{Tr}(\rho L) = \text{Tr}(\rho_\M L)$. Then we write:
\begin{align}
    \text{Tr}(\rho_\M L) &= \text{Tr}\left( \sum_\alpha p_\alpha \rho_\alpha \cdot \sum_\alpha   S( \text{Tr}_{\bar A}(\ket{\chi_\alpha}\bra{\chi_\alpha} ))  \cdot I_\alpha  \right) \\
    &=   \sum_\alpha p_\alpha  S( \text{Tr}_{\bar A}(\ket{\chi_\alpha}\bra{\chi_\alpha} ))  \cdot \text{Tr}(\rho_\alpha I_\alpha) =  S(\text{Tr}_{\bar A}( V\rho V^\dagger ))  - S(\M,\rho). 
\end{align}
We have derived that $(V,A,\M)$ satisfy an RT formula with operator $L$ and furthermore that $L \in \M$. The derivation for $(V,\bar A,\M')$ is exactly the same just with $i$ and $j$ swapped, and since $L \in \M'$ we have that $L$ is in the center $Z_\M$.
\end{proof}

According to \cite{harlow2018tasi} the reverse direction also holds: if $(V, A,  \M )$ and $(V,\bar A, \M')$ both satisfy an RT formula with the same $L$, then $(V,A,\M)$ must have complementary recovery. So actually, complementary recovery is equivalent to the existence of a `two-sided RT formula' for both $(V, A,  \M )$ and $(V,\bar A, \M')$

This suggests the possibility that complementary recovery is actually stronger than the existence of a one-sided RT formula. Is it possible for $(V,A,\M)$ to exhibit an RT formula, but not $(V,\bar A,\M')$?

\subsection{A recipe for analysing codes}
\label{sec:recipe}

The derivation in this section not only defines the holographic properties of an error-correcting code, but also gives a recipe for computing the area operator of the RT formula:
\begin{enumerate}
    \item Follow Theorem~\ref{thm:whatalgebra} and compute $\M := V^\dagger (\mathcal{L}(H_A)\otimes I_{\bar A})V$, and verify that it is indeed a von Neumann algebra. If so, we have complementary recovery.
    \begin{description}
    \item \emph{Shortcut}: If $\M$ has a trivial center ($Z_\M = \langle I\rangle_\text{vN}$), then since $L \in Z_\M$ we already know that the code must have a trivial RT formula.  
    \end{description}
    \item Compute the Wedderburn decomposition on $\H_L$ that follows from $\M$. Follow Lemma~\ref{lemma:factorization} and define a basis $\ket{\alpha,i,j}$ that `lines up with $\M$'.
    \item Apply Lemma~\ref{lemma:factorization} twice to obtain unitaries $U_A$ and $U_{\bar A}$ such that:  $(U_A \otimes U_{\bar A})V \ket{\alpha,i,j} = \ket{\psi_{\alpha,i}} \otimes \ket{\chi_{\alpha}}\otimes \ket{\bar \psi_{\alpha,j}}$.
    \item Obtain the states $\ket{\chi_\alpha}$ and compute their entanglement entropies. These are the eigenvalues of the area operator.
\end{enumerate}

This is already a very complicated series of steps. While computing $\M$ is not so difficult for small codes, the later steps where we explicitly construct the $\ket{\chi_\alpha}$ states can be cumbersome. For this reason we recall that Theorem~\ref{thm:complementaritytoRT}  showed that $L \in Z_\M$. So a trivial center implies a trivial RT Formula. The intuition is that an interesting holographic code features a variety of superselection sectors, each representing a different geometry with a different area. The center $Z_\M$ is the set of operators acting proportionally to the identity on each sector. Thus, if the center is trivial, there is only one superselection sector, so there can only be one area.  This provides a convenient shortcut for analyzing the holographic properties of codes.

In the next section we will practice this recipe on various examples.

\section{Atomic examples}
\label{sec:examples}

In the previous section we discussed holographic properties of an isometry $V: \H_L \to \H$, a subregion $A$ and a von Neumann algebra $\M \subseteq \mathcal{L}(\H_L)$. Together $(V,A,\M)$ can exhibit `complementary recovery' if $\M$ can be recovered from $A$ and its commutant $\M'$  can be recovered from $\bar A$. Furthermore, $(V,A,\M)$ are said to exhibit an `RT formula' if the following equation holds for all states $\rho$ on $\H_L$:
\begin{align}
    S(\text{Tr}_{\bar A}(V \rho V^\dagger)) = S(\M,\rho) + \text{Tr}(\rho L).
\end{align}

We established two results: First, we showed that the isometry $V$ and the subregion $A$ together uniquely determine an $\M$ so that $(V,A,\M)$ have complementary recovery, and gave a simple method for calculating $\M$ if it exists. Second, we showed that complementary recovery implies that an RT Formula holds for both $(V,A,\M)$ and $(V,\bar A,\M')$.

In this section we give some examples of quantum error-correcting codes that exhibit an RT formula. These examples aim to be non-trivial while using as few qubits as possible, motivating the name `atomic'.  We begin with simple examples where the equation above holds in a trivial way, but then build our way up to an example that features an RT formula where every single term in the equation is nonvanishing. These toy models are a useful stepping stone toward an intuitive understanding of holography and its connection to quantum error correction. In particular, the statement of Lemma~\ref{lemma:factorization} and proof of Theorem~\ref{thm:complementaritytoRT} made heavy use of abstract decompositions of the Hilbert spaces as well as various intermediate states. These arguments are significantly easier to understand when keeping the examples in mind. 

In Theorem~\ref{thm:whatalgebra} we showed that $V,A$ together determine the algebra $\M$. In our examples however we only specify the encoding isometry $V$. This is because these examples actually exhibit RT formulae for all `contiguous subregions' $A$. That is, the physical Hilbert space is to be thought of as a ring of qubits, and $A$ can only contain adjacent sets of qubits. Moreover, the isometries $V$ are sufficiently symmetrical that the RT formulae for all these different subregions $A$ are identical provided they are large enough. Combined with the fact that $(V,A,\M)$ and $(V,\bar A, \M')$ have the same area operator, the analysis is thus greatly simplified.

Recall from the proof of Theorem~\ref{thm:complementaritytoRT} that for any state $\rho$ we can derive $\rho_\alpha$ and $p_\alpha$ so that the algebraic entropy can be written as:
\begin{align}
    S(\M,\rho) = \sum_{\alpha} p_\alpha \log(p_\alpha^{-1}) + \sum_\alpha p_\alpha S(\text{Tr}_{\bar L_\alpha}( \rho_\alpha)),
\end{align}
which intuitively splits the entropy into a `classical term' and a `quantum term'. The classical term is indeed just the classical entropy corresponding to the probabilities $p_\alpha$, while the quantum term is a probabilistic mixture of various von Neumann entropies. Substituting this expansion into the RT Formula, we obtain an equation with four terms. We name the first three $S_A$ after the subregion $A$, $S_c$ for `classical' and $S_q$ for `quantum':
\begin{align}
    \underbrace{S(\text{Tr}_{\bar A}(V \rho V^\dagger))}_{S_A} = \underbrace{\sum_{\alpha} p_\alpha \log(p_\alpha^{-1})}_{S_c} + \underbrace{\sum_\alpha p_\alpha S(\text{Tr}_{\bar L_\alpha}( \rho_\alpha))}_{S_q} + \text{Tr}(\rho L). \label{eqn:expandedrt}
\end{align}

The structure of this section is as follows: we begin with three examples where only one of the terms $S_c, S_q$ and $\text{Tr}(\rho L)$ is nonzero. Then we give three examples where exactly two terms are nonvanishing. Then, finally, we give one example where all three terms appear. The definitions of all the isometries $V$ are summarized in Figure~\ref{fig:circuits}.

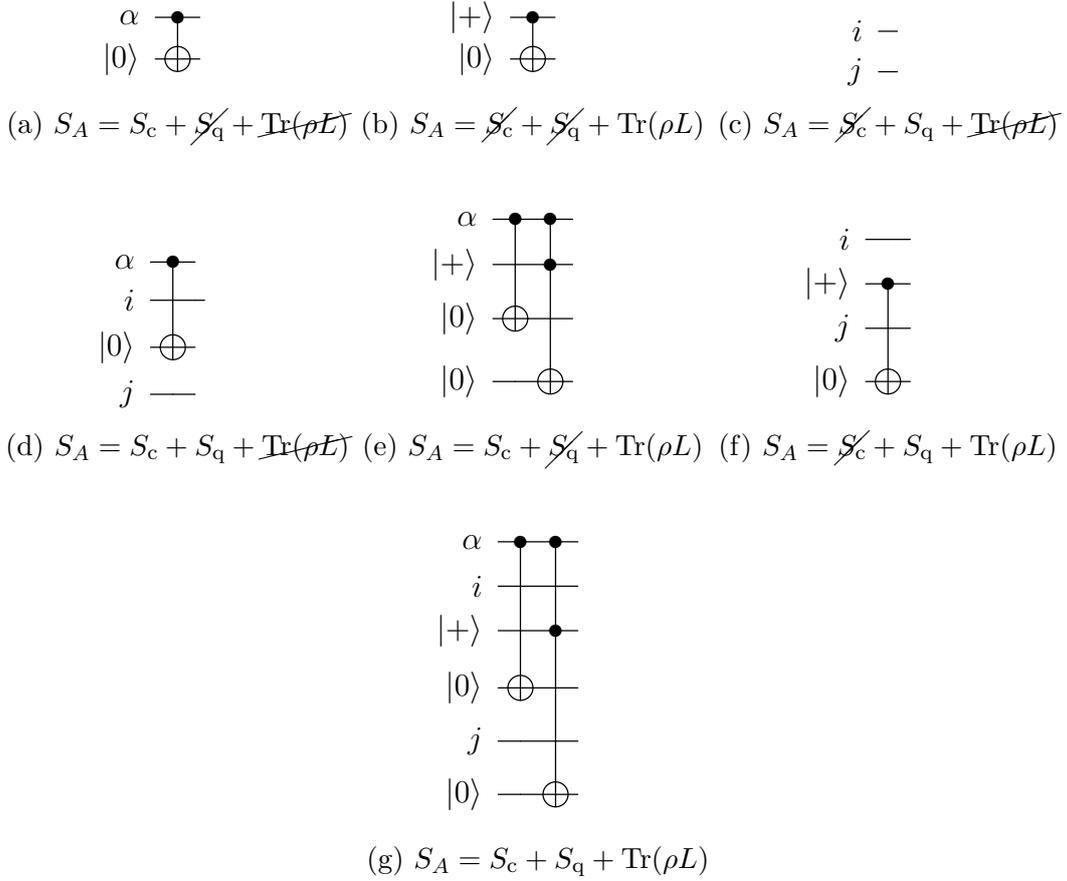
\begin{figure}[t]
\centering

\begin{subfigure}[b]{0.3\textwidth}
$$  \Qcircuit @C=0.3em @R=.8em {
    \lstick{\alpha}  & \ctrl{1} & \qw\\
    \lstick{\ket{0}}& \targ & \qw
}$$
    \caption{$S_A = S_\text{c} + \cancel{S_\text{q}} + \cancel{\text{Tr}(\rho L)} $}
\end{subfigure}
\begin{subfigure}[b]{0.3\textwidth}
$$  \Qcircuit @C=0.3em @R=.8em {
    \lstick{\ket{+}} & \ctrl{1} & \qw\\
    \lstick{\ket{0}}& \targ & \qw
}$$
    \caption{$S_A = \cancel{S_\text{c}} + \cancel{S_\text{q}} + \text{Tr}(\rho L) $}
\end{subfigure}
\begin{subfigure}[b]{0.3\textwidth}
$$  \Qcircuit @C=0.6em @R=1.3em {
    \lstick{i} & \qw\\
   \lstick{j}  & \qw
}$$
    \caption{$S_A = \cancel{S_\text{c}} + S_\text{q} + \cancel{\text{Tr}(\rho L)} $}
\end{subfigure}
\vspace{5mm}

\begin{subfigure}[b]{0.3\textwidth}
$$  \Qcircuit @C=0.3em @R=1.1em {
    \lstick{\alpha} & \ctrl{2} & \qw \\
    \lstick{i} & \qw & \qw & \qw \\
    \lstick{\ket{0}} & \targ & \qw \\
   \lstick{j}  & \qw & \qw 
}$$
    \caption{$S_A = S_\text{c} + S_\text{q} + \cancel{\text{Tr}(\rho L)} $}
\end{subfigure}
\begin{subfigure}[b]{0.3\textwidth}
$$  \Qcircuit @C=0.3em @R=1.2em {
    \lstick{\alpha} & \ctrl{2} & \ctrl{1} & \qw \\
    \lstick{\ket{+}} & \qw &  \ctrl{2} & \qw \\
    \lstick{\ket{0}} & \targ & \qw  & \qw \\
    \lstick{\ket{0}}  & \qw & \targ & \qw
}$$
    \caption{$S_A = S_\text{c} + \cancel{S_\text{q}} + \text{Tr}(\rho L) $}
\end{subfigure}
\begin{subfigure}[b]{0.3\textwidth}
$$  \Qcircuit @C=0.3em @R=1.3em {
    \lstick{i}  & \qw & \qw \\
    \lstick{\ket{+}} &  \ctrl{2} & \qw \\
    \lstick{j} & \qw  & \qw \\
    \lstick{\ket{0}} & \targ & \qw
}$$
    \caption{$S_A = \cancel{S_\text{c}} + S_\text{q} + \text{Tr}(\rho L) $}
\end{subfigure}
\vspace{5mm}

\begin{subfigure}[b]{0.3\textwidth}
$$  \Qcircuit @C=0.3em @R=1.3em {
    \lstick{\alpha}  & \ctrl{3} & \ctrl{2} & \qw \\
    \lstick{i}  & \qw & \qw & \qw \\
    \lstick{\ket{+}} & \qw & \ctrl{3} & \qw \\
    \lstick{\ket{0}} & \targ & \qw & \qw\\
    \lstick{j} & \qw  & \qw  & \qw \\
    \lstick{\ket{0}} & \qw &  \targ & \qw
}$$
    \caption{$S_A = S_\text{c} + S_\text{q} + \text{Tr}(\rho L) $}
\end{subfigure}

\caption{\label{fig:circuits}Examples of encoding isometries $V$ considered in this section. All of these exhibit an RT formula as in (\ref{eqn:expandedrt}), but various terms vanish as shown. The logical Hilbert space $\H_L$ always factors into $\H_\alpha \otimes \H_i \otimes \H_{j}$, with the input qubits marked as such.}

\end{figure}

 \subsection{Codes with one term}
 
 We specify all the isometries in terms of quantum circuits, which makes many of the non-trivial Hilbert space decompositions in Lemma~\ref{lemma:factorization} and Theorem~\ref{thm:complementaritytoRT} much simpler to understand. In particular, recall from Lemma~\ref{lemma:factorization} that $\M$ induces a decomposition on $\H_L$ of the form:
\begin{align}
    \H_L = \bigoplus_\alpha \left( \H_{L_\alpha} \otimes \H_{\bar L_\alpha}\right).
\end{align}
This very general form of a decomposition accounts for the fact that the dimensions of $\H_{L_\alpha}$ and $\H_{\bar L_\alpha}$ may vary depending on $\alpha$. This will not be the case for these examples, so we can simply remove the $\alpha$ dependence, relabeling $\H_{L_\alpha} \to \H_i$ and $\H_{\bar L_\alpha} \to \H_j$, and write:
\begin{align}
    \H_L = \H_\alpha \otimes \H_i \otimes \H_{j}.
\end{align}
Each of the degrees of freedom $\alpha,i$ and $j$ is then simply encoded by the corresponding qubit, which is labeled as such on the left side of the circuit. Here, which block of the decomposition we are in is associated with its own Hilbert space $\H_\alpha$.

\begin{atomicexample} \label{ex:c} We begin with an example where the RT Formula is simply $S_A = S_c$:
    \begin{align}
        V_{(a)} :=   \hspace{8mm}\begin{array}{c}\Qcircuit @C=0.3em @R=.8em {
    \lstick{\alpha}  & \ctrl{1} & \qw\\
        \lstick{\ket{0}}& \targ & \qw}
        \end{array}
    \end{align}
    Without loss of generality we pick $\H_A$ to be the first qubit and $\H_{\bar A}$ to be the second qubit. Intuitively, when $\H_{\bar A}$ is traced out, then the qubit $\H_A$ acts like it has been measured in the computational basis. The probabilities of the two outcomes $p_0$ and $p_1$ are a classical probability distribution.

    Following Theorem~\ref{thm:whatalgebra}, we compute $V^\dagger (\mathcal{L}(\H_A) \otimes I_{\bar A}) V$ to obtain $\M$.  A general element in $O \in \mathcal{L}(\H_A) \otimes I_{\bar A}$ can be expanded into Pauli matrices:
    \begin{align}
        O &= \alpha (I\otimes I) + \beta (X \otimes I) + \gamma (Y \otimes I) + \delta( Z \otimes I)\\
        V^\dagger OV &= \alpha I + \delta Z.
    \end{align}

    So $\M$ is indeed a von Neumann algebra: the set of diagonal operators on $\H_\alpha$. This means that observables in $\M$ cannot really distinguish superpositions over different $\alpha$ from classical probability distributions over $\alpha$, since $\rho_\M$ is also diagonal. So the algebraic entropy $S(\M,\rho)$ is also entirely classical. Notice however that $\M$ is its own center, and is not trivial! So we see that a von Neumann algebra with a non-trivial center can still have a trivial area operator $L = 0$.
\end{atomicexample}

\begin{atomicexample} \label{ex:l}Next, we consider an isometry where $S_A = \text{Tr}(\rho L)$. In this case the logical Hilbert space $\H_L$ is one dimensional: there are no logical qubits. We can still define a density matrix, though: the 1 x 1 matrix $\rho = 1$.
    \begin{align}
        V_{(b)} :=   \hspace{8mm}\begin{array}{c}\Qcircuit @C=0.3em @R=.8em {
    \lstick{\ket{+}} & \ctrl{1} & \qw\\
    \lstick{\ket{0}}& \targ & \qw
}\end{array}
    \end{align}

    $V_{(b)}$ simply prepares a Bell state, so $S_A$ is simply the constant 1. Furthermore, $\M := V^\dagger (\mathcal{L}(H_A) \otimes I_{\bar A})  V $ is just the set of scalars, so $S(\M,\rho)$ vanishes. We can thus achieve  $S_A  = 1= \text{Tr}(\rho L)$ by making the area operator the 1x1 matrix $L = 1$. This is consistent with the fact that $\M$, being the set of scalars, has a trivial center.
\end{atomicexample}

\begin{atomicexample} \label{ex:q} Third, we consider an isometry with only a quantum part: $S_A = S_q$. In this case $\H_L $ and $\H$ are both two qubits, and $V$ is the identity.
    \begin{align}
        V_{(c)} :=   \hspace{3mm}\begin{array}{c}\Qcircuit @C=0.6em @R=1.3em {
    \lstick{i} & \qw\\
   \lstick{j}  & \qw
}\end{array}
    \end{align}
    We see that $\H_i = \H_A$ and $\H_j = \H_{\bar A}$, so $S_A = S(\text{Tr}_j(\rho))$. We also have that $\M = \mathcal{L}(\H_i) \otimes I_j$, which is a factor, so the associated Hilbert space decompositions has only one big block with $\alpha = 0$ and no other values of $\alpha$. This makes the distribution over blocks trivial with $p_0 = 1$, so the classical part of $S(\M,\rho)$ vanishes and only the quantum component remains. $\M$ is a factor, so it has a trivial center, consistent with $L = 0$.
\end{atomicexample}

Indeed the above examples are extremely trivial, since they each only feature one term in the RT formula. However, they are the fundamental building blocks for codes with more complicated RT formulae.  

 \subsection{Codes with two terms}

Now we move on to RT formulae with two non-trivial terms. These allow us to make some of the steps in the proof of Theorem~\ref{thm:complementaritytoRT} more explicit. In particular, the proof involved further decomposition of $\H_{A}$ and $\H_{\bar A}$ into:

\begin{align}
    \H_{A} = \bigoplus_\alpha \left( \H_{A^\alpha_1} \otimes \H_{A^\alpha_2}   \right) \oplus \H_{A_3} \hspace{10mm} \H_{\bar A} = \bigoplus_\alpha \left( \H_{\bar A^\alpha_1} \otimes \H_{\bar A^\alpha_2}   \right) \oplus \H_{\bar A_3}. 
\end{align}

As with $ \H_L = \bigoplus_\alpha \left( \H_{L_\alpha} \otimes H_{\bar{L}_\alpha}\right)$, the $\alpha$ dependence allows the different blocks enumerated by $\alpha$ to have varying dimension. This will not be the case for our examples. Furthermore, the extra $\H_{A_3}$ allows this factorization to only factorize the image of $V$ in $\H_A$. In our case it is actually easier to just factor all of $\H_A$ and $\H_{\bar A}$ directly.
\begin{align}
    \H_{A} =  \H_{A_\alpha} \otimes \H_{A_1} \otimes \H_{A_2}   \hspace{10mm}  \H_{\bar A} =  \H_{\bar A_\alpha} \otimes \H_{\bar A_1} \otimes \H_{\bar A_2}. 
\end{align}
As with $\H_L$, which block $\alpha$ of the decomposition we are in actually factors out onto its own qubit $\H_{A_\alpha}$ or $\H_{\bar A_\alpha}$. The fact that $\alpha$ is visible from both sides of the bipartition is what lends it its classical behavior. In our circuits we now label the right side with the associated decomposition of $\H_{A}$ and $\H_{\bar A}$ as well.

In Theorem~\ref{thm:complementaritytoRT}, the purpose of the decomposition of $\H_A$ and $\H_{\bar A}$ was to show that there exist unitaries $U_A$ and $U_{\bar A}$ that bring the states $V\ket{\alpha,i,j}$ into a particular form, specifically that of equation (\ref{eqn:twosideddecomp}):
\begin{align}
    (U_A \otimes U_{\bar A})V \ket{\alpha,i,j} = \ket{\psi_{\alpha,i}}_{A_\alpha A_1} \otimes \ket{\chi_{\alpha}}_{A_2 \bar A_2} \otimes \ket{\bar\psi_{\alpha,j}}_{\bar A_\alpha\bar A_1}. \label{eqn:twosideddecomp_again}
\end{align}

Then, the entanglement entropies of the $\ket{\chi_\alpha}$ states across $A_2\bar A_2$ yield the eigenvalues of the area operator. In our examples $U_A$ and $U_{\bar A}$ will just be the identity.

\begin{atomicexample} \label{ex:cq} The following code has the RT formula $S_A = S_c + S_q$. This is the first example where all three components of $\H_L = \H_\alpha \otimes \H_i \otimes \H_j$ are two-dimensional. Below we have selected $\H_A$ as the first two qubits and $\H_{\bar A}$ as the last two qubits. However, other choices of $A$ will have the same formula provided $A$ is a pair of adjacent qubits.
    \begin{align} 
        V_{(d)} :=   \hspace{8mm}\begin{array}{c}\Qcircuit @C=0.3em @R=1.1em {
                \lstick{\alpha} & \ctrl{2} & \qw & \rstick{A_\alpha} \\
            \lstick{i} & \qw & \qw & \qw & \rstick{A_1} \\
            \lstick{\ket{0}} & \targ & \qw & \rstick{\bar A_\alpha} \\
            \lstick{j}  & \qw & \qw & \rstick{\bar A_1}
}\end{array}
    \end{align}
    We begin by computing $\M$: we see how, just as in Example~\ref{ex:q}, $\H_A$ has full access to $\H_i$, and for the same calculation as in Example~\ref{ex:c}, $\H_A$ has access to diagonal operators on $\H_\alpha$. On the other hand, it must act like the identity on $\H_j$. $Z_\M$ acts like the identity on $\H_i,\H_j$, but can act non-trivially on $\H_\alpha$. So we cannot rule out a trivial RT formula yet. 

    At this point we see that a basis $\{\ket{\alpha,i,j}\}$ for $\H_L$ that `lines up with $\M$' as in Lemma~\ref{lemma:factorization} is actually just the computational basis on $\H_L$. We can just write $\ket{\alpha,i,j} = \ket{\alpha}\ket{i}\ket{j}$. Considering such a state we see that:
    \begin{align}
        V_{(d)} \ket{\alpha,i,j}  =  \ket{\alpha}_{A_\alpha}\ket{i}_{A_1}\ket{\alpha}_{\bar A_\alpha} \ket{j}_{\bar A_1}.
    \end{align}
    Since this state already splits so cleanly into states $\ket{\psi_{\alpha,i}}_{A_\alpha A_1} = \ket{\alpha}_{A_\alpha}\ket{i}_{A_1}$ and $\ket{\bar \psi_{\alpha,j}}_{\bar A_\alpha \bar A_1} = \ket{\alpha}_{\bar A_\alpha}\ket{j}_{\bar A_1}$, we actually can just select $U_A$ and $U_{\bar A}$ to be the identity.

    The only thing missing from equation (\ref{eqn:twosideddecomp_again}) is the $\ket{\chi_\alpha}$ on $A_2,\bar A_2$. However, we have both other contributions. The $\alpha$ degree of freedom is visible from $\bar A$, so therefore acts like it has been measured from $A$'s perspective. The $\H_i \otimes \H_j$ register might be entangled with $\H_\alpha$, so after the measurement it will collapse to one of the $\rho_\alpha$ states from the decomposition $\rho_\M = \sum_\alpha p_\alpha \rho_\alpha$. The quantum term of the entropy is then the associated probabilistic mixture of the von Neumann entropy of $\rho_\alpha$ reduced to $\H_i$. Writing out the full formula:
\begin{align}
    \underbrace{S(\text{Tr}_{\bar A}(V \rho V^\dagger))}_{S_A} = \underbrace{\sum_{\alpha} p_\alpha \log(p_\alpha^{-1})}_{S_c} + \underbrace{\sum_\alpha p_\alpha S(\text{Tr}_{j}( \rho_\alpha))}_{S_q}.
\end{align}
    Just as with Example~\ref{ex:c}, $Z_\M$ had a non-trivial center, but we still have $L = 0$.
\end{atomicexample}

\begin{atomicexample} \label{ex:cl} Next we consider a code with a classical term and an area term, but no quantum term: $S_A = S_c + \text{Tr}(\rho L)$. This is the first code where $L$ is not proportional to the identity.
    \begin{align} 
        V_{(e)} :=   \hspace{8mm}\begin{array}{c}\Qcircuit @C=0.3em @R=1.2em {
                \lstick{\alpha} & \ctrl{2} & \ctrl{1} & \qw & \rstick{A_\alpha} \\
            \lstick{\ket{+}} & \qw &  \ctrl{2} & \qw & \rstick{A_2} \\
            \lstick{\ket{0}} & \targ & \qw  & \qw & \rstick{\bar A_\alpha} \\
            \lstick{\ket{0}}  & \qw & \targ & \qw & \rstick{\bar A_2}
}\end{array}
    \end{align}
    The von Neumann algebra $\M$, due to a similar calculation as in Example~\ref{ex:c}, is again just the set of diagonal operators on $\H_L = \H_\alpha$. The algebraic entropy $S(\M,\rho)$ is then again the classical entropy of the probability distribution $\{p_\alpha\}$.  Since there are multiple superselection sectors corresponding to different $\alpha$, we do not have a trivial center.

    For this example, the entropy $S_A$ can actually be computed explicitly for some logical pure state $\beta_0\ket{0} + \beta_1\ket{1}$ where $p_\alpha = |\beta_\alpha|^2$. The circuit conditionally prepares a Bell state depending on the value of $\alpha$:
\begin{align}
    V_{(e)} ( \beta_0\ket{0} + \beta_1\ket{1}) &= \beta_0 \ket{0\text{+}00} + \beta_1 \frac{ \ket{1010} +\ket{1111}}{\sqrt{2}}\\
    \text{Tr}_{\bar A}( V_{(e)}\rho V_{(e)}^\dagger ) &=  |\beta_0|^2 \ket{0}\bra{0}_{A_\alpha} \otimes \ket{+}\bra{+}_{A_2} + |\beta_1|^2 \ket{1}\bra{1}_{A_\alpha} \otimes \frac{I_{A_2}}{2}   \\
    S_A = S(\text{Tr}_{\bar A}( V_{(e)}\rho V_{(e)}^\dagger  )) &= \left[|\beta_0|^2\log(|\beta_0)|^{-2}) + |\beta_1|^2\log(|\beta_1)|^{-2})\right]\\
    &+ \left[ |\beta_0|^2 S( \ket{+}\bra{+}  ) + |\beta_1|^2 S( I/2) \right]\\
    &=  \sum_\alpha p_\alpha \log(p_\alpha^{-1}) +  \text{Tr}\left( \rho \begin{bmatrix} 0 & 0 \\ 0 & 1\end{bmatrix} \right).
\end{align}
    So we have explicitly derived an area operator $L = \ket{1}\bra{1}$.  Also worth noting is that equation (\ref{eqn:twosideddecomp_again}) is now almost fully rendered out: while $A_1$ and $\bar A_1$ are missing, we now have:
    \begin{align}
        V \ket{\alpha,i,j} = \ket{\psi_{\alpha,i}}_{A_\alpha} \otimes \ket{\chi_{\alpha}}_{A_2 \bar A_2} \otimes \ket{\bar\psi_{\alpha,j}}_{\bar A_\alpha},
\end{align}
    where $\ket{\psi_{\alpha,i}} = \ket{\bar\psi_{\alpha,j}} = \ket{\alpha}$ and $\ket{\chi_0} = \ket{+}\ket{0}$ and $\ket{\chi_1}$ is a Bell state. We see that $L = \sum_\alpha S(\text{Tr}_{\bar A}(\ket{\chi_\alpha}\bra{\chi_\alpha})) \cdot I_\alpha$ matches what we derived above.
\end{atomicexample}

\begin{atomicexample} \label{ex:ql} Now we consider a code with a quantum term and an area term, but no classical term: $S_A = S_q + \text{Tr}(\rho L)$. This code actually features an area operator proportional to the identity again: the $\alpha$ degree of freedom determines $\ket{\chi_\alpha}$, whose entanglement in turn determines the area. But since there is only one $\alpha$, we have a trivial center and there can be no superposition over areas.
    \begin{align} 
        V_{(f)} :=   \hspace{8mm}\begin{array}{c}\Qcircuit @C=0.3em @R=1.2em {
                \lstick{i}  & \qw & \qw & \rstick{A_1} \\
            \lstick{\ket{+}}  &  \ctrl{2} & \qw & \rstick{A_2} \\
            \lstick{j}  & \qw  & \qw & \rstick{\bar A_1} \\
            \lstick{\ket{0}}  & \targ & \qw & \rstick{\bar A_2}
}\end{array}
    \end{align}
    Similarly to Example~\ref{ex:q}, we have $\H_i = \H_{A_1}$ and $\H_j = \H_{\bar A_1}$, and $\M = \mathcal{L}(\H_i) \otimes I_j$. Since $\M$ is a factor, the only contribution to $S(\M,\rho)$ is the entropy of the reduced state on $\H_i$, that is, $S(\text{Tr}_j{\rho})$. There is only one value of $\alpha$.

    However, the entropy $S_A$ now features two contributions: the entropy of the state on $\H_i$ visible from $\H_{A_1}$, and the entropy of the Bell state across $\H_{A_2}\otimes \H_{\bar A_2}$. We can see this from the form of $V_{(f)}\ket{\alpha,i,j} = \ket{\psi_{\alpha,i}}_{A_1} \otimes \ket{\chi_{\alpha}}_{A_2 \bar A_2} \otimes \ket{\bar\psi_{\alpha,j}}_{\bar A_1}$ where $\ket{\psi_{\alpha,i}} = \ket{i}$, $\ket{\bar\psi_{\alpha,j}} = \ket{j}$ and $\ket{\chi_\alpha}$ is a Bell state. 

    As a result, $S_A - S_q = 1$, so we achieve $S_A = S_q + \text{Tr}(\rho L)$ by setting $L = I$. 
\end{atomicexample}

As we have seen, combining two of the primitive circuits from Examples~\ref{ex:c}~\ref{ex:l}, and~\ref{ex:q} already produces non-trivial results, including states of the form $\ket{\alpha,i,j}$ in Example~\ref{ex:cq} and area operators not proportional to the identity in Example~\ref{ex:cl}. Of particular importance in Example~\ref{ex:cl} was the conditional preparation of a Bell state based on $\alpha$. This caused the different $\ket{\chi_\alpha}$ states to exhibit varying amounts of entanglement, each becoming a different eigenvalue of $L$.

 \subsection{A complete example}

To finish the section, we give a final example that features all three terms of the RT formula, and makes both of the decompositions $\H_L = \H_\alpha \otimes \H_i \otimes \H_j$ and $\H_A = \H_{A_\alpha} \otimes \H_{A_1} \otimes \H_{A_2}$ completely non-trivial.

\begin{atomicexample} \label{ex:cql} This six-qubit code's RT formula has all three terms on the right-hand side: $S_A = S_c + S_q + \text{Tr}(\rho L)$. We consider the subregion $A$ to be the first three qubits, but the same RT formula holds for any choice of $A$ that is three adjacent qubits. Smaller or larger $A$ will exhibit a simpler RT formula, similar to those from the previous examples.
\begin{align}
V_{(g)} :=   \hspace{8mm}\begin{array}{c}\Qcircuit @C=0.3em @R=1.3em {
        \lstick{\alpha}  & \ctrl{3} & \ctrl{2} & \qw & \rstick{A_\alpha}\\
    \lstick{i}  & \qw & \qw & \qw & \rstick{A_1} \\
    \lstick{\ket{+}} & \qw & \ctrl{3} & \qw & \rstick{A_2} \\
    \lstick{\ket{0}} & \targ & \qw & \qw & \rstick{\bar A_\alpha} \\
    \lstick{j} & \qw  & \qw  & \qw & \rstick{\bar A_1} \\
    \lstick{\ket{0}} & \qw &  \targ & \qw & \rstick{\bar A_2}
}\end{array}
\end{align}

    Similarly to Example~\ref{ex:cq}, $\H_A$ has full access to $\H_i$ via $\H_{A_1}$, as well as access to the diagonal operators on $\H_{\alpha}$ via $\H_{A_\alpha}$ from the calculation in Example~\ref{ex:c}, and no access to $\H_j$. Therefore, the basis $\ket{\alpha,i,j}$ is just the computational basis on the three logical qubits with $\ket{\alpha,i,j} = \ket{\alpha}\ket{i}\ket{j}$.

    If we apply the isometry $V_{(g)}$ to such a basis state we get the full equation (\ref{eqn:twosideddecomp_again}):
\begin{align}
    V_{(g)} \ket{\alpha,i,j} &= \ket{\psi_{\alpha,i}}_{A_\alpha A_1} \otimes \ket{\chi_{\alpha}}_{A_2 \bar A_2} \otimes \ket{\bar\psi_{\alpha,j}}_{\bar A_\alpha \bar A_1}, \\[2mm]
    \ket{\psi_{\alpha,i}}_{A_\alpha A_1} &= \ket{\alpha}_{A_\alpha}\ket{i}_{A_1}, \hspace{15mm} \ket{\bar \psi_{\alpha,j}}_{\bar A_\alpha \bar A_1} = \ket{\alpha}_{\bar A_\alpha}\ket{j}_{\bar A_1}, \\ 
    \ket{\chi_0}_{A_2 \bar A_2} &= \ket{+}_{A_2}\ket{0}_{\bar A_2}, \hspace{14mm} \ket{\chi_0}_{\bar A_\alpha \bar A_1} = \frac{\ket{00}_{A_2\bar A_2}+\ket{11}_{A_2\bar A_2} }{\sqrt{2}}.
\end{align}

    As in Example~\ref{ex:cl}, we conditionally prepare a Bell state on $\H_{A_2}\otimes \H_{\bar A_2}$, so following the same calculation we see that the area operator is $L = \ket{1}\bra{1}$. But additionally this example also features the $\H_{A_1}$ and $\H_{\bar A_1}$ spaces corresponding to $\H_i$ an $\H_j$, contributing a quantum term to the RT formula as in Example~\ref{ex:cq}.
\end{atomicexample}

These circuits seem to be the smallest examples of qubit quantum error-correcting codes to exhibit interesting holographic properties. However, there are some ideas that these circuits still oversimplify.

First, the factorizations $\H_L = \H_\alpha \otimes \H_i \otimes \H_j$ and $\H_A = \H_{A_\alpha} \otimes \H_{A_1} \otimes \H_{A_2}$ are a significant simplification of the decompositions $\H_L = \bigoplus_\alpha\left( \H_{L_\alpha} \otimes \H_{\bar L_\alpha} \right)$ and $\H_A = \bigoplus \left( \H_{A^\alpha_1} \otimes \H_{A^\alpha_2} \right) \oplus \H_{A_3}$ respectively. Not only are all the dimensions of $\H_{L_\alpha},\H_{\bar L_\alpha}, \H_{A^\alpha_1}, \H_{A^\alpha_2}$ independent of $\alpha$, but consequently the $\alpha$ degree of freedom neatly factors out onto a separate qubit. This is a highly non-generic feature for von Neumann algebras: while $\alpha$ can be measured via a projective measurement, it usually does not factor to its own degree of freedom like this.

Second, none of these examples exhibit the `radial commutativity' discussed in Subsection \ref{sub:radial}. In holography, operators acting on a single point at the boundary do not have access to any bulk degrees of freedom and must therefore commute with all bulk operators. In a finite-dimensional analogy, \cite{harlow2017ryu} constructed a three-qutrit code where operators acting on any single qutrit must commute with the logical operators of the code. However, the codes presented here do not have this property. In example~\ref{ex:cql}, access to the physical qubit labeled $A_1$ already gives full access to the $\H_i$ factor of the logical Hilbert space. One method for remedying this could be to encode each of the physical qubits of example~\ref{ex:cql} into another quantum error-correcting code that protects against single qubit erasures.

\section{Discussion}
\label{sec:discussion}
Toy models for holographic quantum error correction serve as a microcosm for understanding AdS/CFT. In this work we have reformulated and extended the framework of \cite{harlow2017ryu} with a uniqueness result and several examples. These together serve to make holographic quantum error correction `more concrete' in the sense that they pave the road to more complex examples. In this discussion we briefly summarize the ways in which our construction differs from previous work, and also list some future directions.

The construction of \cite{harlow2017ryu} is, of course, central to our work and discussions of holographic quantum error correction in general. However, we made several changes to the formalism to facilitate our particular viewpoint. Here is a brief summary of these changes:

\begin{description}
    \item[Code subspace vs encoding isometry.] In holography, we can think of the bulk Hilbert space as `emanating from' the boundary space and physically place the bulk into the boundary. In this sense, we could consider the space of allowed bulk states a subspace $\H_\text{code} $ of the physical space $\H$, which could be defined via some set of constraints on the boundary qubits. This perspective might be effective for stabilizer codes. However, the codes we discuss in Section~\ref{sec:examples} are more naturally described via a quantum circuit, which is an active transformation. For this reason we explicitly think of the bulk space as a separate space $\H_L$, which is not emanating from the boundary in the same way, and is then mapped to the boundary space via an encoding isometry $V :\H_L \to \H$. Of course, we could still switch to the old picture by defining $\H_\text{code}$ to be the image of $V$.

    \item[Step by step vs general case.] A key result of \cite{harlow2017ryu} is that many seemingly disparate ideas are actually equivalent: subregion duality, the existence of an RT formula, and entropic properties of the holographic states. This is an illuminating observation about the general properties of holographic quantum error correction codes. However, in our work we are interested in the analysis of particular codes: we want to consider a particular encoding isometry $V$ and obtain its RT formula. To that end, we `unroll' the sequence of equivalences given in Theorem~5.1 of \cite{harlow2017ryu} and focus on the direction that yields a method for computing the area operator. Our derivation of this result in Theorem~\ref{thm:complementaritytoRT} goes into significantly more detail, and the resulting recipe from Section~\ref{sec:recipe} makes the analysis of codes more straightforward.

    \item[Uniqueness of the algebra.] Following \cite{harlow2017ryu}, we still consider holography to be a property that a code, a bipartition, and a von Neumann algebra can have together. But to some extent this is no longer really necessary: we can say that holography is merely a property of a code and a bipartition, because when these are fixed then the von Neumann algebra is unique if it exists. Ideally, we would like to go even further and say that it is a property of a quantum error correction code alone, asserting that every (reasonable) bipartition obeys an RT formula. 
\end{description}

In particular, making `holography' a property of a code alone leaves a couple open questions. Furthermore, there are several directions in which this framework could be expanded.

\begin{description}
    \item[A `one-sided' RT formula without complementary recovery?] In our Theorem~\ref{thm:complementaritytoRT}, we demonstrate that complementary recovery of $(V,A,\M)$ implies a `two-sided' RT formula, an RT formula for both $(V,A,\M)$ and $(V,\bar A,\M')$. Indeed, Theorem~5.1 of \cite{harlow2017ryu} shows that the existence of this `two-sided' is actually equivalent to complementary recovery. So why do we not simply remove $\M$ since it is uniquely determined by complementary recovery? We have not closed the possibility of a `one-sided' RT formula exhibited by just $(V,A,\M)$ but not by $(V,\bar A,\M')$. Is this mathematically possible? Does a code with such an RT formula possess a sensible physical interpretation?
    \item[A non-trivial RT formula for all subregions?] The qubits in Example~\ref{ex:cql} are arranged such that every contiguous subregion $A$ of three qubits has a non-trivial RT formula. But when we consider three qubits that are non-adjacent, then the RT formula becomes trivial. We would hope that larger holographic error correction codes have sensible and interesting area operators even when $A$ is not contiguous. But is it possible for \emph{every} subregion to have a non-trivial RT formula? We attempted to construct such a code without success. It is possible that this difficulty is related to the difficulty of obtaining power-law correlations between generic subregions in holographic tensor networks, observed in e.g.\ \cite{Gesteau:2020hoz,Jahn:2020ukq,Cao:2021wrb}. Since the number of possible subregions grows very quickly, this requirement places many constraints on the code. Thus, we conjecture that this is not possible. Is it possible if we restrict the pieces of the subregions to be at least a certain size?
    \item[A tensor network with superposition of geometries?]
    Seminal work by [Happy] showed that holographic tensor networks can be constructed from a tessellation of hyperbolic space with a fundamental tensor, in their case a perfect tensor. There are very many extensions of this construction, for instance \cite{cao2020approximate} consider replacing the fundamental tensor with skewed Bacon-Shor codes, and \cite{taylor2021holography} consider higher-dimensional tessellations. What happens when we replace the fundamental tensor with one of our circuits? What does operator pushing look like in this scenario? Does the network possess a non-trivial area operator?

    \item[A holographic stabilizer code?] All the atomic examples that have a non-trivial area operator possess a Toffoli gate, so they are not stabilizer codes. Furthermore, the skewed codes considered by \cite{cao2020approximate}, although they are superpositions of stabilizer codes, are themselves also not stabilizer codes. It appears that the stabilizer formalism places strong limitations on the entanglement properties of the resulting codes, making the design of a stabilizer code with a non-trivial area operator challenging. Is it even possible?

    \item[Consequences of the uniqueness of $\M$ in quantum gravity?] It is often natural to consider only a subalgebra of the operators in the entanglement wedge of a particular boundary $A$. For example, we might only be interested in local operators. But an implication of Theorem~\ref{thm:whatalgebra} is that such a von Neumann algebra cannot exhibit complementary recovery. This is clear from the perspective of error correction, but can it be proved from the AdS perspective as well? It is also natural from the AdS perspective to consider sets of operators which are not subalgebras (such as low-point correlators of local bulk operators)---can anything be said about such cases?

    \item[Extensions of holographic quantum error correction?] The toy models considered in this work, just like the constructions of \cite{harlow2017ryu} and \cite{cao2020approximate}, are restricted to a single time slice. Can they be extended to exhibit dynamics (similarly to what has been proposed for tensor networks~\cite{kohler2019toy})? What about dynamics with decoherence and black hole formation/evaporation? Since the purpose of the toy models is to illuminate and provide more mathematically tractable examples of AdS/CFT, extending them towards the full capabilities of AdS/CFT is a very natural direction. For a fixed geometry, one expects bulk time evolution (for example, on a Rindler wedge) to be implemented approximately as a local operator in the code subspace, the (modular) Hamiltonian--but evolving with the full Hamiltonian of the boundary system should give corrections to this picture. See \cite{jahn2021holographic} for a recent review.

\end{description}

\section*{Acknowledgements}
We thank Scott Aaronson, Elena Caceres, Charles Cao, William Kretschmer, Kunal Marwaha, Alex May, Frank Schindler, Haoyu Sun, and Yuxuan Zhang for detailed comments on the manuscript.
We thank Mario Martone for his comments and for his participation in an initial phase of this project. AR and JP are supported by the Simons Foundation through It from Qubit: Simons Collaboration on Quantum Fields, Gravity, and Information. PR is supported by Scott Aaronson's Vannevar Bush Faculty Fellowship.

\appendix
\section{Complementarity of private and correctable algebras}
\label{app:privacy}

Here we give a brief overview of the main result of~\cite{crann2016private} closely following the simpler, finite-dimensional, presentation given in~\cite{kribs2018quantum}. 

A quantum channel $\Phi: \mathcal{L}(\mathcal{H}_A) \rightarrow \mathcal{L}(\mathcal{H}_A)$ is a completely-positive, trace preserving map between two spaces of linear operators. 
The dual map $\Phi^{\dagger}$ of a quantum channel $\Phi$ is defined via the trace inner product $\operatorname{Tr}(\Phi(\rho) X)=\operatorname{Tr}\left(\rho \Phi^{\dagger}(X)\right)$.

Using the Stinespring dilation theorem we can express any quantum channel $\Phi$ in terms of its action on an auxiliary Hilbert space $\mathcal{H}_C$ (with $|\mathcal{H}_C|\leq |\mathcal{H}_C|^2$ ). 
In particular, given a state $\left|\psi_{C}\right\rangle \in \mathcal{H}_{C}$ and a unitary $U$
on $\mathcal{H}_{A} \otimes \mathcal{H}_{C}$ such that for all $\rho \in \mathcal{L}\left(\mathcal{H}_{A}\right)$ we have that,
\begin{equation}
\Phi(\rho)=\operatorname{Tr}_{C} \circ \, \mathcal{U}\left(\rho \otimes\left|\psi_{C}\right\rangle\left\langle\psi_{C}\right|\right)=\operatorname{Tr}_{C} \circ \mathcal{V}(\rho),
\end{equation}
where $\operatorname{Tr}_{C}$ denotes the partial trace map from $\mathcal{L}\left(\mathcal{H}_{A} \otimes \mathcal{H}_{C}\right)$ to $\mathcal{L}\left(\mathcal{H}_{A}\right)$, the map $\mathcal{U}(\cdot)=U(\cdot) U^{*}$, and $\mathcal{V}(\cdot)=V(\cdot) V^{*}$ is the map implemented by the isometry $V: \mathcal{H}_{A} \rightarrow \mathcal{H}_{A} \otimes \mathcal{H}_{C}$ defined by $V|\psi\rangle=U\left(|\psi\rangle \otimes\left|\psi_{C}\right\rangle\right)$.

The Stinespring dilation theorem allows us to define a notion of complementarity for quantum channels.
\begin{definition}[complementary map]
Given a quantum channel $\Phi$ the complementary map from $\mathcal{L}\left(\mathcal{H}_{A}\right)$ to $\mathcal{L}\left(\mathcal{H}_{C}\right)$ is
\begin{equation}
    \Phi^{C}(\rho)=\operatorname{Tr}_{A} \circ \mathcal{V}(\rho).
\end{equation}
\end{definition}

Equipped with these notions we proceed to define correctable and private algebras.
\begin{definition}[correctable algebra, Definition 2.1~\cite{kribs2018quantum}]
Let $\mathcal{H}$ be a finite-dimensional Hilbert space (the physical space). A quantum error-correcting code is defined by a projection $P$ on $\mathcal{H}$ such that $\mathcal{H}_{L} = P\mathcal{H} $. Given an error channel $\mathcal{E}: \mathcal{L}(\mathcal{H}) \rightarrow \mathcal{L}(\mathcal{H})$, we say that a von Neumann algebra $\M \subseteq \mathcal{L}(P \mathcal{H})$ is correctable for $\mathcal{E}$ with respect to $P$ if there exists a channel $\mathcal{R}: \mathcal{L}\left(\mathcal{H}_{C}\right) \rightarrow \mathcal{M}$ such that
\begin{equation}
    \Phi_{P} \circ \mathcal{E}^{\dagger} \circ \mathcal{R}^{\dagger}=i d_{\mathcal{A}},
\end{equation}
where $\Phi_{P}$ is the channel associated with the projection into the code subspace $\Phi_{P}(\cdot)=P(\cdot) P$.
\end{definition}

\begin{definition}[private algebra, Definition 2.2~\cite{kribs2018quantum}]
Let $\mathcal{H}$ be a (finite-dimensional) Hilbert space and let $P$ be a projection on $\mathcal{H}$. Given a channel $\mathcal{E}: \mathcal{L}(\mathcal{H}) \rightarrow \mathcal{L}(\mathcal{H})$, a von Neumann algebra $\mathcal{M} \subseteq \mathcal{L}(P \mathcal{H})$ is private for $\mathcal{E}$ with respect to $P$ if
\begin{equation}
    \quad \Phi_{P} \circ \mathcal{E}^{\dagger}(\mathcal{L}(\mathcal{H})) \subseteq \mathcal{M}^{\prime}=\{X \in \mathcal{L}(P \mathcal{H}) \mid[X, O]=0 \, \forall O \in \mathcal{<}\}.
\end{equation}
\end{definition}

Correctable and private algebras are related by the following theorem
\begin{theorem}[Proposition 2.4~\cite{kribs2018quantum}]
\label{thm:correctable-private}
Let $\mathcal{M}$ be a subalgebra of $\mathcal{L}(P \mathcal{H})$ for some Hilbert space $\mathcal{H}$ and projection $P$. Let $\mathcal{E}$ be a channel on $\mathcal{H}$ with complementary channel $\mathcal{E}^{C}$. Then $\mathcal{M}$ is correctable for $\mathcal{E}$ with respect to $P$ if and only if $\mathcal{M}$ is private for $\mathcal{E}^{C}$ with respect to $P$.
\end{theorem}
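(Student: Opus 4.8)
The plan is to fix a single Stinespring dilation of $\mathcal{E}$ and reduce \emph{both} the correctability of $\mathcal{M}$ for $\mathcal{E}$ and the privacy of $\mathcal{M}$ for $\mathcal{E}^C$ to one and the same algebraic condition on the Kraus operators, namely a generalized Knill--Laflamme condition. Concretely, I would write $\mathcal{E}(\rho)=\operatorname{Tr}_C(W\rho W^\dagger)$ for an isometry $W:\mathcal{H}\to\mathcal{H}\otimes\mathcal{H}_C$, with Kraus operators $\{E_a\}$ determined by $W\ket{\psi}=\sum_a E_a\ket{\psi}\otimes\ket{a}$ for an orthonormal basis $\{\ket{a}\}$ of $\mathcal{H}_C$; the complementary channel $\mathcal{E}^C(\rho)=\operatorname{Tr}_{\mathcal{H}}(W\rho W^\dagger)$ is then read off from the same dilation. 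The elementary identity $W^\dagger(I_{\mathcal{H}}\otimes\ket{a}\bra{b})W=E_a^\dagger E_b$ is the bridge that connects the two sides.

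For the privacy side I would compute $(\mathcal{E}^C)^\dagger$ directly. Since $(\mathcal{E}^C)^\dagger(Y)=W^\dagger(I_{\mathcal{H}}\otimes Y)W$, the bridge identity gives $\Phi_P\circ(\mathcal{E}^C)^\dagger(Y)=\sum_{a,b}\bra{a}Y\ket{b}\,P E_a^\dagger E_b P$. Letting $Y$ range over all of $\mathcal{L}(\mathcal{H}_C)$, the image $\Phi_P\circ(\mathcal{E}^C)^\dagger(\mathcal{L}(\mathcal{H}_C))$ is exactly the linear span of the operators $P E_a^\dagger E_b P$. Hence $\mathcal{M}$ is private for $\mathcal{E}^C$ if and only if $P E_a^\dagger E_b P\in\mathcal{M}'$ for every pair $a,b$. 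This direction is a short, self-contained calculation.

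The substantive step is the correctability side: I would invoke (or reprove) the operator-algebra generalization of the Knill--Laflamme theorem, which states that $\mathcal{M}$ is correctable for $\mathcal{E}$ with respect to $P$ if and only if $P E_a^\dagger E_b P\in\mathcal{M}'$ for all $a,b$. The easy half---that the existence of a recovery channel $\mathcal{R}$ forces this condition---follows by expanding $\mathcal{R}$ in Kraus form and unwinding $\Phi_P\circ\mathcal{E}^\dagger\circ\mathcal{R}^\dagger=\mathrm{id}$. The converse, constructing an explicit recovery from the algebraic condition, is where the real work lies: using the Wedderburn decomposition $\mathcal{M}=\bigoplus_k(\mathcal{L}(\mathcal{H}_{A_k})\otimes I_{B_k})$ from Section~\ref{sec:vonNeumann}, I would write $P E_a^\dagger E_b P=\bigoplus_k(I_{A_k}\otimes g_{ab}^{(k)})$, diagonalize the positive coefficient matrices $(g_{ab}^{(k)})$ to pass to a corrected Kraus set, and build the recovery out of the partial isometries supplied by the polar decompositions of the corrected Kraus operators on each block.

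Once both reductions are in place the theorem is immediate: correctability of $\mathcal{M}$ for $\mathcal{E}$ and privacy of $\mathcal{M}$ for $\mathcal{E}^C$ are literally the same condition $\{P E_a^\dagger E_b P\}\subseteq\mathcal{M}'$, so each implies the other. I expect the main obstacle to be the recovery-construction (sufficiency) half of the generalized Knill--Laflamme characterization---keeping track of the per-block structure of $\mathcal{M}$ while building a single channel into $\mathcal{M}$; everything else is bookkeeping with the fixed Stinespring dilation.
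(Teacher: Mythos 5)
The paper does not actually prove this statement---it is imported as Proposition~2.4 of \cite{kribs2018quantum}, and Appendix~\ref{app:privacy} only records the definitions needed to state it---so there is no in-paper proof to compare against. Your proposal is correct and reconstructs essentially the argument of the cited reference: fix one Stinespring dilation, and observe that correctability of $\mathcal{M}$ for $\mathcal{E}$ and privacy of $\mathcal{M}$ for $\mathcal{E}^{C}$ are both equivalent to the single condition $PE_a^\dagger E_b P\in\mathcal{M}'$ for all $a,b$. Your privacy half is a clean, self-contained computation (the identity $W^\dagger(I\otimes\ket{a}\bra{b})W=E_a^\dagger E_b$ plus the choice $Y=\ket{a}\bra{b}$ shows the image of $\Phi_P\circ(\mathcal{E}^{C})^\dagger$ is exactly the span of the $PE_a^\dagger E_b P$, so containment in $\mathcal{M}'$ is equivalent to the pairwise condition). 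The correctability half is the operator-algebra Knill--Laflamme theorem of \cite{beny2007quantum,beny2007generalization,kribs2006operator}, which this paper already cites as background, and your sketch of its constructive direction (Wedderburn decomposition, diagonalization of the block coefficient operators, recovery built from polar isometries) is the standard route.

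One caution if you intend to \emph{reprove} rather than invoke the generalized Knill--Laflamme theorem: the direction you call easy (existence of a recovery forces $PE_a^\dagger E_b P\in\mathcal{M}'$) is less mechanical than ``unwinding $\Phi_P\circ\mathcal{E}^\dagger\circ\mathcal{R}^\dagger=\mathrm{id}$ in Kraus form'' suggests. In several standard treatments that necessity direction is itself derived from an information--disturbance or complementarity argument, i.e.\ from the very equivalence you are trying to establish, so a careless appeal would be circular. To keep the proof self-contained you should either give a direct algebraic argument (e.g.\ introduce a reference system maximally entangled with each Wedderburn block, in the spirit of Appendix~\ref{app:structure_lemma}, and show that any leakage of $\mathcal{M}$ to the environment is incompatible with perfect recovery) or cite the Knill--Laflamme characterization as a black box with an independent proof. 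With that caveat, the architecture of your proof is sound and the theorem follows as you say.
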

\section{A proof of a special case of the factorization lemma}

\label{app:structure_lemma}

We give a proof of Lemma~\ref{lemma:factorization} for the factor algebra $\mathcal{M}=\mathcal{L}(\mathcal{H}_L)$.
Our proof is similar to the ones given in~\cite[Section 3.2]{almheiri2015bulk} and~\cite[Section 3.1]{harlow2017ryu} and, as the proofs given in these works, is based on a technique developed in~\cite{schumacher1996quantum} to prove that the presence of entanglement in a code is a necessary and sufficient condition for perfect quantum error correction. 
More specifically, \cite{schumacher1996quantum} considers a setting where the system $A$ is entangled to a reference system $R$ and shows that perfect quantum error correction (i.e. the ability to recover the logical information after the erasure of $\bar A$) is possible if and only if
\begin{equation}
\label{eq:entropic_correction_condition}
    I_{R\bar A} = S_R + S_{\bar A} - S_{R\bar A} = 0,
\end{equation}
where $I_{R\bar A}$ is the mutual information of the composite system ${R\bar A}$ and $S$ denotes the von Neumann entropy. 

Let $\ket{\phi}$ be a pure state on the $R A \bar A$ system. Then \eqref{eq:entropic_correction_condition} implies that
\begin{equation}
\label{eq:reduced_factorization}
\rho_{R \bar A}[\phi]=\rho_{R}[\phi] \otimes \rho_{\bar A}[\phi],
\end{equation}
where $\rho_{R \bar A}[\phi] = \operatorname{Tr}_{R \bar A} (\ket{\phi} \bra{\phi})$, $\rho_{R}[\phi]=\operatorname{Tr}_{R } (\ket{\phi} \bra{\phi})$, and $\rho_{ \bar A}[\phi]=\operatorname{Tr}_{ \bar A} (\ket{\phi} \bra{\phi})$ denote the reduced density matrices for the state $\ket{\phi}$.


For the proof of Lemma~\ref{lemma:factorization} when $\mathcal{M}$ is a factor algebra $\mathcal{M} = \mathcal{L}(\mathcal{H}_L)$, consider a reference system $R$ maximally entangled with $A$ (this is the Choi state for $V$)
\begin{equation}
\label{eq:Choi}
   \ket{\phi}=2^{-k / 2} \sum_{i}\ket{i}_{R} (V\ket{i})_{A\bar A},
\end{equation}
where $|R| = |\mathcal{H}_L| = 2^k$. Observe that $\ket{\phi}$ is a purification of $\rho_{R \bar A}[\phi]$ on $A$. Because $\ket{\phi}$ is maximally entangled we have that $\rho_{R}[\phi] = \frac{I}{2^{k / 2}}$ is the maximally mixed state.
Therefore \eqref{eq:reduced_factorization} becomes
\begin{equation}
\label{eq:product_maxmixed}
    \rho_{R \bar A}[\phi]=\frac{I}{2^{k / 2}} \otimes \rho_{\bar A}[\phi].
\end{equation}

Say $k$ is the largest integer such that $|A| = k |R| + r$. Then there exists a factorization $\H_A = (\H_{A_1} \otimes \H_{A_2}) \oplus \H_{A_3}$ such that $|A_1| = |R|$, $|A_2| = k$ and $|A_3| = r$. Now define the following states:
\begin{equation}
    \ket{\Psi}_{R A_1} = \frac{1}{2^{k / 2}} \sum_{i}\ket{i}_{R} \ket{i}_{A_1}, \quad \ket{\chi}_{A_2 \bar A} = \sum \sqrt{p_j} \ket{j}_{A_2} \ket{j}_{\bar A}.
\end{equation}
and observe that the state
\begin{equation}
    \ket{\phi ^\prime} = \ket{\Psi}_{R A_1} \otimes \ket{\chi}_{A_2 \bar A},
\end{equation}
is a purification of $\rho_{R\bar A}[\phi]$
\begin{align}
    \operatorname{Tr}_{A_1 A_2} \left( \ket{\Psi}\bra{\Psi}_{R A_1} \otimes \ket{\chi} \bra{\chi}_{A_2 \bar A} \right) &= \operatorname{Tr}_{A_1} \left( \ket{\Psi} \bra{\Psi}_{R A_1} \right) \operatorname{Tr}_{A_2} \left( \ket{\chi} \bra{\chi}_{A_2 \bar A} \right) \\
    &= \rho_{R}[\phi] \otimes \rho_{\bar A}[\phi].
\end{align}
where $\ket{\Psi}_{R A_1}$  purifies $\rho_R[\phi]$ in $A_1$ and $\ket{\chi}_{A_2 \bar A}$ purifies $\rho_{\bar A} [\phi]$ in $A_2$. Note that such a factorisation exists because the $R$ and $\bar A$ registers are unentangled in \eqref{eq:product_maxmixed}. In a purification the dimension of the purifying system needs to be at least as big as the rank of the state to be purified and therefore we have that $ |A_1| = |R| = 2^{k}$ (because $\rho_R [\phi]$ is maximally mixed) and $\operatorname{rank}(\rho_{\bar A}[\phi]) \leq |A_2|$.

Because all purifications are equivalent up to unitaries performed on the purifying system ($A$, in our case) we know that there exists a unitary $U_A$ acting solely on the subsystem $A$ that maps $\ket{\phi^\prime}$ to $\ket{\phi}$. Therefore we have that
\begin{equation}
    (U_A \otimes I_{\bar A} )V\ket{i}_{A\bar A} = \ket{i}_{A_1} \ket{\chi}_{A_2 \bar A}.
\end{equation}

\section{The 2x2 Bacon-Shor code}
\label{app:2x2bacon-shor}

\cite{cao2020approximate} presents a construction of holographic tensor networks using the 2 x 2 Bacon-Shor code. This four-qubit stabilizer subsystem code can be shown to have simple holographic properties. \cite{cao2020approximate} find that, via a notion of `skewing' which involves taking linear combinations of several encoding isometries, they can construct quantum error-correcting codes with a non-trivial RT formula.

For ease of comparison to their work, we review some of their calculations in our language. We rederive that, while the 2x2 Bacon-Shor code is holographic, its area operator is proportional to the identity and its RT formula is trivial. This demonstrates that some notion of `skewing' is necessary to obtain non-trivial RT formulas from this code. 

Before we talk about the Bacon-Shor code, we derive a result about stabilizer codes in general.
\begin{lemma} \label{lemma:stabm} Say $G$ is an abelian subgroup of the $n$-qubit Pauli group, defining a stabilizer code. Say $V_G : \H_L \to \H$ is an encoding isometry of this code. Say $A$ is any subset of the $n$ qubits, decomposing $\H = \H_A \otimes \H_{\bar A}$. Then $\mathcal{M} := V_G^\dagger (\mathcal{L}(\H_A) \otimes I_{\bar A})V_G $ is a von Neumann algebra, and $(V_G,A,\M)$ satisfy complementary recovery.
\end{lemma}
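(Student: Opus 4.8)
The plan is to reduce the whole statement to a single claim: that $\M := V_G^\dagger(\mathcal{L}(\H_A)\otimes I_{\bar A})V_G$ is closed under multiplication. Once that is established, Theorem~\ref{thm:whatalgebra} immediately delivers the rest, since it asserts that whenever this projected image is a von Neumann algebra it is \emph{the} unique algebra satisfying complementary recovery with $V_G$ and $A$. So complementary recovery is not something I would prove separately; the entire content is showing $\M$ is a von Neumann algebra, and I would attack this using the Pauli structure of the stabilizer code.

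First I would record the standard facts about the code projector $\Pi = V_G V_G^\dagger = \tfrac{1}{|G|}\sum_{g\in G} g$, namely $V_G^\dagger V_G = I_L$, $\Pi V_G = V_G$, $V_G^\dagger \Pi = V_G^\dagger$, and that every $g\in G$ commutes with $\Pi$. Because the Paulis $P_A$ supported on $A$ span $\mathcal{L}(\H_A)$, the algebra $\M$ is the linear span of the operators $\phi(P) := V_G^\dagger (P_A\otimes I_{\bar A})V_G$ as $P := P_A\otimes I_{\bar A}$ ranges over the $A$-supported Paulis. The key dichotomy to establish is that $\phi(P)$ either vanishes or is a logical unitary. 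If $P$ anticommutes with some $g\in G$, then writing $\phi(P) = V_G^\dagger \Pi P \Pi V_G$ and using $\Pi = \Pi g$ gives the familiar manipulation $\Pi P \Pi = \Pi (gP)\Pi = -\Pi P \Pi$, so $\phi(P)=0$. If instead $P$ commutes with all of $G$ (i.e. $P\in N(G)$), then $P$ commutes with $\Pi$, and $\phi(P)^\dagger \phi(P) = V_G^\dagger P^\dagger \Pi P V_G = V_G^\dagger P^\dagger P\, \Pi V_G = I_L$, so $\phi(P)$ is a unitary on $\H_L$.

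Next I would show the surviving generators form a group. The set $S_A := \{P_A\otimes I_{\bar A} : P\in N(G)\}$ of $A$-supported Paulis commuting with $G$ is a subgroup of the Pauli group, and for $P,Q\in S_A$ the fact that both commute with $\Pi$ yields
\[
\phi(P)\phi(Q) = V_G^\dagger P\,\Pi\, Q\, V_G = V_G^\dagger \Pi\, PQ\, V_G = \phi(PQ).
\]
Hence $\phi$ restricts to a group homomorphism from $S_A$ into the unitaries of $\mathcal{L}(\H_L)$, whose image $\phi(S_A)$ is a group of unitaries containing $\phi(I)=I_L$ and closed under adjoints.

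Finally I would conclude that since the non-vanishing generators all lie in $\phi(S_A)$ we have $\M = \operatorname{span}\phi(S_A)$, and the span of a group of unitaries that contains the identity and is closed under adjoints is automatically closed under both multiplication and $\dagger$; in finite dimensions this makes $\M$ a von Neumann algebra, and Theorem~\ref{thm:whatalgebra} then gives complementary recovery. The only place demanding real care is the Pauli bookkeeping: products of Paulis carry phases in $\{\pm 1,\pm i\}$, and one must check that the anticommuting/commuting dichotomy is exhaustive (the centralizer equals the normalizer for stabilizer groups) so that the homomorphism and span arguments close without gaps. I expect this phase- and sector-tracking to be the main obstacle, though it is routine rather than deep.
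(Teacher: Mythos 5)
Your proposal is correct and follows essentially the same route as the paper's proof: both reduce everything to showing that $\M$ is closed under multiplication (so that Theorem~\ref{thm:whatalgebra} delivers complementary recovery), expand $\mathcal{L}(\H_A)\otimes I_{\bar A}$ in the Pauli basis, and close the product using the fact that Paulis commuting with $G$ commute with the code projector $\Pi_G = V_G V_G^\dagger$. Your version is in fact slightly more careful, since you explicitly verify that Paulis anticommuting with some stabilizer element project to zero --- a step the paper's proof leaves implicit when it restricts attention to those $P$ with $PG=GP$.
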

\begin{proof} All we need to show is that $V_G^\dagger (\mathcal{L}(\H_A) \otimes I_{\bar A})V_G $ is closed under multiplication, since the other properties of von Neumann algebras are always guaranteed. Then Theorem~\ref{thm:whatalgebra} implies complementary recovery. To do so, we observe that $\mathcal{L}(\mathbb{C}^2) = \langle X,Z \rangle_\text{vN}$, which lets us write:
    \begin{align}
        \mathcal{L}(\H_A) \otimes I_{\bar A} = \langle  X_i, Z_i \text{ for } i \in A  \rangle_\text{vN}.
    \end{align}
Conjugating by $V_G^\dagger$ projects each Pauli matrix in the above set into the code space, and then decodes it. Note that a Pauli matrix is in the code space if and only if it commutes with the stabilizer $G$. We write:
    \begin{align}
      \M = V_G^\dagger (\mathcal{L}(\H_A) \otimes I_{\bar A})V_G = \left\{ V_G^\dagger P V_G \text{ for } P \in \langle  X_i, Z_i \text{ for } i \in A  \rangle_\text{vN} \text{ if } PG = GP \right\}.
    \end{align}
    Now all that remains to show is that the above set is closed under multiplication. Write the code space projector as $\Pi_G = V_G^\dagger V_G \in \mathcal{L}(\H)$, and consider two elements $V_G^\dagger P V_G$ and $V_G^\dagger Q V_G$ in the above set. Then, since $P$ and $\Pi_G$ commute:
    \begin{align}
    V_G^\dagger P V_G V_G^\dagger Q V_G = V_G^\dagger P \Pi_G Q V_G  = V_G^\dagger\Pi_G P  Q V_G = V_G^\dagger P  Q V_G.
\end{align}
 $PQ$ is in $\langle  X_i, Z_i \text{ for } i \in A  \rangle_\text{vN}$ since it is a von Neumann algebra by definition, and furthermore since $P$ and $Q$ both commute with $G$, so must $PQ$. So $ V_G^\dagger P  Q V_G$ is also in $\M$.
\end{proof}

This establishes the fact that situations like Example~\ref{ex:badcode} cannot happen with stabilizer codes, but also gives  a simple method for computing $\M$.

Now we discuss the 2x2 Bacon-Shor code. Recall from subsystem quantum error correction that a subsystem code is generated by a non-abelian group of Pauli matrices $\mathcal{G}$. In this case:
\begin{align}
    \mathcal{G} := \langle X_1X_2, X_3X_4, Z_1Z_3, Z_2Z_4   \rangle.
\end{align}

Non-abelian Pauli groups do not have a simultaneous $+1$ eigenspace. However, we can construct several abelian Pauli groups from $\mathcal{G}$. One of these is its center:
\begin{align}
    Z_\mathcal{G} = \langle X_1X_2X_3X_4, Z_1Z_2Z_3Z_4   \rangle.
\end{align}

This is by definition abelian, and has an encoding isometry:
\begin{align}
    V_{Z_\mathcal{G}} := \hspace{8mm}   \begin{array}{c}\Qcircuit @C=.8em @R=.8em {
             & \ctrl{2} & \qw & \gate{H} & \ctrl{1} & \qw \\
             & \qw & \ctrl{2} & \qw      & \targ    & \qw \\
\lstick{\ket{0}} & \targ    & \qw & \gate{H} & \ctrl{1} & \qw \\
\lstick{\ket{0}} & \qw & \targ    & \qw      & \targ    & \qw
    }\end{array}
\end{align}

Visibly, $Z_\mathcal{G}$ defines a quantum error-correcting code with two logical qubits. This code has a von Neumann algebra $\mathcal{M}^{Z_\mathcal{G}}$ corresponding to the subregion $A$. We can restrict this code further by selecting a `gauge': an operator $P \in \mathcal{G}$ that is not in the center $P \not\in Z_\mathcal{G}$. Then, the abelian group $\langle P , Z_\mathcal{G} \rangle$ defines a stabilizer code with just one logical qubit. We can construct an encoding isometry for $\langle P , Z_\mathcal{G} \rangle$ by computing the two-qubit Pauli matrix, $P_L := V_\mathcal{G}^\dagger P V_\mathcal{G}$ and then constructing a two-qubit isometry $V_{P_L}$ such that $V_{P_L}^\dagger P_L V_{P_L} = Z_2$. Then the map $V_\mathcal{G}V_{P_L}$ is an encoding isometry for $\langle P , Z_\mathcal{G} \rangle$. This code has a von Neumann algebra $\mathcal{M}^{P}$ corresponding the bipatition $A$.

\begin{example} \textbf{3-1 bipartitions of the 2x2 Bacon Shor Code.} Here we derive that for \emph{any} choice of $P$, if $A$ contains just one qubit then $L = I_L$. We begin with analyzing the code defined by $Z_\mathcal{G}$. Since $Z_\mathcal{G}$ is symmetric to permutations of qubits, we assume without loss of generality that $A = \{1\}$, and use the method in Lemma~\ref{lemma:stabm} to compute $\M^{Z_\mathcal{G}}$:
    \begin{align}
        \mathcal{L}(\H_A) \otimes I_{\bar A} = \langle X_1 , Z_1\rangle_\text{vN} = \{I, X_1,Y_1,Z_1\}.
    \end{align} 
    We find that none of these commute with $Z_\mathcal{G}$, so $\M^{Z_\mathcal{G}} = \langle I \rangle_\text{vN}$. Therefore $S(\M^{Z_\mathcal{G}},\rho) = 0$ for all $\rho$. 

    Now all that remains to be done is to compute $S(\text{Tr}_{\bar A}(V_{Z_\mathcal{G}}^\dagger \rho V_{Z_\mathcal{G}}))$.    We define the following states for $a,b \in \{0,1\}$:
    \begin{align} 
        \ket{X^a Z^b} := \left( Z^b \otimes X^a  \right)  \frac{\ket{00} + \ket{11}}{\sqrt{2}}.
    \end{align}
Now we can inspect the action of $V_{Z_\mathcal{G}}$ on the computational basis for $\H_L$:
    \begin{align} 
        V_{Z_\mathcal{G}} \ket{ab}_L =  \ket{X^a Z^b}_{1,2} \ket{X^a Z^b}_{3,4}.
    \end{align}
Tracing out qubits $3,4$ measures $a,b$ so now qubits $1,2$ are in some probabilistic mixture of the $\ket{X^a Z^b}$. But these states are all maximally entangled, so the reduced state on qubit $1$ is $I/2$. So  $S(\text{Tr}_{\bar A}(V_{Z_\mathcal{G}}^\dagger \rho V_{Z_\mathcal{G}})) = 1$. We find that:
    \begin{align}
        \text{Tr}(\rho L) = S(\text{Tr}_{\bar A}(V_{Z_\mathcal{G}}^\dagger \rho V_{Z_\mathcal{G}})) - S(\M^{Z_\mathcal{G}},\rho) = 1,
    \end{align}
    which is achieved by $L = I_L$.

    Now we consider any gauge $P$, defining an isometry $V_{P_L}$. Observe that $\M^P = V_{P_L}^\dagger  \langle I \rangle_\text{vN}V_{P_L} =   \langle I \rangle_\text{vN}$ remains unchanged. Furthermore, we say that the entanglement entropy on qubit $1$ is $I/2$. So  $S(\text{Tr}_{\bar A}(V_{Z_\mathcal{G}}^\dagger \rho V_{Z_\mathcal{G}})) = 1$ is independent of $a,b$ so it also remains unchanged if we select a subspace of the code space. Thus we also have $L = I_L$ in this situation.
\end{example}

We saw that we could perform an analysis of all gauges $P$ in a unified manner by instead analyzing the code stabilized by $Z_\mathcal{G}$. While for 3-1 bipartitions the RT formula was independent of $P$, it is actually dependent on $P$ for 2-2 bipartitions. Nonetheless it is helpful to consider the code stabilized by $Z_\mathcal{G}$. For the discussion below, we label the von Neumann algebras with their corresponding subregions in the subscript: for example, $\mathcal{M}^{Z_\mathcal{G}}_{1,2}$ is the algebra defined by $V_{Z_\mathcal{G}}$ and $A = \{1,2\}$.

\begin{example} \textbf{2-2 bipartitions of the 2x2 Bacon Shor Code with no gauge.} We begin with $A = \{1,2\}$: Write:
    \begin{align}
        \mathcal{L}(\H_{1,2}) \otimes I_{3,4} = \langle X_1 , Z_1, X_2,Z_2\rangle_\text{vN}.
    \end{align} 
    Of these, only the subalgebra $\langle X_1X_2, Z_1Z_2 \rangle_\text{vN}$ commutes with $Z_G$, so:
    \begin{align}
        \M_{1,2}^{Z_\mathcal{G}} = \langle  V_{Z_\mathcal{G}}^\dagger X_1X_2 V_{Z_\mathcal{G}}, V_{Z_\mathcal{G}}^\dagger Z_1Z_2  V_{Z_\mathcal{G}} \rangle_\text{vN} = \langle Z_1 , Z_2  \rangle_\text{vN},
    \end{align}
    which is just the set of diagonal operators in the computational basis of $\H_L$. Now, while the code $Z_\mathcal{G}$ is symmetric with respect to qubit permutations, the encoding isometry $V_{Z_\mathcal{G}}$ is not.

    Since $\M^{Z_\mathcal{G}}_{1,2}$ is just the set of diagonal operators in the computational basis, $S(\M^{Z_\mathcal{G}}_{1,2},\rho)$ is the classical entropy after measuring in the computational basis.  Recall the relation $ V_{Z_\mathcal{G}} \ket{ab}_L =  \ket{X^a Z^b} \ket{X^a Z^b}$ from above. We see that discarding qubits $3,4$ essentially measures the first two qubits in the $\ket{X^a Z^b}$ basis, so $ S(\text{Tr}_{\bar A}(V_{Z_\mathcal{G}}^\dagger \rho V_{Z_\mathcal{G}}))$ is actually the same as $S(\M^{Z_\mathcal{G}}_{1,2},\rho)$. So their difference vanishes, and $L = 0$.

    Now we consider $A = \{1,3\}$, which we can actually obtain from the above analysis by inserting the gate $S_{2,3}$ that swaps qubits 2 and 3. That is, $V_{Z_\mathcal{G}}$ under $A=\{2,3\}$ should have the von Neumann algebra as $V_{Z_\mathcal{G}}S_{2,3}$ under $A=\{1,2\}$. Observe that $S_{2,3}$ commutes with $Z_G$, so it must implement a logical operator. With some calculation we see that $ V_{Z_\mathcal{G}}^\dagger S_{2,3} V_{Z_\mathcal{G}} = H^{\otimes 2} S_L $ where $S_L$ swaps the two logical qubits (this is done most easily by propagating $X_1,Z_1,X_2,Z_2$ through the Clifford circuit $V_{Z_\mathcal{G}}^\dagger S_{2,3} V_{Z_\mathcal{G}}$, and observing that it implements the same transformation as $H^{\otimes 2}  S_L$).

    As a result, we see that $\M^{Z_\mathcal{G}}_{1,3} = H^{\otimes 2} S_L \M^{Z_\mathcal{G}}_{1,2} S_L H^{\otimes 2}  = \langle X_1, X_2 \rangle_\text{vN} $, which is the set of diagonal operators in the $H^{\otimes 2}\ket{ab}_L$ basis. We also see that:
    \begin{align}
        V_{Z_\mathcal{G}} H^{\otimes 2}\ket{ab}_L = S_{2,3} V_{Z_\mathcal{G}}\ket{ba}_L  = \ket{X^a Z^b}_{1,3} \ket{X^a Z^b}_{2,4}.
    \end{align}
    Tracing out qubits 2 and 4, just like before, measures the qubits in 1 and 3 in the $\ket{X^a Z^b}$ basis, and the resulting entropy is the same as $S(\M^{Z_\mathcal{G}}_{1,3},\rho)$, implying $L = 0$.

\end{example}

The code $Z_\mathcal{G}$ is symmetrical under $S_{2,3}$ so we expect the code to have the same entanglement properties for both $A = \{1,2\}$ and $A = \{2,3\}$. However, gauges will break this symmetry. To illustrate this, we analyze the gauges considered by \cite{cao2020approximate}.

\begin{example} \textbf{2-2 bipartitions of the 2x2 Bacon Shor Code with fixed gauges}. First we consider $P = Z_1Z_2$ with $A = \{1,2\}$. We calculate $P_L = V_{Z_\mathcal{G}}^\dagger PV_{Z_\mathcal{G}} = Z_2 $. This forces the second qubit in $V_{Z_\mathcal{G}}$ to be $\ket{0}$: we could write $V_P \ket{\psi} = \ket{\psi}\ket{0}$. The corresponding von Neumann algebra is $\mathcal{M}^{Z_1Z_2}_{1,2} = V_P^\dagger \M^{Z_\mathcal{G}}_{1,2}  V_P =  \langle Z_1 \rangle_\text{vN}$, which is again just the set of diagonal operators in the computational basis on the first qubit. 

    The corresponding encoded states are $V_{Z_\mathcal{G}}\ket{a0}_L = \ket{X^a Z^0}_{1,2} \ket{X^a Z^0}_{3,4}$. We see that discarding qubits 3 and 4 measures qubits 1 and 2, so for the exact same reasoning as above, we have $ S(\text{Tr}_{\bar A}(V^\dagger_P V_{Z_\mathcal{G}}^\dagger \rho V_{Z_\mathcal{G}} V_P )) = S( \mathcal{M}^{Z_1Z_2}, \rho)$ so $L = 0$.

    However, $A = \{1,3\}$ with $P = Z_1Z_2$ yields a different result. We find that $\mathcal{M}^{Z_1Z_2}_{1,3} = V_P^\dagger \M^{Z_\mathcal{G}}_{1,3} V_P =  \langle X_1 \rangle_\text{vN} $, which is diagonal in the $H_1 \ket{a0}_L$ basis. $S(\mathcal{M}^{Z_1Z_2}_{1,3},\rho)$ corresponds to the entropy of the $a$ degree of freedom as measured in the computational basis. We find that:
    \begin{align}
        V_{Z_\mathcal{G}} H_1 \ket{a0}_L &= S_{2,3} V_{Z_\mathcal{G}}  S_L H^{\otimes 2}  H_1 \ket{a0}_L  =  S_{2,3} V_{Z_\mathcal{G}}  H_1  S_L \ket{a0}_L =    S_{2,3} V_{Z_\mathcal{G}} \ket{+a}_L\\
        &= \frac{\ket{X^0 Z^a}_{1,3} \ket{X^0 Z^a}_{2,4} + \ket{X^1 Z^a}_{1,3} \ket{X^1 Z^a}_{2,4} }{\sqrt{2}}.
    \end{align}
    
    Now we see that tracing out qubits 2 and 4 yields two sources of entropy for the remaining qubits on $A$: one bit of entropy from the $X$ degree of freedom, and the other stemming from the measurement of $a$ in the computational basis. Thus, $S(\text{Tr}_{\bar A}(V_{Z_\mathcal{G}}^\dagger \rho V_{Z_\mathcal{G}})) - S(\mathcal{M}^{Z_1Z_2}_{1,3},\rho) = 1$. So $L = I_L$.

    We saw that for $P = Z_1 Z_2$, $A = \{1,2\}$ featured $S = 0$ and $S = \{1,3\}$ featured $S = I_L$. Now we consider $P = X_1 X_3$: we will find that the opposite is the case by just swapping $V_{Z_\mathcal{G}}$ with $S_{2,3}V_{Z_\mathcal{G}}$. We compute: 
    \begin{align}
        V_{Z_\mathcal{G}}^\dagger S_{2,3} P S_{2,3} V_{Z_\mathcal{G}} = S_L H^{\otimes 2} V_{Z_\mathcal{G}}^\dagger P  V_{Z_\mathcal{G}} H^{\otimes 2} S_L = S_L H^{\otimes 2}  X_1  H^{\otimes 2} S_L  = Z_2.
    \end{align}

    So we see that $P = X_1X_3$ behaves just like $Z_1 Z_2$ when qubits 2 and 3 are swapped. Thus, $A = \{1,2\}$ features $S = I_L$ and $S = \{1,3\}$ featured $S = 0$.
\end{example}

\clearpage


\printbibliography[heading=bibintoc] 

\end{document}